\newtheorem{theorem}{Theorem}
\newtheorem{definition}{Definition}
\newtheorem{lemma}{Lemma}
\newtheorem{remark}{Remark}
\newtheorem{prop}{Proposition}
\def\P{{\mathbb P}}     
\def\E{{\mathbb E}}     
\def\RR{\mathbb{R}} 
\def\ZZ{\mathbb{Z}}
\definecolor{Red}{rgb}{1,0,0}
\definecolor{Blue}{rgb}{0,0,1}
\title{Joint identifiability of ancestral sequence, phylogeny and mutation rates under the TKF91 model}
\date{\today}
\author[1]{Alex Xue}
\author[2,*]{Brandon Legried}
\author[3,4,*]{Wai-Tong Louis Fan}
\affil[1]{\small Department of Mathematics, University of California, Los Angeles, CA, USA.}
\affil[2]{School of Mathematics, Georgia Institute of Technology, Atlanta, GA, USA.}
\affil[3]{Department of Mathematics, Indiana University, Bloomington, IN, USA.}
\affil[4]{Department of Organismic and Evolutionary Biology, Harvard University, Cambridge, MA, USA.}
\affil[*]{Corresponding authors: L. Fan (wlfan@fas.harvard.edu), B. Legried (blegried3@gatech.edu)}
\begin{document}

\maketitle

\begin{abstract}
We consider the problem of identifying jointly the ancestral sequence, the phylogeny and the parameters in models of DNA sequence evolution with insertion and deletion (indel). 
Under the classical TKF91 model of sequence evolution, we obtained explicit formulas for the root sequence, the pairwise distances of leaf sequences, as well as the scaled rates of indel and substitution in terms of the distribution of the leaf sequences of an arbitrary  phylogeny. These explicit formulas not only strengthen existing invertibility results and work for phylogeny that are not necessarily ultrametric, but also
lead to new estimators 
that require less assumptions. Our simulation study demonstrates that these estimators are statistically consistent as the number of independent samples increases. 
\end{abstract}

\section{Introduction}

In molecular phylogenetics, the two main problem classes are ancestral state reconstruction and phylogenetic tree reconstruction. In these problems, a rooted tree is assumed to describe the dependency structure of species evolution. The root represents an initial ancestral sequence (usually DNA, RNA or amino acid sequences) that evolves according to a Markov process, with the tree serving as a generalized timeline. Ancestral state reconstruction involves inferring the parameters of the sequence model based on the distribution of sequences observed at the tree's leaves, or perhaps a marginalization of this distribution, such as sequence lengths or $k$-mer counts. These parameters typically include mutation rates—such as substitution, insertion, and deletion rates—as well as the root sequence, if fixed. In contrast, tree reconstruction focuses on inferring the tree's branching pattern (topology) and the lengths (metric) of the edges, usually under the assumption that mutation rates are known up to time-scaling.

The most widely studied sequence evolution models, such as the Cavender-Farris-Neyman, Jukes-Cantor, and General Time-Reversible models, primarily focus on substitution mutations. They are typically time-reversible and often assume that the root sequence follows the stationary distribution. In substitution-only models, sites are assumed to evolve independently and identically distributed (i.i.d.). When sequence lengths are fixed across species, much is already known about the fundamental limits of inference under this i.i.d. assumption. With perfectly aligned columns, a simple continuous-time Markov chain is typically assigned to each site, and the evolution of the sites is treated as independent. 

As expected, the problem becomes much more challenging when insertions and deletions (indels) are introduced. Sites can no longer be treated as independent, as their placement in the sequence provides information about the positioning of other sites. The classical Thorne-Kishino-Felsenstein (TKF91) model of sequence evolution \cite{Thorne1991} extends substitution models while preserving time-reversibility. Given the growing importance of modeling indels in both biology and bioinformatics \cite{redelings2007}, \cite{warnow2012}, \cite{Warnow:17}, numerous generalizations and simplifications of the TKF91 model have been motivated and introduced (see, for example, \cite{thorne1992inching}, \cite{miklos2004long}, \cite{mitrophanov2007convergence}, \cite{bouchard-cote2013PIP}). A key mathematical difficulty in establishing identifiability arises from the need to estimate multiple unknown rate parameters jointly, particularly when both insertion and deletion processes are present. In the context of estimating mutation rates from arbitrary birth and death histories, there have been both positive and negative results concerning identifiability, even with ideal sequence data \cite{louca2020}, \cite{legried2022rateIdent}, \cite{legried2023IdentInfer}.

Another important question, besides rate estimation and ancestral state estimation, is the estimation of an unknown phylogeny, or phylogenetic tree. For substitution-only models, as the number of sequences $N$ goes to infinity, it was shown in 1999 that only $\ln N$
sites are required to reconstruct the phylogenetic tree with high probability (see \cite{erdos1999LogsI}, \cite{erdos1999LogsII}). In 2006, \cite{THATTE200658} demonstrated that for the TKF91 sequence model for indels, the phylogenetic tree with edge lengths is identifiable, given the mutation parameters and only the ability to observe sequence lengths at the leaves of the tree. The existence of a consistent estimator implies identifiability of parameters, but the reverse is not necessarily true. Identifiability only means that the parameter settings map injectively into their likelihood formulas. Additionally, the existence of a consistent estimator does not guarantee that the estimator can reconstruct a parameter with a practical amount of data. Indeed, obtaining consistent estimators from sequence lengths or $k$-mers
 remains challenging even for the simple cases of the TKF91 model and the Cavender-Farris-Neyman model (see \cite{fan2018necessary}, \cite{fanlegriedroch2020}, \cite{fan2022impossibilityKmers}).

There has been better success in tree and ancestral reconstruction when it is possible to characterize how indels affect the multiple sequence alignment \cite{DaskalakisRoch:13}, \cite{legried2023pairwise}. By modeling displacement, indel models like TKF91 provide a principled approach to jointly modeling the phylogenetic tree and a tertiary object known as the multiple sequence alignment. The stochastic process explains how "gaps" in the alignment arise, and understanding these gaps is crucial for determining which columns are most informative for phylogenetic reconstruction. However, constructing a multiple sequence alignment is difficult, prompting some to explore alignment-free estimation methods. In \cite{daskalakis2010alignment} and \cite{DaskalakisRoch:13}, it was observed that some correlative information about sites could be used for tree estimation. Another approach involves using  $k$-mer vectors \cite{allman2017statistically}, although \cite{fan2022impossibilityKmers} showed that statistical consistency fails for the CFN model. It is conjectured that this impossibility result for  $k$-mers in the fixed-length CFN model could extend to  $k$-mers observed under the TKF91 model.

In biological applications, joint identifiability and joint estimation are important because they offer a principled way to identify or estimate some parameters without assuming others are known. Estimating substitution probabilities along phylogenetic trees from sequence data can be challenging, potentially impacting reconstruction accuracy. From present-day sequence data, ancestral sequence reconstruction is typically required, where the past states of the model are inferred, followed by determining the rate at which sequences evolve. Computational challenges persist when indels are considered, as constructing a multiple sequence alignment becomes necessary \cite{metzler2003}, \cite{loytynoja2008}. Some pipelines exist in the presence of indels. For example, in \cite{ratan2015identification}, the indelMINER algorithm set was developed to detect the presence and absence of indels from whole-genome resequencing datasets using paired-end reads. However, this paper does not address long indel models such as \cite{miklos2004long}, which remains a future direction for this work.

In this paper, we investigate the identifiability of ancestral sequences, mutation rates, and the underlying phylogenetic tree, with a focus on the parameters governing insertion and deletion. We tackle two significant challenges—models with indels and joint identifiability—simultaneously using the TKF91 model. We focus on the TKF91 model because, despite its relative simplicity among indel models, it remains poorly understood, and we hope that our analysis will provide insights into more realistic indel models. To the best of our knowledge, there are no existing results that establish joint identifiability of rate parameters in indel models, nor tree identifiability with an arbitrary initial sequence. This paper not only provides affirmative answers to these questions of joint identifiability but also introduces explicit estimators that appear to be consistent, as demonstrated by our simulation studies. 

Our explicit formulas strengthen existing invertibility results from \cite{THATTE200658} and \cite{fan2020statistically}, applicable to arbitrary phylogenies (not necessarily ultrametric), and lead to new estimators with fewer assumptions. Our proof proceeds by (i) identifying suitable injective maps from large parameter spaces, such as those of potential ancestral sequences and model parameters, to the space of probability distributions for the leaf sequences, and (ii) deriving explicit inverses for these maps.
A key component of the proofs is the analysis of the 1-mer process. Our likelihood-based approach does not required stationary assumption as was assumed in existing works such as \cite{felsenstein1981,hasegawa1985, tavare1986,THATTE200658}.

\bigskip

{\bf Organization. } The rest of the paper is organized as follows. The indel process is defined in Section~\ref{sec:defs}.  Our main result, together with a proof sketch, is stated in Section~\ref{sec:results}. Details of the proof are provided in Section~\ref{Proofs}.
In Section~\ref{S:Simulation} we provide simulations that further confirm our results.

\section{Basic definitions}\label{sec:defs}

In this section, we describe the model of sequence evolution with insertion and deletion considered throughout this paper. 

A common biological assumption is that genetic material  of each species  $u$ can be represented by a sequence $\vec{x}^u=(x^u_1,\cdots, x^u_{M^u})$ of length $M^u$ over a finite alphabet. We work here with the binary alphabet $\{0,1\}$ for simplicity, following \cite{daskalakis2010alignment,DaskalakisRoch:13}. We refer to the positions of a digit in a sequence as a  site.

\begin{definition}[Binary indel process]\label{Def_BiINDEL}
	The  {\bf binary indel process} is a  continuous-time Markov process $\mathcal{I}=(\mathcal{I}_t)_{t\geq 0}$ on the space  $\bigcup_{M\geq 1} \{0,1\}^M$
of binary digit sequences together with the empty set $\emptyset$. That is, the state space is
	\begin{equation}\label{S}
		\mathcal{S} :=  \emptyset \cup \bigcup_{M\geq 1} \{0,1\}^M,
	\end{equation}
	The parameters of this model are $(\lambda,\mu,\nu) \in (0,\infty)^3$ with $\lambda\neq \mu$, and $(\pi_0,\,\pi_1)\in [0,1]^2$ with $\pi_0 + \pi_1 = 1$. The  Markovian dynamic is described as follows: if the current state is the sequence $\vec{x}$, then the following events occur independently:
	\begin{itemize}
		\item (Substitution)$\;$ Each digit  is substituted independently at rate $\nu>0$. When a substitution occurs, the corresponding digit is replaced by 0 and 1 with probabilities $\pi_0$ and $\pi_1$ respectively. 

		\item (Deletion)$\;$ Each digit  is removed independently at rate $\mu>0$.
		
		\item (Insertion) $\;$ Each digit gives birth to a new digit independently at rate $\lambda>0$. When a birth occurs, a digit is  added immediately to the right of its parent site. The newborn site has digit 0 and 1 with probabilities $\pi_0$ and $\pi_1$ respectively. 
	\end{itemize}
\end{definition}

\begin{remark}\rm
This model is a  simplified version of the classical TKF91 model~\cite{Thorne1991} of DNA sequence evolution, where we consider binary alphabet $\{0,1\}$ and we omit the immortal link. Such simplifications were first adopted 
\cite{daskalakis2010alignment,DaskalakisRoch:13}. 
We believe our method and results can be extended to richer alphabets, such as the four nucleotides in DNA/RNA sequences.

In Definition \ref{Def_BiINDEL} and throughout this paper, we assume $\lambda\neq\mu$ to avoid the special and biologically unrealistic case where 
$\lambda=\mu$. For instance, in RNA virus evolution, the insertion rate is more than 100 times smaller than the deletion rate \cite{aguilar2023high}. Similarly, in the early evolution of vertebrates, the duplication and loss of genomes showed a death rate approximately five times higher than the birth rate, as noted in \cite{cotton2005}. We believe Theorem \ref{T:main} still holds for this special case; see the Appendix. However, we leave the details of this special case to the interested reader.
\end{remark}

\medskip

Our analysis of the indel process makes heavy use of the underlying sequence-length  and the 1-mer count.

\begin{definition}
The \textbf{length} of a sequence $\vec{x}\in \mathcal{S}$ is defined as the number of digits in $\vec{x}$
and is denoted by $|\vec{x}|$. Therefore, if $\vec{x} = (x_1,...,x_M)$, then $|\vec{x}|=M$. The \textbf{1-mer count}  of a sequence $\vec{x}\in \mathcal{S}$ is defined as
the vector in $\mathbb{Z}_+^2$ whose first and second entries are respectively the numbers of 1's and 0's in the sequence. 
\end{definition}

\medskip

Let $T$ be a  binary metric tree with root  $\rho$ and
leaves set $\partial T$. Every edge may be viewed as a generalized timeline.   We are interested in the binary indel process starting at the root $\rho$ and running down the tree $T$,  described as a process $\{\mathcal{I}_u\}_{u\in T}$ in the next paragraph.   

The root vertex $\rho$ is assigned a state $\mathcal{I}_{\rho} \in \mathcal{S}$.  This state is then evolved down the tree according to the following recursive process.  Moving away from the root, along each edge $e = (u,v) \in E$, conditionally on the state $\mathcal{I}_{u}$, we run the indel process for a time $t_{u,v}$ which is the distance between $u$ and $v$ on the metric tree.  The process terminates upon reaching the leaf set $\partial T$.  Denote by $\mathcal{I}_{t}$ the resulting state at $t \in e$.  
In particular, the set of leaf states is $\mathcal{I}_{\partial T} = \{\mathcal{I}_{u}\}_{u \in \partial T}$.

Observe that the sequence length and the 1-mer counts of the binary indel process (in Definition \ref{Def_BiINDEL}) on a tree $T$  are themselves Markov processes on $T$. This is not true for the process of $k$-mer count for $k\geq 2$ 
\cite[Definition 1]{fan2022impossibilityKmers}. For example, the process of $2$-mer count is non-Markovian.

\bigskip

\noindent
{\bf Notation. }
Throughout the paper, we let $\mathbb{Z}_{+}=\{0,1,2,3,...\}$. 
We let $\mathbb{P}_{\vec{x}}$ be the probability measure when the root state is $\vec{x}$.  If the root state is chosen according to a distribution $\nu$, then we denote the probability measure by $\mathbb{P}_{\nu}$. We also denote  by $\mathbb{P}_{M}$ the conditional probability measure given the event that the root state has length $M$.
Under $\pi_{\rm sim}^{\otimes M}$, the sequence length is $M$ and all digits are independent with distribution $\pi_{sim}:=(\pi_0,\pi_1)$. Note that $\P_{\pi_{sim}^{\otimes M}} \neq  \P_{M}$ since $\P_M$ does not assume the digits of the initial sequence are i.i.d. For any random variable $R$, we let $$\mathcal{L}_{\vec{x}}(R):= \P_{\vec{x}}(R\in \cdot)$$
be the distribution of $R$ under $\mathbb{P}_{\vec{x}}$. 

Note that the \textbf{length process} $(|\mathcal{I}_{t}|)_{t\geq 0}$ is itself a linear birth-death process, with birth-rate $\lambda$ and death rate $\mu$. Therefore, $\left\{(|\mathcal{I}_{t}|)_{t\geq 0} ,\,\{\P_M\}_{M\in\ZZ_+}\right\}$ is  a Markov family. 
Clearly, the expected length at time $t$ is
\begin{equation}\label{E:expectedLength}
    \E_{M}[|\mathcal{I}_t|] = M e^{-(\mu-\lambda)t}
\end{equation}
for $t\in\RR_+,\,M\in\ZZ_+$ and $(\mu,\lambda)\in(0,\infty)^2$.

\section{Main result}
\label{sec:results}

Our main result is the following joint identifiability of the ancestral state $\vec{x}$, the phylogeny $T$ and the parameters of the indel process in Definition \ref{Def_BiINDEL}.  

\begin{theorem}\label{T:main}
The topology and all the edge lengths  (up to the multiplicative factor $\mu$)  of the phylogeny $T$ can be identified using only the laws  $\left\{\mathcal{L}_{\vec{x}}\left(|\mathcal{I}_{u}|,\,|\mathcal{I}_{v}| \right)\right\}_{(u,v)\in \partial T\times \partial T}$ of pairwise sequence lengths. For each leaf $u\in\partial T$,
the parameter $(\mu t_u,\lambda t_u)$ is identifiable using only the law  $\mathcal{L}_{\vec{x}}\left(\mathcal{I}_{u} \right)$ of the sequence at $u$. Suppose, furthermore, $\pi_{0}$ is known. Then the initial sequence $\vec{x}\in \bigcup_{M\geq 1} \{0,1\}^M$ and the parameter $\nu t_u$ 
are identifiable using the law  $\mathcal{L}_{\vec{x}}\left(\mathcal{I}_{u} \right)$ for any $u\in\partial T$. 
\end{theorem}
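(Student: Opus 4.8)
The plan is to reduce every assertion to a careful analysis of two Markov functionals of the process running down the tree: the length process $(|\mathcal I_t|)_{t\ge 0}$, which by the discussion preceding \eqref{E:expectedLength} is a linear birth--death process with birth rate $\lambda$ and death rate $\mu$, and the $1$-mer count process. Throughout I write $r:=\lambda-\mu\neq 0$ and, for a leaf $u$, abbreviate its distance from the root by $t_u$. The single most useful object is the probability generating function of a birth--death population: starting from one digit, $\E_1[s^{|\mathcal I_t|}]=\phi_t(s):=\alpha_t+\frac{(1-\alpha_t)(1-\beta_t)s}{1-\beta_t s}$ for explicit $\alpha_t,\beta_t\in(0,1)$ satisfying $\alpha_t/\beta_t=\mu/\lambda$, and since the $M:=|\vec x|$ founding digits evolve independently, $\E_{\vec x}[s^{|\mathcal I_u|}]=\phi_{t_u}(s)^{M}$.

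For the leaf-parameter statement (the second assertion) I would read off everything from this last generating function, which is determined by $\mathcal L_{\vec x}(\mathcal I_u)$. It is meromorphic in $s$ with a single singularity, a pole of order $M$ at $s=1/\beta_{t_u}>1$; equivalently $\P_{\vec x}(|\mathcal I_u|=n)\sim c\,n^{M-1}\beta_{t_u}^{\,n}$. Hence the pole location recovers $\beta_{t_u}$, the order of the pole recovers $M$, and $\P_{\vec x}(|\mathcal I_u|=0)=\alpha_{t_u}^{M}$ then recovers $\alpha_{t_u}$. From the pair $(\alpha_{t_u},\beta_{t_u})$ one inverts the birth--death formulas explicitly: $\mu/\lambda=\alpha_{t_u}/\beta_{t_u}$ and $e^{r t_u}=(1-\beta_{t_u}\mu/\lambda)/(1-\beta_{t_u})$ together give $(\mu t_u,\lambda t_u)$. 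This is essentially a routine computation; the only care needed is that $\lambda\ne\mu$ keeps $\beta_{t_u}$ nondegenerate and keeps the pole simple to locate.

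For tree reconstruction (the first assertion) I would exploit the branching of the length process at the most recent common ancestor $w=w(u,v)$ of two leaves. Conditioning on $|\mathcal I_w|$ the lengths $|\mathcal I_u|$ and $|\mathcal I_v|$ are independent, and using \eqref{E:expectedLength} together with the birth--death variance $\mathrm{Var}_M(|\mathcal I_t|)=M\tfrac{\lambda+\mu}{\lambda-\mu}e^{rt}(e^{rt}-1)$ one gets the closed form $\E_{\vec x}[\,|\mathcal I_u|\,|\mathcal I_v|\,]=e^{r(t_u+t_v)}\big(M^2+M\tfrac{\lambda+\mu}{\lambda-\mu}(1-e^{-r t_w})\big)$, where $t_w$ is the distance from the root to $w$. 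The quantities $M$, $\lambda/\mu$, $r t_u$ and $r t_v$ are already known from the single-leaf analysis, and the left side is a functional of $\mathcal L_{\vec x}(|\mathcal I_u|,|\mathcal I_v|)$; since the right side is strictly monotone in $t_w$, this equation is uniquely solvable for the scaled depth $r\,t_w$. The map $(u,v)\mapsto r\,t_w(u,v)$ together with the $r\,t_u$ then yields the additive leaf metric $\tilde d(u,v):=r\,t_u+r\,t_v-2r\,t_w(u,v)$, equal to $r$ times the path length from $u$ to $v$, from which the rooted topology and all edge lengths follow by the standard additive-metric (four-point) reconstruction. Because $\lambda/\mu$ is known, the $r$-scaling converts to the stated $\mu$-scaling.

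The delicate part is the third assertion, recovering the ordered ancestral sequence $\vec x$ and then $\nu t_u$. The entry point is the $1$-mer process: tracking how the founding digit of a single block is remembered, a direct linearity computation shows that a block founded by a digit of value $a$ has expected number of ones $\pi_1 e^{rt}+e^{-(\mu+\nu)t}(\mathbf 1\{a=1\}-\pi_1)$, so that (with $\pi_1=1-\pi_0$ known) the observable mean one-count equals $\pi_1 M e^{r t_u}+e^{-(\mu+\nu)t_u}(k-\pi_1 M)$, where $k$ is the number of ones in $\vec x$. This single equation couples $k$ with the unknown $\nu t_u$ and cannot by itself separate them or reveal the \emph{order} of the digits. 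To break the symmetry I would use that, because insertions are placed immediately to the right of their parent, the observed sequence is the ordered concatenation $\mathcal I_u=B_1B_2\cdots B_M$ of independent blocks with $B_i$ distributed according to a law depending only on $x_i$; equivalently, writing $\widehat{\mu}$ for the non-commutative generating series of a measure on $\mathcal S$ in two non-commuting variables $z_0,z_1$, one has $\widehat{\mathcal L_{\vec x}(\mathcal I_u)}=\widehat{\beta_{x_1}}\cdots\widehat{\beta_{x_M}}$ with $\widehat{\beta_0}\ne\widehat{\beta_1}$, so that order information lives in the non-commutativity. For instance the antisymmetric ordered-pair statistic has between-block contribution $e^{rt_u}e^{-(\mu+\nu)t_u}\sum_i(M+1-2i)\mathbf 1\{x_i=1\}$, a position-weighted functional of $\vec x$, and higher ordered-pattern statistics supply a full family of such weightings. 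I expect the main obstacle to be exactly this step: proving that the factorization of $\widehat{\mathcal L_{\vec x}(\mathcal I_u)}$ into an ordered product of factors drawn from $\{\widehat{\beta_0},\widehat{\beta_1}\}$ is unique, i.e. that the concatenation-convolution map $\vec x\mapsto\mathcal L_{\vec x}(\mathcal I_u)$ is injective and admits an explicit left-to-right inversion. Once $\vec x$, hence $k$, is known, the mean one-count equation above yields $e^{-(\mu+\nu)t_u}$ and therefore $\nu t_u$, since $\mu t_u$ was already identified.
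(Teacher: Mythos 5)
Your treatment of the first two assertions is correct. For $(M,\mu t_u,\lambda t_u)$ you read off $\beta_{t_u}$, $M$, and $\alpha_{t_u}$ from the pole location, the pole order, and the mass at zero of the length PGF $\phi_{t_u}(s)^M$; this is a genuinely different inversion from the paper's, which extracts $\gamma=\lambda/\mu$ and $e^{-(\mu-\lambda)t}$ from the first three factorial cumulants by solving a quadratic and ruling out the spurious root via a sign analysis (Proposition~\ref{L:InvertLength}). Your analytic route is arguably cleaner but uses the full law; the paper's three-moment version is what makes its plug-in estimators (Algorithm~\ref{length:alg}) immediate. Your cross-moment identity $\E_{\vec{x}}[\,|\mathcal{I}_u|\,|\mathcal{I}_v|\,]=e^{(\lambda-\mu)(t_u+t_v)}\bigl(M^2+M\tfrac{\lambda+\mu}{\lambda-\mu}(1-e^{-(\lambda-\mu)t_w})\bigr)$ checks out and is algebraically equivalent to the paper's covariance formula \eqref{Id__dist_1mer_2}; monotonicity in $t_w$ holds because the derivative is $M(\lambda+\mu)e^{-(\lambda-\mu)t_w}>0$, so the tree reconstruction goes through.

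The third assertion, however, has a genuine gap, and you flag it yourself: you reduce recovery of the ordered sequence $\vec{x}$ to uniqueness of the factorization of the non-commutative generating series into ordered block factors drawn from $\{\widehat{\beta_0},\widehat{\beta_1}\}$, and you leave that injectivity unproven. This is not a loose end but the entire content of the assertion --- without it neither $\vec{x}$ nor, on your route, $\nu t_u$ is identified. The paper closes exactly this step constructively (Theorem~\ref{T:Initial_seq} via Lemma~\ref{L:theta_v}): the probability that the \emph{first} digit at a later time $s_j>t_u$ equals $\sigma$ is $\pi_{\sigma}\,\phi_j\,[1-\eta_j^{M}]+\psi_j\sum_{i=1}^{M}1_{\{x_i=\sigma\}}\,\eta_j^{\,i-1}$, with geometric positional weights $\eta_j^{\,i-1}$ arising because the first digit descends from $x_i$ only if the first $i-1$ founding families are extinct; varying $s_j$ over $M$ distinct times yields a Vandermonde system in the indicators $1_{\{x_i=\sigma\}}$, invertible since the $\eta_j$ are distinct, and crucially these later-time probabilities are functionals of the single-leaf law by the Markov property. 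Two further defects in your route: the coefficients of your ordered-pair statistics contain $e^{-(\mu+\nu)t_u}$, which is still unknown at the stage where you deploy them, so you would in fact be solving jointly for $\nu t_u$ and $\vec{x}$ with no uniqueness argument (the paper avoids this circularity by first identifying $\nu t$ from second-order 1-mer moments in Proposition~\ref{prop:1merpi0}, where the sign pattern and Descartes' rule select the unique positive root of the quadratic); and your final extraction of $\nu t_u$ from the mean one-count equation degenerates when $k=\pi_1 M$, since the coefficient of $e^{-(\mu+\nu)t_u}$ vanishes --- a case the paper's second-moment quadratic handles explicitly (the $B=0$ case leading to \eqref{eq:nuSol}).
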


Furthermore, we obtain a constructive proof of this theorem which not only pinpoints what information is enough to identify the unknowns, but also gives explicit formulas for the identification. These formulas  offer intuition for constructing consistent estimators. In the simulation section, Section \ref{S:Simulation}, we demonstrate that these estimators are consistent as the number of independent samples gets large.

The requirement ``$\pi_{0}=1-\pi_1$ is known" will be needed to identify the scaled substitution rate $\nu t_u$ and will not be needed to identify the phylogeny.
Although it might be more satisfying to perform joint estimation when $(\pi_0,\pi_1)$ is unknown, it is a conventional 
assumption to treat this vector as known \cite{felsenstein1981,erdos1999LogsI,erdos1999LogsII}. We also offer a simple explanation to why we assumed that $\pi_0$ is known in Remark \ref{Rk:Impossibility_pi0}.

\begin{proof}[Proof of Theorem \ref{T:main}]
The desired result follows from Theorems
\ref{T:Initial_M}, \ref{T:Initial_1mer}, \ref{T:Initial_seq} and \ref{thm:phylogeny}.
We first identify the (scaled) insertion rate and deletion rate in Theorem \ref{T:Initial_M}, then the substitution rate in Theorem \ref{T:Initial_1mer}, and finally the ancestral sequence in Theorem \ref{T:Initial_seq}. In Theorem \ref{thm:phylogeny}, we identify the phylogeny using Theorem \ref{T:Initial_M}.
\end{proof}

\bigskip

\noindent
{\bf Relation to existing results. }
Recall that the indel process $\mathcal{I}=(\mathcal{I}_t)_{t\geq 0}$ is parameterized by  $\theta=(\pi_0,\,\nu,\,\lambda,\,\mu)\in \Theta$, where $\Theta=(0,1)\times (0,\infty)^3$.
The first question one may seek to answer is the following:
Is the map
$(t,\,\vec{x},\,\theta) \mapsto \mathcal{L}^{(\theta)}_{\vec{x}}(\mathcal{I}_t)=\mathcal{L}_{\vec{x}}(\mathcal{I}_t)$  injective on the domain $(0,\infty) \times \mathcal{S} \times \Theta$? 
The answer is trivially negative because we can always scale time by a constant factor without changing the distribution of the leaf states. Hence we should instead consider the map
$(\vec{x},\,t\theta) \mapsto \mathcal{L}^{(\theta)}_{\vec{x}}(\mathcal{I}_t)$ with domain $\mathcal{S} \times \Theta$. 

To date, the following related partial results are known.
 If we assume the initial sequence follows the stationary distribution $\Pi$ of the TKF91 model, then $\lambda/\mu$ can be identified from the expected length at equilibrium $\bar{L}=\E_{\Pi}[|\mathcal{I}_t|]$. Namely,  
 $$\lambda/\mu=\frac{\bar{L}}{1+\bar{L}}$$ 
 according to \cite[eqn.(33)]{THATTE200658}.  If $\theta$ is known,
    then the length $|\vec{x}|$ can be explicitly identified from $\E_{\vec{x}}[|\mathcal{I}_t|]$ as 
    \[
    |\vec{x}|= \frac{1}{\beta_t}\left(\E_{\vec{x}}[|\mathcal{I}_t|]-\frac{\gamma(1-\beta_t)}{1 -\gamma}\right),
    \]
    where $
    \beta_t:= e^{-(\mu-\lambda)t}$ and $\gamma:=\frac{\lambda}{\mu}$; 
    see \cite[Lemma 2(i)]{fan2020statistically}.
 If $\theta$ and $M=|\vec{x}|$ are known,
    then $\vec{x}$ can be explicitly identified from  the probability that
    the first nucleotide of $\mathcal{I}_t$ is $\sigma \in \{A, T, C,G\}$, when the initial state is $\vec x$; see \cite[Lemma 2(ii)]{fan2020statistically}.
 In trace reconstruction (see \cite{holden2020lower}, \cite{holden2020subpolynomial}), an unknown binary sequence $\vec{x}\in \{0,1\}^M$ of length $M$ is sent
through a noisy channel where insertion and deletion occur independently.  Consistency and sample complexity results hold for arbitrary sequences $\vec{x}$ that are sufficiently long.  By taking independent traces subject to deletions, insertions, and substitutions, correct reconstruction of $\vec{x}$ with high probability is possible for a sufficiently large $M = |\vec{x}|$. 

Despite these works, it was unclear whether evolutionary parameters and properties of the initial sequence could be jointly estimated from contemporary sequence information. Here, we address this more challenging question of joint identifiability when multiple parameters are unknown.  
In Theorem \ref{T:main}, we give an  affirmative answer assuming that we know the value of $\pi_0$ and that $\lambda\neq \mu$. More precisely, Theorem \ref{T:main} implies that
the map $\big(\vec{x},\,\nu t,\lambda t,\mu t\big) \mapsto \mathcal{L}^{(\theta)}_{\vec{x}}(\mathcal{I}_t)$   is injective on the domain $\bigcup_{M\geq 1} \{0,1\}^M \times (0,\infty) \times \left((0,\infty)^2\setminus \Delta\right)$. Below we give a proof of Theorem \ref{T:main} that also offers explicit formulas for statistical estimation.

We also consider existing results on phylogeny estimation. The method developed in this paper involves an arbitrary unknown sequence, allowing us to provide a new constructive proof for phylogeny estimation, given evolutionary rates, initial sequence length, and covariance of observed sequence lengths. The result in \cite{fanlegriedroch2020} shows that tree estimation is impossible from sequence lengths originating from a stationary root distribution, despite \cite{THATTE200658} demonstrating tree identifiability in this context. The impossibility result for $k$-mers  in the CFN model from \cite{fan2022impossibilityKmers} has not yet been extended to the TKF91 or other indel models, but the additional information provided by covariances suggests potential estimators for trees using $k$-mers  as well.

\subsection{Constructive proof for indel rates and sequence length}

First, note that although the mapping $\Phi: \,\ZZ_+\to \RR_+$ defined by $ \Phi(M)= \E_{M}[|\mathcal{I}_{t}|]$  is injective from \eqref{E:expectedLength}, its inverse $\Phi^{-1}$ depends on  $(\mu-\lambda)t$ which we \textit{do not know}.
Hence, the initial sequence length is not explicitly identifiable from the expected sequence length \textit{without} (jointly) identifying some parameters of the indel model. 

This is the challenge that we overcome in Theorem \ref{T:Initial_M} below, where we explicitly identify the unknown  triple $(|\vec{x}|,\,\mu t,\,\lambda t)$ from the first three moments of the length $|\mathcal{I}_{t}|$, where $|\vec{x}|$ is the  sequence length of the  initial/ancestral sequence $\vec{x}$, and $\mu$ and $\lambda$ are respectively the deletion rate and the insertion rate. Recall that we assume $(\lambda,\mu) \in (0,\infty)^2\setminus \Delta$ where $\Delta$ is the diagonal.  Removing the diagonal enforces the assumption that $\lambda \ne \mu$.

\begin{theorem}[Constructive proof for indel rates and sequence length]\label{T:Initial_M}
For any $t \geq 0$, the  mapping 
$\Phi_{t}^{(1)}:\,\ZZ_+\times \left( (0,\infty)^2\setminus \Delta\right) \to \RR_+^3$  defined by 		\begin{equation*}
		\Phi_{t}^{(1)}(M, \mu t, \lambda t)= \left(\E_{M}[|\mathcal{I}_{t}|],\,\E_{M}[|\mathcal{I}_{t}|^2],\,\E_{M}[|\mathcal{I}_{t}|^3] \right)
		\end{equation*}
is one-to-one, and its inverse $\left(\Phi_{t}^{(1)}\right)^{-1}$ does not depend on any parameter of the indel process, the time $t$, nor the initial sequence $\vec{x}$.
\end{theorem}

The mapping $\Phi^{(1)}_t$ in Theorem \ref{T:Initial_M}  has  an explicit expression for its inverse which can be described as follows. When  $\lambda\neq \mu$, set
\begin{equation}\label{ExplictInverses_length}
M=\frac{ \E_{M}[|\mathcal{I}_t|]}{\beta},\quad \mu t =\frac{-\ln \beta}{1-\gamma}, \quad \lambda t =\frac{-\gamma \ln \beta}{1-\gamma},
\end{equation}
where
\begin{align}\label{betagamma}
\beta = \frac{\gamma(2 + C_2') - C_2'}{1 + \gamma}, \qquad 
\gamma = \frac{\sqrt{-(C_2'+1)^2 \left(3 C_2'^2-2 C_3'\right)}-C_2'^2+C_2'+C_3'}{2 C_2'^2+2 C_2'-C_3'+2}
\end{align}
where  $C_2' = C_2/C_1$,  $C_3' = C_3/ C_1$ and
\begin{equation}\label{Def:Ck}
C_k:=\frac{\partial^{(k)}\ln G_M(z,t)}{\partial^{(k)} z}\Big|_{z=1}.
\end{equation}

\begin{remark}\label{Rk:Ck}\rm
Observe that
\begin{equation}\label{partialG}
G_M^{(k)}(1,t):=\frac{\partial^{(k)} G_M(z,t)}{\partial^{(k)} z}\Big|_{z=1}=\mathbb{E}_M\left[\frac{|\mathcal{I}_t|\,!}{(|\mathcal{I}_t|-k)!}1_{|\mathcal{I}_{t}| \geq k}\right],
\end{equation} and that $C_k$ defined in \eqref{Def:Ck}
is a polynomial in $\{G_M^{(j)}(1,t)\}_{j=0}^k$. Therefore, for each $k\geq 1$, if we know the first $k$ moments of $|\mathcal{I}_t|$, then  $C_k$ can be computed without knowing any parameter of the indel process, $t$, nor the initial sequence $\vec{x}$. For example, $C_1= \E_{M}[|\mathcal{I}_t|]$.
\end{remark}

Replacing the expectations $\E_{M}[|\mathcal{I}_t|^k]$ by their corresponding empirical averages from data, we obtain estimators for the unknown initial sequence and the unknown parameters of the indel process from \eqref{ExplictInverses_length}. 
More precisely, the estimators of $\beta$ and $\gamma$, denoted by $\widehat{\beta}$ and $\widehat{\gamma}$, are the analogues of \eqref{betagamma} where $G_M^{(k)}(1,t)$ is replaced by the empirical average of $\mathbb{E}_M\left[\frac{|\mathcal{I}_t|\,!}{(|\mathcal{I}_t|-k)!}1_{|\mathcal{I}_{t}| \geq k}\right]$. Then the estimators of $(M,\,\mu t,\,\lambda t)$ are the empirical analogues of \eqref{ExplictInverses_length} where $(\beta,\gamma)$ is replaced by $(\widehat{\beta},\,\widehat{\gamma})$.

In Section \ref{S:simulation_M}, our simulations demonstrate that these estimators are consistent as the number of independent samples gets larger.

\subsection{Constructive proof for substitution rate and 1-mer}

Let $X_t$ and $Z_t$ be the number of 1's and 0's in $\mathcal{I}_t$ respectively. Then the 1-mer count of $\mathcal{I}_t$ is $(X_t,Z_t)$. 
The \textbf{1-mer count process} $(X_t,Z_t)_{t\geq 0}$ is itself a continuous-time Markov chain with state space $\ZZ_+^2$. More precisely, $\left\{(X_t,Z_t)_{t\geq 0} ,\,\{\P_{(a,b)}\}_{(a,b)\in\ZZ_+^2}\right\}$ is  a Markov family, where  $\mathbb{P}_{(a,b)}$ is the conditional probability measure given that the root state has 1-mer count $(a,b)$.

Theorem \ref{T:Initial_M} implies that we can identify $(|\vec{x}|=M,\,\mu t,\,\lambda t)$ from the first three moments of the length $|\mathcal{I}_t|$ at time $t$. Now if we assume that we know $(|\vec{x}|=M,\,\mu t,\,\lambda t)$, then 
Theorem \ref{T:Initial_1mer} below says that  we can further identify the 1-mer of the ancestral sequence and $\nu t$ from the cross moments (up to second order) of the 1-mer count $(X_t,Z_t)$.

Hence, Theorem \ref{T:Initial_M}  and Theorem \ref{T:Initial_1mer} together imply that we can identify the 1-mer of the ancestral sequence and $(\mu t,\,\lambda t,\,\nu t)$ from the cross moments of $X_t$ and $Z_t$ (up to the second order) and the third moment of $|\mathcal{I}_t|$.

\begin{theorem}[Constructive proof for substitution rate and 1-mer]\label{T:Initial_1mer}
Suppose the length $|\vec{x}|=M\geq 1$ of the initial sequence  and the parameter $\pi_0\in [0,1]$ are known.
For any $t \geq 0$, 
the  mapping 
$\Phi_{t}^{(2)}:\,\{0,1,\cdots,M\}\times (0,\infty)\to \RR_+^5$  defined by 
		\begin{equation}\label{Phi^2}
		\Phi_{t}^{(2)}(a, \nu t)= \Big(\E_{(a, M-a)}[X_t],\,\E_{(a, M-a)}[Z_t],\,\E_{(a, M-a)}[X_t^2],\,\E_{(a, M-a)}[X_tZ_t],\,\E_{(a, M-a)}[Z_t^2] \Big)
		\end{equation}
is one-to-one and its inverse $\left(\Phi_{t}^{(2)}\right)^{-1}$ depends only on 
$(M,\,\mu t,\,\lambda t)$ but not on any other variable including  $\nu$, $t$, and the digits of initial sequence $\vec{x}\in\{0,1\}^M$.
\end{theorem}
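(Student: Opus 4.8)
The plan is to exploit the fact that every transition rate of the $1$-mer process is \emph{linear} in the current counts $(X_t,Z_t)$, so that the moment equations close: the first moments obey a closed linear system, and the second moments obey a closed linear system forced by the first moments. I would first write down the generator $\mathcal{G}$ of $(X_t,Z_t)$ read off from Definition \ref{Def_BiINDEL}, whose nonzero transitions are $(X,Z)\mapsto(X-1,Z+1)$ at rate $\nu\pi_0 X$ and $(X,Z)\mapsto(X+1,Z-1)$ at rate $\nu\pi_1 Z$ (substitutions), $(X,Z)\mapsto(X-1,Z)$ and $(X,Z)\mapsto(X,Z-1)$ at rates $\mu X$ and $\mu Z$ (deletions), and $(X,Z)\mapsto(X+1,Z)$ and $(X,Z)\mapsto(X,Z+1)$ at rates $\lambda\pi_1(X+Z)$ and $\lambda\pi_0(X+Z)$ (insertions).

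The key simplification is a change of variables to the length $L_t:=X_t+Z_t$ and the composition coordinate $U_t:=\pi_0 X_t-\pi_1 Z_t$, which diagonalizes the first-moment dynamics. A direct computation gives $\mathcal{G}L=(\lambda-\mu)L$ and $\mathcal{G}U=-(\nu+\mu)U$, so that
\begin{equation*}
\E_{(a,M-a)}[L_t]=Me^{(\lambda-\mu)t},\qquad \E_{(a,M-a)}[U_t]=(a-\pi_1 M)\,e^{-(\nu+\mu)t}.
\end{equation*}
Since $X_t=\pi_1 L_t+U_t$ and $Z_t=\pi_0 L_t-U_t$, the two first moments in \eqref{Phi^2} are equivalent to the pair $(\E[L_t],\E[U_t])$; the former is already determined by the known $(M,\mu t,\lambda t)$ and carries no new information, while the latter gives a single scalar equation in the two unknowns $a$ and $\nu t$. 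Thus first moments alone are insufficient, and I would turn to the second moments.

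Writing $q:=e^{-(\nu+\mu)t}$ (note that $\mu t$ is known, so identifying $q$ is equivalent to identifying $\nu t$, via $\nu t=-\ln q-\mu t$), I would next derive and solve the closed linear ODEs for $\E[L_t^2]$, $\E[U_tL_t]$ and $\E[U_t^2]$. The pure-length moment $\E[L_t^2]$ depends only on $(M,\mu t,\lambda t)$, and a short computation shows $\E[U_tL_t]$ equals a known multiple of $\E[U_t]$, so both are redundant. The one genuinely new equation comes from $\E[U_t^2]$: solving the resulting linear ODE (whose forcing involves the already-known $\E[L_t]$ and $\E[U_t]$) with the deterministic initial value $U_0^2=(a-\pi_1 M)^2$ yields
\begin{equation*}
\E[U_t^2]=\E[U_t]^2+\pi_0\pi_1 M\big(e^{(\lambda-\mu)t}-q^2\big)+(\pi_0-\pi_1)\,\E[U_t]\,(1-q).
\end{equation*}
Because $\E[U_t]$, $e^{(\lambda-\mu)t}$, $M$ and $\pi_0$ are all known, this is a \emph{quadratic equation in the single unknown} $q$. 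Solving it for $q$, then setting $\nu t=-\ln q-\mu t$ and $a=\pi_1 M+\E[U_t]/q$, produces the explicit inverse, which manifestly depends only on $(M,\mu t,\lambda t)$ and the assumed-known $\pi_0$.

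\textbf{Main obstacle.} The delicate point is that the quadratic in $q$ has two roots, and I must show exactly one is admissible. I expect to resolve this by proving the spurious root is negative. Writing the true value as $q^\ast\in(0,e^{-\mu t})$ and using $\E[U_t]=(a-\pi_1 M)q^\ast$, the product-of-roots relation identifies the second root as
\begin{equation*}
-\,q^\ast\,\frac{\pi_1^2 M+(\pi_0-\pi_1)a}{\pi_0\pi_1 M}.
\end{equation*}
Since the bracket $\pi_1^2 M+(\pi_0-\pi_1)a$ is affine in $a$ and equals $\pi_1^2 M>0$ at $a=0$ and $\pi_0^2 M>0$ at $a=M$, it is strictly positive for every $a\in\{0,\dots,M\}$ whenever $\pi_0\in(0,1)$; hence the spurious root is strictly negative and the admissible positive root is unique. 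This pins down $q$, and therefore $(a,\nu t)$, establishing that $\Phi_t^{(2)}$ is one-to-one. The substance of the proof is thus the moment-ODE bookkeeping together with this root-selection inequality; the remaining algebra is routine.
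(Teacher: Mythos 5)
Your proof is correct, and I verified the key computations: the generator identities $\mathcal{G}L=(\lambda-\mu)L$ and $\mathcal{G}U=-(\mu+\nu)U$ for $U_t=\pi_0X_t-\pi_1Z_t$ hold, the solved moment ODE for $\E[U_t^2]$ reproduces your displayed formula (the forcing coefficient collapses because $(2\nu+\mu+\lambda)/(\lambda+\mu+2\nu)=1$), the claim that $\E[U_tL_t]$ is a $(M,\mu t,\lambda t)$-determined multiple of $\E[U_t]$ checks out, and your product-of-roots identity together with the positivity of $\pi_1^2M+(\pi_0-\pi_1)a$ on $a\in\{0,\dots,M\}$ correctly isolates the unique positive root. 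However, your route is genuinely different from the paper's. The paper does not work with moment ODEs at all: it first derives the full probability generating function $G_{(a,b)}(z_1,z_2;t)$ in closed form via the method of characteristics (Proposition \ref{prop:1mer}), changes variables to $(y_1,y_2)=(z_1-z_2,\pi_1z_1+\pi_0z_2)$, and evaluates first and second logarithmic derivatives of $H_{(a,b)}$ at $(y_1,y_2)=(0,1)$, obtaining a quadratic $Ae^{-2\nu t}+Be^{-\nu t}+C=0$ whose unique positive root is selected by Descartes' rule of signs (since $A<0$ and $C>0$). In fact your coordinate $U_t$ is precisely the probabilistic meaning of the paper's $y_1$-derivative, since $\partial_{y_1}\ln H$ at $(0,1)$ equals $\pi_0E_{1,0}-\pi_1E_{0,1}=\E[U_t]$, and the paper's leading coefficient simplifies to $A=-M\pi_0\pi_1$, i.e.\ minus your leading coefficient; so the two arguments terminate at essentially the same quadratic, with Descartes' rule versus your explicit product-of-roots computation as interchangeable root-selection devices. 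What each buys: your argument is more elementary and self-contained, avoiding the PDE/characteristics machinery and the closed-form PGF entirely, and it exposes why the problem closes at second moments (linearity of the rates); the paper's PGF, by contrast, delivers all higher moments at once and is reused in the appendix (Section \ref{sss:1-mer System}) to set up the open problem of jointly identifying $(\pi_0,a,\nu t)$, which your moment bookkeeping would have to be extended order by order to reach. One caveat on scope: your root-selection inequality needs $\pi_0\in(0,1)$, but this is not a gap relative to the paper, whose own proof requires $A=-M\pi_0\pi_1<0$ and hence implicitly excludes the endpoints of $[0,1]$ as well (there, e.g.\ at $\pi_0=0$ with $a=M$, the substitution rate is genuinely unidentifiable).
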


The mapping $\Phi^{(2)}$ is determined by the probability generating function (PGF) of the 1-mer count of the sequence. 
The PGF has an explicit expression given in the Appendix. We shall obtain explicit expressions for  $a$ and $\nu t$ in \eqref{eq:aSol} and \eqref{eq:nuSol} below.

\subsection{Constructive proof for ancestral sequence}
Suppose the unknown root sequence is $\vec{x}$ and let $p_{\vec{x}}^{\sigma}(s)$ be the probability that the first digit of the sequence $\mathcal{I}_s$ at time $s$ is $\sigma$. In our setting here the ancestral sequence can be empty, so $p_{\vec{x}}^{0}(s)+p_{\vec{x}}^{1}(s)<1$ for all $s\in(0,\infty)$. 

\begin{theorem}[Constructive proof for ancestral sequence]\label{T:Initial_seq}
Suppose the length $|\vec{x}|=M\geq 1$ of the initial sequence is known. Fix an arbitrary leaf $u\in\partial T$ and let  $t=t_u \in(0,\infty)$ be the (unknown) distance between $u$ and the root of the tree.
Fix an arbitrary increasing sequence of positive times $\{s_j\}_{j=1}^M$ such that $t<s_1<s_2<\cdots <s_M<\infty$. Define the  mapping	$\Phi_{s_1,\cdots,s_{M}}:\,\{0,1\}^M\to [0,1]^{2M}$ by
		\begin{equation}\label{Phi^3}
		\Phi_{s_1,\cdots,s_{M}}(\vec{x})= \left(p^{\sigma}_{\vec{x}}(s_j)\right)_{\sigma \in \{0,1\},\,1\leq j\leq M}.
		\end{equation}
This mapping is one-to-one and has an explicit expression for its inverse that depends only on  $\{(\lambda-\mu) s_j,\,(\mu+\nu) s_j\}_{j=1}^M$ and $\pi_0$, but not on the digits of the initial sequence $\vec{x}\in\{0,1\}^M$.
\end{theorem}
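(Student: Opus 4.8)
The plan is to derive a closed form for $p^{\sigma}_{\vec{x}}(s)$ as an explicit function of the ancestral digits $(x_1,\dots,x_M)$ and then invert it by polynomial interpolation. The structural starting point is that, since each insertion is placed immediately to the right of its parent, the descendants of the original site $i$ form a \emph{contiguous} block $B_i(s)$, and $\mathcal{I}_s$ is the ordered concatenation $B_1(s)B_2(s)\cdots B_M(s)$; moreover the $M$ blocks evolve independently of one another. Consequently the first digit of $\mathcal{I}_s$ is precisely the leftmost digit of the first non-empty block. Writing $e(s)$ for the probability that a single block is empty (equivalently, extinction by time $s$ of the length birth–death process started from one site, a function of $(\lambda-\mu)s$ and $\lambda/\mu$), independence of blocks gives $\mathbb{P}(B_1,\dots,B_{i-1}\text{ empty})=e(s)^{i-1}$.

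The heart of the argument is a decomposition of the leftmost surviving digit of a single block started from digit $\sigma_0$. I would show that this digit equals $\sigma_0$ \emph{exactly} on the event that the original site is neither deleted nor substituted on $[0,s]$: while it is alive it remains leftmost (all its offspring lie to its right), and if it is never substituted its digit is still $\sigma_0$. That event has probability $r(s)=e^{-(\mu+\nu)s}$, independent of $\sigma_0$. On every other way for the block to be non-empty — the original site alive but substituted at least once, or the leftmost survivor being an inserted site — the digit is drawn from $\pi=(\pi_0,\pi_1)$ \emph{independently} of $\sigma_0$, because both substitution and insertion reset the digit to $\pi$, and the digit-valued randomness is independent of the birth/death genealogy that determines positions and survival. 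This yields, with $q(s)=1-e(s)$,
\[
p^{\sigma}_{\vec{x}}(s)=\sum_{i=1}^{M} e(s)^{i-1}\Big[\, r(s)\,\mathbf{1}_{\{x_i=\sigma\}}+\big(q(s)-r(s)\big)\pi_{\sigma}\,\Big].
\]
I expect the rigorous justification of the ``otherwise $\pi$-distributed and independent of $\sigma_0$'' claim — carefully separating the genealogical randomness from the digit randomness across substituted originals and inserted survivors — to be the main obstacle; everything downstream is linear algebra.

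For the inversion, fix $\sigma=1$ and set $c_i=\mathbf{1}_{\{x_i=1\}}\in\{0,1\}$. Subtracting the known $\pi_1$-contribution and dividing by $r(s_j)>0$ produces $y_j:=\sum_{i=1}^{M}c_i\,e(s_j)^{i-1}$, so that $\vec{y}=V\vec{c}$ with the Vandermonde matrix $V_{ji}=e(s_j)^{i-1}$. I would verify from the birth–death formula that $s\mapsto e(s)$ is strictly increasing (its derivative is positive precisely because $\lambda\neq\mu$), so the distinct times $s_1<\cdots<s_M$ give distinct nodes $e(s_1),\dots,e(s_M)$, $V$ is invertible, and $\vec{c}=V^{-1}\vec{y}$ recovers $\vec{x}$ uniquely; this simultaneously establishes injectivity and furnishes the explicit inverse. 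The only quantities entering this inverse are $e(s_j)$, $r(s_j)=e^{-(\mu+\nu)s_j}$, and $\pi_1=1-\pi_0$, which are determined by $(\lambda-\mu)s_j$, $(\mu+\nu)s_j$, the ratio $\lambda/\mu$ already identified in Theorem~\ref{T:Initial_M}, and $\pi_0$ — and in particular by none of the digits of $\vec{x}$, as required. (The $\sigma=0$ coordinates give a redundant consistency check.)
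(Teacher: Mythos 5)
Your proposal is correct and takes essentially the same route as the paper: your closed form for $p^{\sigma}_{\vec{x}}(s)$ is exactly the paper's Lemma~\ref{L:theta_v} (with your $e=\eta$, $r=\psi$, and $q-r=(1-\eta)\phi$), and your inversion is the paper's Vandermonde system $U=\Psi V Y^{\vec{x}}$ with nodes $\eta(s_j)$. Your explicit check that $s\mapsto e(s)$ is strictly increasing, so the distinct times yield distinct Vandermonde nodes, makes precise a point the paper leaves implicit, but it does not alter the argument.
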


Theorem \ref{T:Initial_seq} is similar in spirit to \cite[Lemma 2]{fan2020statistically}, where the authors seek to identify the initial sequence $\vec{x}$ from the distribution of outcomes for each site in the ending sequence.

By Theorems  \ref{T:Initial_M} and \ref{T:Initial_1mer}, we can identify $\{(\lambda-\mu) s_j,\,(\mu+\nu) s_j\}_{j=1}^M$ provided that  $\pi_0$ is known. Therefore, using Theorem \ref{T:Initial_seq}, we can identify the ancestral sequence $\vec{x}$ from the law $\mathcal{L}_{\vec{x}}(\mathcal{I}_{t_u})$ for any single leaf $u$, if $\pi_0$ is known.

\subsection{Constructive proof for phylogeny}

In this section, we consider the problem of identifying the unknown phylogeny from the distribution of the sequences at the leaves under the indel model. As mentioned below Theorem \ref{T:main}, we do not need to know the value of $\pi_{0}$ to identify the phylogeny.

In ~\cite[equation (27)]{THATTE200658}, the author showed that
under the stationary TKF91 model~\cite{Thorne1991} (i.e.  the root sequence is assumed to follow the stationary distribution), 
for any two leaves $\{u,v\}$, the joint distribution of the sequence lengths $\{L_u, L_v\}$ of the two leaves determines  $\mu\,t_{uv}$, where  $t_{uv}$ is the  distance between the two leaves; see Figure \ref{fig:fork}.  

Here,  we seek to identify $t_{uv}$ using the joint distribution of the sequence lengths \textit{without any assumption} on the root sequence. 
Theorem \ref{thm:phylogeny} below can be viewed as a generalization of the results in \cite[Section 4]{THATTE200658}.
In particular, the full distribution of sequence lengths suffices to recover (identify)  all $\mu t_{uv}$'s for all pairs of leaves. The phylogeny can then be reconstructed from these pairwise distances using standard results; see ~\cite[Chapter 6]{Steel:16} for a general discussion.  We rely specifically on the tree metric results from \cite{zaretskii1965} and \cite{buneman1971}. 

\begin{theorem}[Constructive proof of phylogenetic tree] \label{thm:phylogeny}

Let $T$ be a fixed and unknown phylogeny with at least two   leaves.  The topology and the edge lengths (up to a scaling factor) of $T$ can be identified using only the laws  $\left\{\mathcal{L}_{\vec{x}}\left(|\mathcal{I}_{u}|,\,|\mathcal{I}_{v}| \right)\right\}_{(u,v)\in \partial T\times \partial T}$ of pairwise sequence lengths.
Furthermore, for every pair of leaves $u$ and $v$, their distance $t_{uv}$ satisfy
\begin{align}
    e^{\frac{-(\mu-\lambda)}{2}\,t_{uv}}=&\,
\frac{\mu - \lambda}{M(\lambda +\mu)}\,e^{\frac{+(\mu-\lambda)}{2}\,(t_u+t_v)}\,Cov_M(L^u,L^v)\;+\; e^{\frac{-(\mu-\lambda)}{2}\,(t_u+t_v)}
    \label{Id__dist_1mer_2}\\
    t_w=&\,\frac{t_u+t_v-t_{uv}}{2}, \label{Id__dist_1mer_3}
\end{align}
where $w$ is the most recent common ancestor of $u$ and $v$ in the tree $T$.
\end{theorem}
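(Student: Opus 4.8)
The plan is to reduce tree identifiability to recovering, for every pair of leaves $u,v$, the (scaled) path length $t_{uv}$ between them, and then to feed the resulting pairwise distances into classical tree-metric reconstruction. I would first dispose of \eqref{Id__dist_1mer_3}, which is purely metric: since $w$ is the most recent common ancestor of $u$ and $v$, the path between $u$ and $v$ in $T$ passes through $w$, so $t_{uv}=(t_u-t_w)+(t_v-t_w)=t_u+t_v-2t_w$, and solving for $t_w$ gives \eqref{Id__dist_1mer_3}. No probability is needed here.

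The core of the proof is \eqref{Id__dist_1mer_2}. The key structural fact is that the length process $(|\mathcal{I}_t|)$ is a linear Markov branching process, and that, conditionally on the state $\mathcal{I}_w$ at the MRCA, the two subtrees descending to $u$ and to $v$ evolve independently. Writing $L^w=|\mathcal{I}_w|$ and applying \eqref{E:expectedLength} on each pendant branch, $\E_M[L^u\mid \mathcal{I}_w]=L^w\,e^{-(\mu-\lambda)(t_u-t_w)}$ and likewise for $v$; conditional independence across $w$ then gives $\E_M[L^uL^v\mid \mathcal{I}_w]=(L^w)^2\,e^{-(\mu-\lambda)(t_u+t_v-2t_w)}$. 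Taking expectations, subtracting $\E_M[L^u]\,\E_M[L^v]$, and using $t_u+t_v-2t_w=t_{uv}$ collapses the cross term into the clean factorization
\begin{equation*}
\mathrm{Cov}_M(L^u,L^v)=e^{-(\mu-\lambda)\,t_{uv}}\,\mathrm{Var}_M(L^w).
\end{equation*}
Substituting the second-moment formula for the linear birth--death process, $\mathrm{Var}_M(L^w)=M\,\tfrac{\lambda+\mu}{\lambda-\mu}\,e^{-(\mu-\lambda)t_w}\bigl(e^{-(\mu-\lambda)t_w}-1\bigr)$, eliminating $t_w$ via \eqref{Id__dist_1mer_3}, and rearranging then isolates $e^{-\frac{(\mu-\lambda)}{2}t_{uv}}$ exactly in the form \eqref{Id__dist_1mer_2}; this last step is a routine algebraic manipulation.

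Identifiability then follows quickly. From the marginal law of $L^u$ alone, Theorem~\ref{T:Initial_M} identifies $(M,\mu t_u,\lambda t_u)$, hence the ratio $\gamma=\lambda/\mu$, the factor $\tfrac{\mu-\lambda}{\lambda+\mu}=\tfrac{1-\gamma}{1+\gamma}$, and $(\mu-\lambda)t_u$; meanwhile $\mathrm{Cov}_M(L^u,L^v)$ is a functional of the pairwise joint law. Thus every term on the right of \eqref{Id__dist_1mer_2} is identifiable, and taking logarithms recovers $(\mu-\lambda)t_{uv}$ for every pair. The array $\{(\mu-\lambda)t_{uv}\}_{u,v\in\partial T}$ is a fixed positive multiple of the path metric of $T$, hence an additive (tree) metric; the results of \cite{zaretskii1965,buneman1971} then reconstruct the unrooted topology and all edge lengths up to the global factor $\mu$, and the identified distances $\{(\mu-\lambda)t_u\}$ locate the root.

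I expect the main obstacle to be the covariance identity itself---specifically, justifying rigorously that the two pendant subtrees are conditionally independent given the full state $\mathcal{I}_w$ (via the Markov property at the internal vertex $w$) and combining this with the exact first and second moments of the linear birth--death length process. Once the explicit expression for $\mathrm{Cov}_M(L^u,L^v)$ is in hand, both the inversion yielding \eqref{Id__dist_1mer_2} and the passage from pairwise distances to the tree are standard.
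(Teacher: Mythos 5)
Your proposal is correct and follows essentially the same route as the paper: apply the Markov property at the MRCA $w$, use conditional independence of the two pendant branches together with the first and second moments of the linear birth--death length process, invert to get $(\mu-\lambda)t_{uv}$ after identifying $(M,\mu t_u,\lambda t_u)$ via Theorem~\ref{T:Initial_M}, and invoke Zaretskii--Buneman to pass from pairwise distances to the tree. Your intermediate factorization $Cov_M(L^u,L^v)=e^{-(\mu-\lambda)t_{uv}}\,\mathrm{Var}_M(L^w)$ is a slightly cleaner packaging of the paper's term-by-term expansion (the paper derives the second moment by solving its moment ODE rather than quoting the standard birth--death variance formula), and it reproduces the paper's identity $Cov_M(L^u,L^v)=\frac{M(\lambda+\mu)}{\mu-\lambda}\left(e^{-(\mu-\lambda)(t_{uv}+t_w)}-e^{-(\mu-\lambda)(t_u+t_v)}\right)$ exactly.
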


Let us now explain how the scaled distance $\mu t_{uv}$ can then be identified using Theorem \ref{T:Initial_M}.
Since we know the marginal distribution of $\mathcal{I}_{t_q}$ for all $q\in\partial T$,  Theorem \ref{T:Initial_M} enables us to identify both the length $|\vec{x}|$ of the ancestral sequence and $\{\mu t_q, \lambda t_q, \nu t_q\}$ for all $q\in\partial T$, where $t_q$ is the total distance between the root and leaf $q$. Therefore,
$(\mu-\lambda)t_{uv}$ and $(\mu-\lambda)t_w$ are identified from
\eqref{Id__dist_1mer_2} and \eqref{Id__dist_1mer_3} respectively. From this, $\mu t_{uv}$ and $\mu t_w$ can both be identified using Theorem \ref{T:Initial_M}.

As in the previous Theorem, we take the height $t$ to be the maximum distance to each leaf from the root.  If the tree $T$ is ultrametric, then all the distances are equal.  In the Proofs section, we give a careful proof of the case when the number of leaves is $|\partial T| = 2$.  The $|\partial T|=2$ case then generalizes to an arbitrary number of leaves.  Given sequence lengths at all leaves of a phylogenetic tree, every distance between pairs of leaves is identifiable by this Theorem.  It follows from \cite{zaretskii1965} and \cite{buneman1971} that the tree topology and the branch lengths are identifiable.

\section{Proofs} \label{Proofs}

In the four subsections below, we prove Theorems \ref{T:Initial_M}, \ref{T:Initial_1mer}, \ref{T:Initial_seq} and \ref{thm:phylogeny}. 

\subsection{Identifying $\{|\vec{x}|,\,\lambda t,\,\mu t\}$ jointly using only sequence lengths }\label{S:length}

We first prove Theorem \ref{T:Initial_M} in this subsection.

\begin{prop}\label{L:InvertLength}
Suppose $\lambda \ne \mu$.  Let $(\mathcal{I}_t)_{t\geq 0}$ denote the indel process in Definition \ref{Def_BiINDEL}. Given the distribution of the length $|\mathcal{I}_t|$ at time $t$ starting at an unknown sequence $\vec{x}$, we can identify $|\vec{x}|$ and the parameters $\lambda/\mu$ and $\mu t$ as follows. 
For $k \in \{1, 2, 3\}$, let 
$C_k$ be defined as in
\eqref{Def:Ck}.
Further let $C_2' = C_2/C_1$ and  $C_3' = C_3/ C_1$. Then 
$\gamma = \lambda/\mu$ is given by 
$$\gamma = \frac{\sqrt{-(C_2'+1)^2 \left(3 C_2'^2-2 C_3'\right)}-C_2'^2+C_2'+C_3'}{2 C_2'^2+2 C_2'-C_3'+2},$$
and $\beta_t = e^{-(\mu - \lambda)t}$ is given by 
$$\beta_t = \frac{\gamma(2 + C_2') - C_2'}{1 + \gamma}.$$
\end{prop}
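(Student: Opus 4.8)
The strategy rests on the multiplicative structure of the length process. Because $(|\mathcal{I}_t|)_{t\ge0}$ is a linear birth--death process in which the descendants of distinct initial digits evolve independently, its probability generating function factorizes as $G_M(z,t)=\big(G_1(z,t)\big)^M$, where $G_1$ is the single-particle generating function. Taking logarithms gives $\ln G_M = M\,\ln G_1$, so each factorial cumulant scales linearly in $M$:
\[
C_k \;=\; \frac{\partial^{(k)}\ln G_M(z,t)}{\partial^{(k)}z}\Big|_{z=1} \;=\; M\,c_k, \qquad c_k:=\frac{\partial^{(k)}\ln G_1(z,t)}{\partial^{(k)}z}\Big|_{z=1},
\]
where the $c_k$ depend only on $(\beta,\gamma)$ (equivalently on $\mu t,\lambda t$) and not on $M$. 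I would begin from the classical closed form of $G_1$, which is a ratio $N(z)/D(z)$ of two affine functions of $z$.

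This factorization is the whole point: it is what lets us peel off $M$ by passing to ratios. The first step is therefore to compute $c_1,c_2,c_3$ explicitly. Since $\ln G_1 = \ln N - \ln D$ with $N,D$ affine in $z$, the derivatives at $z=1$ are elementary, $\partial^{(k)}_z\ln N|_{z=1} = (-1)^{k-1}(k-1)!\,(N'/N)^k|_{z=1}$ and similarly for $D$; after simplification one gets
\[
c_k \;=\; (-1)^{k-1}(k-1)!\,\frac{(\beta-\gamma)^k-\gamma^k(\beta-1)^k}{(1-\gamma)^k}, \qquad\text{in particular } c_1=\beta.
\]
Forming the normalized quantities $C_2'=C_2/C_1=c_2/c_1$ and $C_3'=C_3/C_1=c_3/c_1$ then cancels the unknown $M$ and leaves two equations in the two remaining unknowns $\beta$ and $\gamma$. (By Remark~\ref{Rk:Ck} the $C_k$ are polynomials in the first $k$ moments of $|\mathcal{I}_t|$, so this is exactly the information the theorem allows.)

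The second step is to solve this $2\times2$ system. The $C_2'$ relation is affine in $\beta$ and rearranges at once to $\beta=\frac{\gamma(2+C_2')-C_2'}{1+\gamma}$. Substituting this into the $C_3'$ relation and clearing denominators collapses, after the common factors cancel, to a single quadratic in $\gamma$ with coefficients polynomial in $C_2',C_3'$; solving it yields the displayed formula for $\gamma$, and then $\beta$ follows from the affine relation. Finally $M=C_1/\beta$ (since $C_1=M\beta$), and $(\mu t,\lambda t)$ are recovered from $(\beta,\gamma)$ via $\mu t=-\ln\beta/(1-\gamma)$ and $\lambda t=\gamma\,\mu t$, both well defined because $\gamma\neq1$. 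As every quantity is expressed through formulas involving only $C_1,C_2,C_3$, the map admits an explicit parameter-free inverse and is one-to-one.

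The main obstacle I expect is the algebra of the third step together with the correct branch selection. Reducing the $C_3'$ equation to a genuine quadratic requires nontrivial cancellation --- the appearance of the factor $(C_2'+1)^2$ and the combination $3C_2'^2-2C_3'$ inside the square root signals that large common factors drop out --- and, more importantly, one must verify that the stated sign of the square root returns the true value of $\gamma$ for every admissible pair $(\lambda,\mu)$ with $\lambda\neq\mu$, so that the inverse is single-valued and the map is genuinely injective rather than merely finite-to-one. I would settle the branch choice by substituting the true $(\beta,\gamma)$ back into the derived relations and checking that positivity of $\beta$ and of the discriminant's factors forces the $+$ sign, using a continuity argument across the two connected regimes $\gamma<1$ and $\gamma>1$ to confirm that a single sign works uniformly.
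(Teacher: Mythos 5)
Your setup coincides with the paper's own proof: both start from the closed form \eqref{GenFcn_length} of the length PGF (your $\ln G_M = M\ln G_1$ is exactly the linearity in $M$ that the paper exploits after taking logarithms), and your closed formula for the factorial cumulants $c_k$ is correct --- for $k=1,2,3$ it reproduces the paper's system $C_1=\beta M$, $C_2=\frac{\beta(2\gamma-\beta\gamma-\beta)}{1-\gamma}M$, $C_3=\frac{2\beta(\beta^2\gamma^2+\beta^2\gamma+\beta^2-3\beta\gamma^2-3\beta\gamma+3\gamma^2)}{(1-\gamma)^2}M$. The affine solve for $\beta$ in terms of $\gamma$ and $C_2'$, and the reduction of the $C_3'$ relation to a quadratic in $\gamma$, are likewise identical to the paper's steps.

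The genuine gap is the step you yourself flag as the main obstacle: you never actually prove the branch selection, and the continuity argument you propose cannot do the job on its own. The set of parameters where the true $\gamma$ equals the larger root is closed, its complement is open, and a connected regime such as $\{\gamma<1\}$ can perfectly well be partitioned into a nonempty closed piece and a nonempty open piece; the two branches merge precisely where the discriminant $-(C_2'+1)^2(3C_2'^2-2C_3')$ vanishes, and nothing in a pure continuity argument prevents the true root from migrating from the $+$ branch to the $-$ branch across that locus. Moreover, injectivity requires something stronger than locating the true root: one must exclude that the \emph{other} root of the same quadratic is realized by a second admissible parameter pair, i.e.\ that two distinct admissible $(\beta,\gamma)$ produce the same $(C_2',C_3')$. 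The paper settles both points algebraically rather than topologically: by Vieta's formulas the companion root is $\gamma'=(\beta-\gamma)/(\beta\gamma-1)$ with companion $\beta'=\frac{\gamma'(2+C_2')-C_2'}{1+\gamma'}$, and assuming $\gamma'>0$ and $\beta'>0$ forces $(1-\beta)/(\beta\gamma-1)>0$, which is impossible because $\beta-1=e^{-(\mu-\lambda)t}-1$ and $\gamma-1=\lambda/\mu-1$ always share the same sign when $\lambda\neq\mu$. Hence at most one root of the quadratic is admissible; and since $\beta>0$ gives $\gamma(2+C_2')-C_2'>0$, the admissible root is the larger one, which --- combined with the identity $2C_2'^2+2C_2'-C_3'+2=\frac{2(\beta-1)(\beta\gamma-1)}{(\gamma-1)^2}>0$ fixing the sign of the leading coefficient --- yields the stated $+$ sign. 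Supplying this Vieta/sign-coupling argument is what your proposal is missing; without it, the branch formula, which carries the entire identifiability content of the proposition, remains unestablished.
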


\begin{proof}[Proof of Proposition \ref{L:InvertLength}]

Recall from  \cite[Appendix]{fan2020statistically} that  the probability generating functions \[G_M(z,t):=\sum_{j=0}^{\infty}p_{Mj}(t)z^j=\mathbb{E}_M[z^{|\mathcal{I}_t|}]\] 
of the length process are given by
\begin{equation}\label{GenFcn_length}
G_M(z,t)=\left[\frac{1-\beta_t -z(\gamma-\beta_t)}{1-\beta_t\gamma -\gamma z(1-\beta_t)}\right]^M
\end{equation}
for  $M\in \ZZ_+$, $t>0$ and $z\in (0,\,\mu/\lambda)$,
where 
\begin{equation*}
\beta_t:= e^{-(\mu-\lambda)t},\quad \gamma:=\frac{\lambda}{\mu}.
\end{equation*}

Write $\beta = \beta_t$ for simplicity.
Taking the logarithm of \eqref{GenFcn_length} gives a linear function in $M$, and 
we then differentiate both sides with respect to $z$ to get rational functions in $z$. These give
 the following system of three equations: 
\begin{eqnarray*}
C_1=& \beta M\\
C_2=&  \frac{\beta(2 \gamma - \beta \gamma - \beta)}{1 - \gamma} M \\
C_3=& \frac{2 \beta(\beta^2 \gamma^2 + \beta^2 \gamma + \beta^2 - 3 \beta \gamma^2 - 3 \beta \gamma + 3 \gamma^2) } {(1 - \gamma)^2} M.
\end{eqnarray*}
After substituting in $C_1 = \beta M$, the second equation can be used to express $\beta$ in terms of $\gamma$ and $C_2'$:
$$\beta = \frac{\gamma(2 + C_2') - C_2'}{1 + \gamma}.$$
Substituting $\beta$ into the third equation yields a quadratic in $\gamma$:
$$\gamma^2(2C_2'^2 + 2C_2' -C_3'+ 2) + 2\gamma(C_2'^2 - C_2' - C_3') + 2C_2'^2 - C_3' = 0.$$
This quadratic must have at least one solution $\gamma > 0$ whose corresponding $\beta$ is greater than $0$ since those are the parameters for the indel process. If the quadratic has only one unique root, then there is no issue with identifying $\gamma$ and $\beta.$ Hence, we consider the case that the quadratic has two distinct roots.
Let $\gamma'$ be the root of the quadratic that is not equal to $\gamma$. Let $\beta'$ correspond to $\gamma'$, i.e.
$$\beta' = \frac{\gamma'(2 + C_2') - C_2'}{1 + \gamma'}.$$
We claim that either $\gamma' \le 0$ or $\beta' \le 0$, thereby showing that the parameters $\{\gamma, \beta\}$ can be identified. Assume that $\gamma' > 0$ and $\beta' > 0$ for the sake of contradiction.
Using the sum of the roots of the quadratic, we have
$$\gamma' = \frac{-2(C_2'^2 - C_2' - C_3')}{2 C_2'^2 + 2C_2' - C_3' + 2} - \gamma = \frac{\beta - \gamma}{\beta\gamma - 1}.$$
Since $\beta' > 0$, we must have $\gamma'(2 + C_2') - C_2' > 0$, which is
$-\frac{(\beta-1)\beta(1 + \gamma)}{\beta \gamma-1} > 0.$ This occurs if and only if 
$\frac{1 - \beta }{\beta \gamma-1} > 0$.

Recalling that $\beta = e^{-(\mu - \lambda)t}$ and $\gamma = \lambda / \mu$, we see that $\beta - 1$ and $\gamma - 1$ have the same sign. Therefore, $(1 - \beta) / (\beta \gamma -1)$ must be less than or equal to 0.

Next, we show that $\gamma$ is equal to the largest root $\gamma''$ of the quadratic. Indeed, since $\beta > 0$, we deduce $\gamma(2+ C_2') - C_2' > 0$, and hence $\gamma''(2 + C_2') - C_2' > 0$. But then $\beta'' > 0$. From the above, we conclude $\gamma = \gamma''.$ Thus,
\begin{align*}
    \gamma = \gamma'' &= 
\frac{ -2(C_2'^2 - C_2' - C_3') + \sqrt{4(C_2'^2-C_2'-C_3')^2 - 4(2C_2'^2 + 2C_2' - C_3'+2)(2C_2'^2-C_3')}}{2(2C_2'^2+2C_2'-C_3'+2)} \\&= 
\frac{\sqrt{-(C_2'+1)^2 \left(3 C_2'^2-2 C_3'\right)}-C_2'^2+C_2'+C_3'}{2 C_2'^2+2 C_2'-C_3'+2},
\end{align*}
where we used the fact that 
$$ 2C_2'^2 + 2 C_2' - C_3' + 2 = \frac { 2(\beta - 1)(\beta \gamma - 1)}{(\gamma - 1)^2}$$
is always positive.
\end{proof}

We can now prove Theorem \ref{T:Initial_M}.

\begin{proof}[Proof of Theorem \ref{T:Initial_M}]
Proposition \ref{L:InvertLength} 
identifies $\{e^{-(\mu-\lambda)t},\, \gamma:=\lambda/\mu\}$, and therefore $\{\mu t, \lambda t\}$, through explicit expressions. 
\end{proof}

\subsection{Identifying $\nu t$ and the 1-mer of $\vec{x}$, given $\{|\vec{x}|,\lambda t,\mu t,\,\pi_0\}$}

The $1$-mer of a sequence is a vector in $\mathbb{Z}_+^2$ whose first and second entries are the numbers of 1's and 0's in the sequence. In \cite{allman2017statistically}, $k$-mers have been used in phylogeny inference.

Suppose $(a,b)\in \mathbb{Z}_+^2$ is the 1-mer of the initial sequence $\vec{x}$, both of which we do not know. However we know the length of $\vec{x}$ is $M$ because $M$ is identified (by our results in Section \ref{S:length}), i.e. we know that $|\vec{x}|=a+b=M$. Therefore, the question here is about identifying $\{\nu t,\,a\}$. 

A key advantage of using the vector of $1$-mer counts under TKF91 is that it still follows a Markov process.  The probability  generating function of $(X_t,Z_t)$ under $\P_{(a,b)}$ is defined by
\[G(z_1,z_2,\,t):=G_{(a,b)}(z_1,z_2,\,t):=\mathbb{E}_{(a,b)}[z_1^{X_t}\,z_2^{Z_t}]\]
for all $z_1,\,z_2\in \RR$ such that the right hand side exists in $\RR$. 
The transition rate $Q_{(i,j),(k,\ell)}$ from $(i,j)$ to $(k,\ell)$ is given by, for all $(i,j)\in \ZZ_+^2$,
\begin{equation}\label{imer_rates}
    Q_{(i,j),(k,\ell)}=
    \begin{cases}
    \lambda(i+j) \pi_1\quad &\text{if }(k,\ell)=(i+1,j) \qquad \qquad \text{a 1 is inserted}\\
    \mu i \quad & \text{if }(k,\ell)=(i-1,j)  \qquad \qquad \text{a 1 is deleted}\\
    \lambda(i+j) \pi_0 \quad & \text{if }(k,\ell)=(i,j+1) \qquad \qquad\text{a 0 is inserted}\\
    \mu j \quad & \text{if }(k,\ell)=(i,j-1) \qquad \qquad \text{a 0 is deleted}\\
    \nu\pi_1\,j \quad & \text{if }(k,\ell)=(i+1,j-1) \qquad \text{ a 0 is switched to 1}\\
    \nu\pi_0\,i \quad & \text{if }(k,\ell)=(i-1,j+1)\qquad \text{ a 1 is switched to 0}\\
    -(\lambda+\mu+\nu)(i+j) \quad & \text{if }(k,\ell)=(i,j)\\  
    0 \quad & \text{otherwise } 
    \end{cases}
\end{equation}

\medskip

Let $G_{(a,b)}(z_1,z_2,t) := \mathbb{E}_{(a,b)}[z_1^{X_t}z_2^{Z_t}]$ be the probability generating function. As in Remark \ref{Rk:Ck}, the partial derivatives of $G_{(a,b)}$ 
and the moments of $(X_t,Z_t)$ are related as follows:
\begin{equation} \label{eqn:Gab-11t}
     E_{i,j} := \frac{\partial^{(i+j)} G_{(a,b)}(z_1,z_2,t)}{\partial^{(i)} z_1\,\partial^{(j)} z_2} \bigg|_{z_1 = z_2 = 1} = \E_{(a,b)} \left[ \frac{|X_t|!}{(|X_t| - i)!}\, \frac{|Z_t|!}{(|Z_t| - j)!}\,1_{\{|X_t| \geq i,\,|Z_t|\geq j\}} \right].
 \end{equation} 

The statement of Proposition \ref{prop:1merpi0} says that $\nu t$ and $a$ are identifiable, assuming other parameters are estimable or known in advance.  

\medskip

\begin{prop}[PGF of 1-mer] \label{prop:1mer}
The probability generating function 
$$G_{(a,b)}(z_1, z_2; t):= \mathbb{E}_{(a,b)}[z_1^{X_t}z_2^{Z_t}]$$ 
for the 1-mer at time $t$ is an explicit function in $(a,b,\,z_1,z_2,\,\mu t, \lambda t, \nu t, \pi)$,
given by
$$G_{(a,b)}(z_1, z_2; t) = r_1^a r_2^b,$$
where, under the  substitutions
$y_1 = z_1 - z_2$ and $y_2 = \pi_1 z_1 + \pi_0 z_2$, 
\begin{align}
    r_1 &= \frac{-\mu  e^{\lambda t}+\mu  e^{\mu t}+y_2 \left(\mu 
   e^{\lambda t}-\lambda  e^{\mu t}\right)+\pi _0 y_1 (\mu -\lambda )e^{-\nu t}}
  {\left(\mu -\lambda  y_2\right)e^{\mu  t}+\lambda  \left(y_2-1\right) e^{\lambda 
   t}}
   \\
   r_2 &= \frac{-\mu e^{\lambda  t}+\mu  e^{\mu  t}+ y_2(\mu e^{\lambda  t}-\lambda   e^{\mu  t}) +\pi_1 y_1 (\lambda -\mu ) e^{-\nu t}}
   { \left(\mu -\lambda  y_2\right)e^{\mu  t}+\lambda  \left(y_2-1\right) e^{\lambda 
   t}}.
\end{align}
\end{prop}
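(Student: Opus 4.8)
The plan is to exploit the \emph{branching} structure of the $1$-mer count process, which reduces the computation of $G_{(a,b)}$ to two coupled scalar ODEs, and then to apply the linear change of variables $(y_1,y_2)$ indicated in the statement, which decouples that system.

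First I would establish the multiplicative property
\[
G_{(a,b)}(z_1,z_2;t)=r_1(t)^{a}\,r_2(t)^{b},\qquad r_1:=G_{(1,0)},\quad r_2:=G_{(0,1)},
\]
where $r_1$ and $r_2$ are the probability generating functions of the $1$-mer count started from a single $1$ and a single $0$, respectively. This is immediate from Definition~\ref{Def_BiINDEL}: each digit is substituted, deleted, and gives birth independently of all others, and every inserted digit subsequently evolves as an independent copy of the process. Consequently the $1$-mer count of $\mathcal{I}_t$ started from $\vec{x}$ is the sum, over the $a+b$ initial digits, of the independent $1$-mer counts of their descendant lineages (the left-to-right ordering being irrelevant to counts), so the joint PGF factorizes.

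Next I would derive the backward Kolmogorov equations for $r_1,r_2$ from the transition rates in \eqref{imer_rates} (equivalently, by a first-event decomposition for a continuous-time Markov branching process). This gives the coupled Riccati system
\[
\frac{d r_1}{dt}=\lambda\pi_1 r_1^2+\lambda\pi_0 r_1 r_2+\nu\pi_0 r_2+\mu-(\lambda+\mu+\nu\pi_0)r_1,
\]
together with the symmetric equation for $r_2$ (swap the roles of $\pi_0,\pi_1$ and of $r_1,r_2$), with initial data $r_1(0)=z_1$, $r_2(0)=z_2$. The crux is the substitution $w_1:=r_1-r_2$ and $w_2:=\pi_1 r_1+\pi_0 r_2$, so that $w_1(0)=y_1$ and $w_2(0)=y_2$. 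Subtracting and combining the two Riccati equations shows that $w_2$ obeys the \emph{autonomous} scalar Riccati equation $\tfrac{d}{dt}w_2=\lambda(w_2-1)(w_2-\mu/\lambda)$, which decouples entirely from $w_1$, while $w_1$ then satisfies the \emph{linear} equation $\tfrac{d}{dt}w_1=[\lambda w_2(t)-(\lambda+\mu+\nu)]\,w_1$.

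I would solve the $w_2$ equation by separation of variables using its roots $1$ and $\mu/\lambda$, obtaining a ratio of $e^{\lambda t}$ and $e^{\mu t}$ terms; substituting this into the linear equation for $w_1$ and integrating with the corresponding integrating factor produces the extra factor $e^{-\nu t}$. Inverting the change of variables via $r_1=w_2+\pi_0 w_1$ and $r_2=w_2-\pi_1 w_1$ then yields the claimed closed forms, both over the common denominator $(\mu-\lambda y_2)e^{\mu t}+\lambda(y_2-1)e^{\lambda t}$, and $G_{(a,b)}=r_1^a r_2^b$ completes the proof. The main obstacle is purely the algebraic bookkeeping in this last stage: carrying out the two integrations and simplifying $r_1=w_2+\pi_0 w_1$, $r_2=w_2-\pi_1 w_1$ into the stated rational expressions. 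Since the target formulas are already given, one may instead shortcut the integration entirely and simply verify that the stated $r_1,r_2$ solve the Riccati system and match the initial conditions $r_1(0)=z_1$, $r_2(0)=z_2$; uniqueness for the ODE system then finishes the argument. I would nonetheless present the decoupling explicitly, as it is the conceptual reason the substitution $(y_1,y_2)$ linearizes the problem.
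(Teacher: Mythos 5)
Your proposal is correct, and it reaches the stated formulas by a route that is dual to, but organized quite differently from, the paper's. The paper first derives the forward Kolmogorov PDE for $G_{(a,b)}$ from the master equations (Lemma~\ref{L:1merPDE}, using the rates \eqref{imer_rates}) and then solves it by the method of characteristics; there the factorization $G_{(a,b)}=r_1^ar_2^b$ only \emph{emerges} at the end, because $u$ is constant along characteristics and equals $r_1^ar_2^b$ on the initial curve. You instead justify $G_{(a,b)}=r_1^ar_2^b$ a priori from the independence of the descendant lineages of the $a+b$ initial digits (a two-type Markov branching argument, valid here precisely because there is no immortal link and $1$-mer counts ignore ordering), and you replace the PDE by the backward Kolmogorov first-event ODEs for $r_1=G_{(1,0)}$ and $r_2=G_{(0,1)}$. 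The two formulations are equivalent: the paper's characteristic ODEs are exactly the time-reversal of your Riccati system, and your substitution $w_1=r_1-r_2$, $w_2=\pi_1 r_1+\pi_0 r_2$ is the same linear change of variables the paper applies to $(z_1,z_2)$ along characteristics --- your $\dot w_2=(w_2-1)(\lambda w_2-\mu)$ and $\dot w_1=[\lambda w_2-(\lambda+\mu+\nu)]w_1$ match the paper's $dy_2/ds=(y_2-1)(\mu-\lambda y_2)$ and $dy_1/ds=y_1(\lambda+\mu+\nu-\lambda y_2)$ up to the sign coming from the direction of the flow. I verified the key computations you assert: $\pi_1\dot r_1+\pi_0\dot r_2$ collapses to $\lambda w_2^2-(\lambda+\mu)w_2+\mu$ (the $\nu$-terms cancel), $\dot r_1-\dot r_2$ is linear in $w_1$ with the stated coefficient, and the claimed closed forms satisfy $r_1(0)=z_1$, $r_2(0)=z_2$ since at $t=0$ both numerators reduce to $(\mu-\lambda)(y_2+\pi_0y_1)$ and $(\mu-\lambda)(y_2-\pi_1y_1)$ over the denominator $\mu-\lambda$. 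Your verification shortcut (check that the stated $r_1,r_2$ solve the polynomial --- hence locally Lipschitz --- Riccati system with the right initial data, then invoke ODE uniqueness) is a fully rigorous way to finish. What your version buys: no PDE machinery, a conceptual explanation of the exponent structure $r_1^ar_2^b$ before any computation, and scalar ODEs throughout. What the paper's route buys: a self-contained derivation from the transition rates, with the boundary conditions of Lemma~\ref{L:1merPDE} recorded along the way, and a framework in which the degenerate case $\lambda=\mu$ is handled by the same characteristic computation at the end of the appendix.
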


Proposition \ref{prop:1mer} is proved in the Appendix, in Section~\ref{s:1-merAppendix}. In view of this proposition, 
there is a function $H_{(a,b)}(y_1,y_2;t)$ determined by considering $r_1^a r_2^b$ as a function of the independent variables $y_1$ and $y_2$ without consideration to their dependencies on $z_1$ and $z_2$.  As the mapping $(y_1,y_2) = (z_1-z_2,\pi_1 z_{1}+\pi_0z_2)$ is an invertible linear mapping, the equality of two hypothetical $G^{(1)}$ and $G^{(2)}$ induced by two parameter settings implies the equality of their respective $H^{(1)}$ and $H^{(2)}$.  It would remain to show that the parameters are uniquely determined by $H$.  Observe that the right-hand side of \eqref{eqn:Gab-11t} is equal to $H_{(a,b)}(0,1;t)$.

For completeness, we give an identifiability result with an explicit estimation method when $\pi_0$ is treated as known. This assumption is necessary for analytical reasons.  While it may be conjectured  that the parameters $(\pi_0, a, \nu t)$ are determined uniquely from $1$-mer count vectors at time $t$ and the known parameters, the solution is determined by a system of non-linear polynomial equations that we found analytically intractable; these equations are not needed in this paper, but we recorded them in the Appendix of \cite[Section \ref{sss:1-mer System}]{xue2024joint} for future research.

On the other hand, it is not unusual for the relative proportion of nucleotide labels to be known in substitution-only models including the F81, HKY, and GTR models; see \cite{felsenstein1981,hasegawa1985,tavare1986}.  The relative proportions might be explicitly assumed to satisfy equality constraints such as the F81, or the relative proportions are viewed as ``free parameters'' which can be determined experimentally.  In view of this information, it is reasonable to try to infer the rate of substitution $\nu t$.  The problem suggested by Proposition \ref{prop:1merpi0} is a generalization of this idea to multiple mutation-related parameters.  A notable weakness to this approach is that the experimental inference of substitution probabilities from relative proportions is assumed to be at or near stationarity.

\begin{prop} \label{prop:1merpi0}
    Suppose we know the values of $\lambda t \in (0,\infty)$, $\mu t \in (0,\infty)$, the ancestral sequence length $|\vec{x}| = M \in \mathbb{N}$, and the parameter $\pi_0 \in [0,1]$.  Then $\nu t$ and $a$ are identifiable, with values determined explicitly from $\{E_{1,0},E_{0,1},E_{2,0}, E_{1,1}, E_{0,2}\}$ and the known parameters $\{M=|\vec{x}|,\,pi_0,\,\mu t,\,\lambda t\}$.
\end{prop}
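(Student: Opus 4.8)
The plan is to exploit the fact that under the binary indel process the descendant populations of distinct ancestral sites evolve independently, so that the full $1$-mer generating function factorizes as $G_{(a,b)}(z_1,z_2;t) = r_1^{\,a}\, r_2^{\,b}$ (Proposition \ref{prop:1mer}), where $r_1 = G_{(1,0)}$ and $r_2 = G_{(0,1)}$ are the single-site generating functions of an ancestral $1$ and an ancestral $0$ respectively. Writing $\ln G_{(a,b)} = a\ln r_1 + b\ln r_2$ and differentiating at $z_1=z_2=1$ shows that every factorial \emph{cumulant} of $(X_t,Z_t)$ is an affine function of $a$ (with $b=M-a$): denoting by $\rho^{(1)}_{ij}$ and $\rho^{(2)}_{ij}$ the order-$(i,j)$ single-site factorial cumulants coming from $r_1$ and $r_2$, one has $\kappa_{ij} = a\,\rho^{(1)}_{ij} + (M-a)\,\rho^{(2)}_{ij}$. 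The five prescribed quantities $\{E_{1,0},E_{0,1},E_{2,0},E_{1,1},E_{0,2}\}$ determine the cumulants $\kappa_{1,0},\kappa_{0,1},\kappa_{2,0},\kappa_{1,1},\kappa_{0,2}$ through the standard relations (e.g. $\kappa_{2,0}=E_{2,0}-E_{1,0}^2$, $\kappa_{1,1}=E_{1,1}-E_{1,0}E_{0,1}$), so all of them are known from the data, while the single-site cumulants $\rho^{(\cdot)}_{ij}$ are explicit functions of the lone unknown $\nu t$ once $\mu t,\lambda t,\pi_0$ are fixed (the first two being available from Theorem \ref{T:Initial_M}).

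First I would record the first-order cumulants explicitly. A short computation from $r_1,r_2$ gives, with $\beta=e^{-(\mu-\lambda)t}$, the single-site means of the number of $1$'s, $h_1 = \beta\pi_1 + \pi_0 e^{-(\mu+\nu)t}$ and $h_0 = \beta\pi_1 - \pi_1 e^{-(\mu+\nu)t}$, whence $E_{1,0}=\E_{(a,M-a)}[X_t] = M\beta\pi_1 + (a-M\pi_1)\,e^{-(\mu+\nu)t}$ and symmetrically for $E_{0,1}$. Since the coefficient $e^{-(\mu+\nu)t}$ is strictly positive, the first-order equation pins down only the product $(a-M\pi_1)e^{-(\mu+\nu)t}$, not $a$ and $\nu t$ separately. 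To separate them I would pass to the ``imbalance'' variable $D_t := X_t - Z_t$, which again decomposes as an independent sum over ancestral sites, and use the pair $\E[D_t] = a d_1 + (M-a)d_0$ and $\mathrm{Var}(D_t) = a v_1 + (M-a)v_0$, where $d_i,v_i$ are the single-site mean and variance of the imbalance and both left-hand sides are recoverable from the five given moments via $\mathrm{Var}(D_t)=\mathrm{Var}(X_t)+\mathrm{Var}(Z_t)-2\mathrm{Cov}(X_t,Z_t)$. One checks $d_1-d_0 = 2e^{-(\mu+\nu)t}\neq 0$, so the mean equation expresses $a$ affinely in terms of $e^{-\nu t}$, and eliminating $a$ between the mean and variance equations yields a single scalar equation in $e^{-\nu t}$.

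The main obstacle is this last step: solving that scalar equation and proving its solution is unique in the feasible range $e^{-\nu t}\in(0,1)$. Because $r_1,r_2$ are affine in $e^{-\nu t}$, the single-site variances $v_i$ are quadratic in $e^{-\nu t}$, so after clearing denominators the elimination produces a low-degree polynomial equation in $e^{-\nu t}$; as in the proof of Proposition \ref{L:InvertLength}, I expect one genuine root alongside possible spurious ones, and the work is to show the spurious roots violate the sign and range constraints ($0<e^{-\nu t}<1$, $0\le a\le M$, $\lambda\neq\mu$), leaving a unique admissible solution. Once $e^{-\nu t}$ is identified, $\nu t$ follows, and substituting back into the mean-imbalance equation yields $a$ explicitly, giving the formulas \eqref{eq:aSol} and \eqref{eq:nuSol}. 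The borderline case $a=M\pi_1$, where the first-order equation carries no information about $\nu t$, is handled separately: there the variance equation alone determines $\nu t$, since the imbalance variance still decays with the substitution rate. Finally, I would note that $\pi_0$ must be assumed known precisely because promoting it to a third unknown turns this clean two-equation elimination into the nonlinear polynomial system flagged as analytically intractable in the surrounding discussion.
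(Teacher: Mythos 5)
Your proposal is, at its core, the same argument as the paper's: both start from the factorization $G_{(a,b)} = r_1^a r_2^b$ of Proposition \ref{prop:1mer}, take logarithms so that factorial cumulants are affine in $a$ (with $b = M-a$), observe that first-order data pin down only the product $(a - M\pi_1)e^{-(\mu+\nu)t}$, and then use one second-order moment combination to produce a quadratic in $e^{-\nu t}$. Your first-moment formula $E_{1,0} = M\beta\pi_1 + (a - M\pi_1)e^{-(\mu+\nu)t}$ is correct and is exactly the paper's equation \eqref{eq:solvea} in different clothing. The one real difference is the choice of second-order direction: the paper differentiates twice in $y_1$ at fixed $y_2$ (the operator $\pi_0\partial_{z_1} - \pi_1\partial_{z_2}$), which isolates the combination $\pi_0^2E_{2,0} - 2\pi_0\pi_1E_{1,1} + \pi_1^2E_{0,2}$ and kills all cross terms at the outset; your imbalance $D_t = X_t - Z_t$ corresponds to $2\partial_{y_1} + (\pi_1 - \pi_0)\partial_{y_2}$, so $\mathrm{Var}(D_t)$ picks up the cross cumulant $\partial^2_{y_1 y_2}\ln H$ and a pure-$y_2$ cumulant in addition to $\partial^2_{y_1}\ln H$. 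Your route still works, but only because of a fact you did not state: the cross cumulant evaluates at $(y_1,y_2)=(0,1)$ to a known multiple of $(a - M\pi_1)e^{-(\mu+\nu)t}$, already identified from the means, and the pure-$y_2$ cumulants depend only on $(M,\mu t,\lambda t)$ (this is verified in the paper's appendix, Section \ref{sss:1-mer System}); after subtracting these known quantities, your elimination lands on essentially the same quadratic as the paper's.

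The substantive gap is that you defer exactly the step that carries the identifiability content: uniqueness of the admissible root of that quadratic. You budget for a root-by-root exclusion using the constraints $0 < e^{-\nu t} < 1$ and $0 \le a \le M$, but you do not carry it out, and nothing that elaborate is needed. In the paper, after substituting the first-moment relation, the quadratic $Ae^{-2\nu t} + Be^{-\nu t} + C = 0$ has $A = -M\pi_1^2 + M\pi_1(\pi_1^2 - \pi_0^2) = -M\pi_0\pi_1 < 0$ and $C = -e^{2\mu t}\left[\pi_0^2E_{2,0} - 2\pi_0\pi_1E_{1,1} + \pi_1^2E_{0,2} - (\pi_0E_{1,0} - \pi_1E_{0,1})^2\right] > 0$, the latter because the bracketed quantity equals the second $y_1$-directional log-derivative $-(a\pi_0^2 + b\pi_1^2)e^{-2(\mu+\nu)t} < 0$. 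Whatever the sign of $B$, the coefficient sequence has exactly one sign change, so Descartes' rule of signs gives exactly one positive root, $(-B - \sqrt{B^2 - 4AC})/(2A)$; positivity alone suffices, with no appeal to $e^{-\nu t} < 1$, to the range of $a$, or to $\lambda \ne \mu$. This also makes your separate treatment of the borderline case $a = M\pi_1$ unnecessary: it merely forces $B = 0$ (just as $\pi_0 = \pi_1 = 1/2$ does), and the same root formula applies. In short: right decomposition, right equations, but the uniqueness analysis you postpone is the proposition's actual content, and it admits a two-line sign argument rather than the case analysis you anticipate.
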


\begin{proof}
For $i.j \in \ZZ_+$, we let $H_{(a,b)}(y_1,y_2;t)=G_{(a,b)}(z_1,z_2,t)$ where $(y_1,y_2) = (z_1-z_2,\pi_1 z_{1}+\pi_0z_2)$.
By chain rule,  
\begin{align*}
        \frac{\partial H_{(a,b)}(0, 1;t)}{\partial y_1} &= \pi_0 \frac{\partial G_{(a,b)}(1,1;t)}{\partial z_1} - \pi_1 \frac{\partial G_{(a,b)}(1,1;t)}{\partial z_2} \\
        \frac{\partial^2 H_{(a,b)}(0, 1;t)}{\partial y_1^2} &= \pi_0^2 \frac{\partial^2 G_{(a,b)}(1, 1;t)}{\partial z_1^2} - 2\pi_0 \pi_1 \frac{\partial^2 G_{(a,b)}(1, 1;t)}{\partial z_1 \partial z_2} + \pi_1^2 \frac{\partial^2 G_{(a,b)}(1, 1;t)}{\partial z_2^2} \\
        \frac{\partial^3 H_{(a,b)}(0, 1;t)}{\partial y_1^3} &= \pi_0^3 \frac{\partial^3 G_{(a,b)}(1, 1;t)}{\partial z_1^3} - 3\pi_0^2 \pi_1 \frac{\partial^3 G_{(a,b)}(1, 1;t)}{\partial z_1^2 \partial z_2} \\
        &+ 3\pi_0 \pi_1^2 \frac{\partial^3 G_{(a,b)}(1, 1;t)}{\partial z_1 \partial z_2^2} - \pi_1^3 \frac{\partial^3 G_{(a,b)}(1, 1;t)}{\partial z_1^3}.
    \end{align*}
The left-hand side simplifies as 
\begin{align}
\frac{\partial H_{(a,b)}(0, 1;t)}{\partial y_1} =\frac{\partial \ln H_{(a,b)}(0, 1;t)}{\partial y_1} =& a\pi_0 e^{-\nu t} e^{-\mu t} - (M - a)\pi_1 e^{-\nu t}e^{-\mu t} \notag\\
=& ae^{-\nu t} e^{-\mu t} - M \pi_1 e^{-\nu t} e^{-\mu t}   
\end{align}
and \begin{align*}\frac{\partial^2 H_{(a,b)}(0, 1;t)}{\partial y_1^2} &= \frac{\partial^2 \ln H_{(a,b)}(0, 1;t)}{\partial^2 y_1} + \left(\frac{\partial H_{(a,b)}(0, 1;t)}{\partial y_1}\right)^2 \\
&= -a \pi_0^2 e^{-2\nu t}e^{-2\mu t} - (M - a)\pi_1^2 e^{-2\nu t}e^{-2\mu t} + \left(\pi_0 E_{1,0} - \pi_1 E_{0,1}\right)^2
\end{align*}

The first equation asserts that 
\begin{equation}\label{eq:solvea}
ae^{-\nu t} - M \pi_1 e^{-\nu t} = e^{\mu t}(\pi_0 E_{1,0} - \pi_1 E_{0,1}),    
\end{equation}
where we recalled \eqref{eqn:Gab-11t}.

The second equation asserts that 
\begin{align*} 
    ae^{-2\nu t}(\pi_1^2 - \pi_0^2) - M \pi_1^2 e^{-2\nu t} = e^{2\mu t}\left[\pi_0^2 E_{2,0} - 2\pi_0\pi_1 E_{1,1} + \pi_1^2 E_{0,2} - \left(\pi_0 E_{1,0} - \pi_1 E_{0,1}\right)^2\right]
\end{align*}
which gives
\begin{align}\label{eq:secondSub}
    &e^{-2\nu t}(-M\pi_1^2 + M\pi_1 (\pi_1^2 - \pi_0^2)) + e^{-\nu t} e^{\mu t}(\pi_1^2 - \pi_0^2)(\pi_0 E_{1,0} - \pi_1 E_{0,1}) \notag\\
    &- e^{2\mu t}\left[\pi_0^2 E_{2,0} - 2\pi_0\pi_1 E_{1,1} + \pi_1^2 E_{0,2} - \left(\pi_0 E_{1,0} - \pi_1 E_{0,1}\right)^2\right] = 0.
\end{align} 
Equation \eqref{eq:secondSub} is a quadratic equation for $e^{-\nu t}$, of the form $Ae^{-2\nu t}+Be^{-\nu t}+C=0$ where  the leading term is
$$A=-M\pi_1^2 + M\pi_1 (\pi_1^2 - \pi_0^2) <0$$  
and the last term  is
$$C=- e^{2\mu t}\left[\pi_0^2 E_{2,0} - 2\pi_0\pi_1 E_{1,1} + \pi_1^2 E_{0,2} - \left(\pi_0 E_{1,0} - \pi_1 E_{0,1}\right)^2\right]>0.$$
Regardless of the sign of the middle term, there is exactly one sign change. Therefore, by Descartes' rule of signs, this quadratic  equation has exactly one positive root which is $\frac{-B-\sqrt{B^2-4AC}}{2A}$. That is,
\begin{align}
    e^{-\nu t} &= \frac{1}{2(-M\pi_1^2 +M\pi_1(\pi_1^2 - \pi_0^2)} \bigg[ -e^{\mu t} (\pi_1^2-\pi_0^2)(\pi_0 E_{1,0} - \pi_1 E_{0,1}) \notag \\
    &- \bigg(e^{2\mu t} (\pi_1^2-\pi_0^2)^2(\pi_0 E_{1,0} - \pi_1 E_{0,1})^2 \notag \\
    \label{nu_sol}
    &+ 4(-M\pi_1^2 + M\pi_1(\pi_1^2 - \pi_0^2)e^{2\mu t} \left[\pi_0^2 E_{2,0} - 2\pi_0 \pi_1 E_{1,1} + \pi_1^2 E_{0,2} - (\pi_0 E_{1,0} - \pi_1 E_{0,1})^2 \right]\bigg)^{1/2} \bigg]
\end{align}

We then plug the solution \eqref{nu_sol} into \eqref{eq:solvea} to solve for $a$, obtaining that
\begin{equation} \label{eq:aSol}
    a= M\pi_1 + e^{(\nu + \mu)t}(\pi_0 E_{1,0} - \pi_1 E_{0,1}).
\end{equation} Further, equation \eqref{eq:secondSub} has a simple solution when  $\pi_0 = \pi_1 = 1/2$.  It has $A = -M\pi_1^2$, $B = 0$, and $C$ as before.  We then have \begin{align}
    e^{-\nu t} &= -\sqrt{-\frac{C}{A}} \nonumber \\
    &= -\sqrt{\frac{- e^{2\mu t}\left[\pi_0^2 E_{2,0} - 2\pi_0\pi_1 E_{1,1} + \pi_1^2 E_{0,2} - \left(\pi_0 E_{1,0} - \pi_1 E_{0,1}\right)^2\right]}{-M\pi_1^2 + M\pi_1(\pi_1^2 - \pi_0^2)}}. \label{eq:nuSol}
\end{align}
Therefore, $\nu t$ and $a$ are identified in \eqref{eq:nuSol} and \eqref{eq:aSol}.
\end{proof} 

We are now ready to prove Theorem \ref{T:Initial_1mer}.

\begin{proof}[Proof of Theorem \ref{T:Initial_1mer}]
    In the statement of the Theorem, let the first- and second-order raw moments of $X_t$ and $Z_t$ be given.  The numbers $E_{i,j}$ are determined by these moments by Proposition \ref{prop:1merpi0}.  From these and the known values, the expressions in \eqref{eq:aSol} and \eqref{eq:nuSol} are then the unique expressions for $a$ and $e^{-\nu t}$.
\end{proof}

We end this subsection with a remark that helps explain why we assumed that $\pi_0$ is known rather than jointly identifying the triple $\{\pi_0,\,\nu t,\,a\}$. 

\begin{remark}\rm[Impossibility of joint identification for $(\pi_0, \nu t,  \vec{x})$]\label{Rk:Impossibility_pi0}
It is impossible to identify $(\pi_0, \nu t, \vec{x})$ on $(0,1)\times (0,\infty) \times \{0,1\}$ using only the distribution of the indel process at a single time $t\geq 0$.  This can readily be seen from the $2$-state substitution-only process (i.e. $\mu=\lambda=0$ and $|\vec{x}|=1$) as follows:
 the map $\Phi:\,(0,1)\times (0,\infty) \times\{0,1\} \to [0,1]^2$ defined by
\[
(\pi_0,\,\nu t,\,0) \mapsto \left(\mathbb{P}_0(\mathcal{I}_t=0) ,\;\mathbb{P}_0(\mathcal{I}_t=1) \right)=
\left(\pi_0+ \pi_1e^{-\nu t} ,\;\pi_1(1 - e^{-\nu t})\right)
\]
and
\[
(\pi_0,\,\nu t,\,1) \mapsto \left(\mathbb{P}_1(\mathcal{I}_t=0) ,\;\mathbb{P}_1(\mathcal{I}_t=1) \right)=
\left(\pi_0(1-e^{-\nu t}) ,\;\pi_1 + \pi_0 e^{-\nu t} \right)
\]
is not injective.
\end{remark}

\subsection{Identifying $\vec{x}$ given  $\{|\vec{x}|,\lambda t,\mu t,\nu t,\,\pi_0\}$}

Theorem \ref{T:Initial_seq} gives a constructive proof of initial-state identifiability, as long as we know $\pi_0$, from the law of the indel process at time $t$ and from previously identified parameters $\{|\vec x|, \lambda t, \mu t, \nu t\}$.  The proof of Theorem \ref{T:Initial_seq} is simialr to that of \cite[Lemma 2]{fan2020statistically}.

\begin{proof}[Proof of Theorem \ref{T:Initial_seq}]
For any time $t$, let $\beta_{t} = e^{(\lambda - \mu)t}$, and let
\begin{gather*} 
    \eta(t) =
\frac{1- \beta_{t}}{1-\gamma \beta_{t}}= \P(|\mathcal{I}_{t}|=0\;\big|\;|\mathcal{I}_0|=1), \\ 
\psi(t) = e^{-(\mu+\nu)t}, \quad \phi_j=\frac{
-\psi(t)+ 1 - \eta(t)}{1-\eta(t)}.
\end{gather*}
Let 
$$ \eta_j = \eta(s_j), \quad \psi_j = \psi(s_j), \quad \phi_j = \phi(s_j),$$
where $t$ here refers to the (unknown) height of the tree. By Lemma~\ref{L:theta_v} in the Appendix,
\begin{equation}\label{E:Solve for x}
p^{\sigma}_{\vec{x}}(s_j)= \pi_{\sigma}\,\phi_j\,\big[1-\eta_j^{M}\big]\,+\,\psi_j\sum_{i=1}^{M}1_{\{x_i=\sigma\}}\,\eta_j^{i-1},
\end{equation}
for all $\sigma\in \{0,1\}$  and $1\leq j\leq M$,
where $M = |\vec{x}|$. 

Our goal is to solve~\eqref{E:Solve for x} for  the unknown vector $\vec{x}=(x_i)\in \{0,1\}^M$.
System~\eqref{E:Solve for x} is equivalent to the matrix equation 
	\begin{equation}\label{E:Solve for x2}
	U = \Psi\,V\, Y^{\vec{x}},
	\end{equation}
	where 
	\begin{enumerate}
		\item 	$Y^{\vec{x}}$ is the unknown $M\times 2$ matrix whose entries are  $1_{\{x_j=\sigma\}}$, and
		\item $\Psi$ is the $M\times M$ diagonal matrix whose diagonal entries are $\{\psi_j\}_{j=1}^{M}$,
		\item $U$ is the $M\times 2$ matrix with entries 
		\begin{equation*}
		U_{j, \sigma}=p^{\sigma}_{\vec{x}}(s_j)- \pi_{\sigma}\,\phi_j\,\big[1-\eta_j^{M}\big].
		\end{equation*}
		\item $V$ is the $M\times M$ Vandermonde matrix 
		\begin{equation*}
		V=
		\begin{pmatrix}
		1&\eta_1&\eta_1^2&\dots&\eta_1^{M-1}\\
		1&\eta_2&\eta_2^2&\dots&\eta_2^{M-1}\\
		&&\vdots\\
		1&\eta_{M}&\eta_{M}^2&\dots &\eta_{M}^{M-1}\\
		\end{pmatrix}.
		\end{equation*}
	\end{enumerate}
	It is well-known that the Vandermonde matrix $V$ is invertible (see, e.g.,~\cite[Theorem 1]{gautschi1962inverses}), so we can solve the system~\eqref{E:Solve for x2} to obtain
		\begin{equation}\label{E:Solve for x3}
		 Y^{\vec{x}} = V^{-1}\,\Psi^{-1}\,U.
		\end{equation}
	Sequence $\vec{x}\in \{0,1\}^{M}$ is uniquely determined by $ Y^{\vec{x}} $. Hence, from~\eqref{E:Solve for x3}, we get an explicit inverse for the mappings $\Phi_{s_1,\cdots,s_M}$ defined in~\eqref{Phi^3}. 
\end{proof}

\subsection{Identifiability of the pairwise distance $t_{uv}$}
\label{sec:pairwise}

Fix an arbitrary pair of leaves $\{u,v\}$ and let $w$ be the most recent common ancestor of $u$ and $v$. Then $t_{uv}=t_{wu}+t_{wv}$, where $t_{uv}$ is the distance between $u$ and $v$. See Figure \ref{fig:fork}.
\FloatBarrier
\begin{figure}[h!]
    \centering
    \includegraphics[scale=0.5]{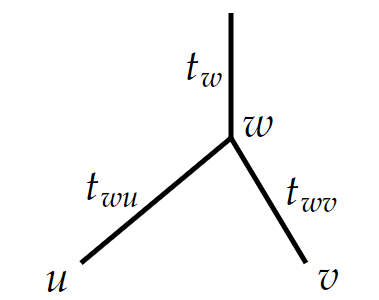}
    \caption{The subtree of the phylogeny $T$ with two leaves $\{u,v\}$ and the root, where $w$ is the most recent common ancestor of $u$ and $v$.}
    \label{fig:fork}
\end{figure}
\FloatBarrier

Let $t_a$ be the distance of an arbitrary node $a\in T$ from the root, $L^a$ be the sequence length at $a$, and $(L^a-Z^a, Z^a)$ be the 1-mer (number of 1's and 0's) at $a$. 
Our results in Section \ref{S:length} imply that  $\{\vec{x},\,\mu t_u,\lambda t_u, \nu t_u, \pi_0 \}$ can be identified for each leaf $u$. However, this does not yet tell us what $\mu t_{uv}$ is. For this, we need to use the \textit{correlation} between the two leaf sequences. 

Let $M=|\vec{x}|$ be the length of the ancestral sequence and  $(L^u,\,L^v)$ be the sequence lengths at $u$ and $v$.  The joint distribution of $(L^u,\,L^v)$ under $\mathbb{P}_M$ can be computed as follows. Applying the Markov property at time $t_{w}$ followed by conditional independence, we obtain that for all $(i,j)\in \mathbb{Z}_+^2$ that
\begin{equation}\label{Length_jointpmf}
\mathbb{P}_M\left(\,(L^u,\,L^v)=(i,j)\,\right)=\sum_{k\in \ZZ_+}p^{t_w}_{M}(k)\,p^{t_u-t_w}_{k}(i)\,p^{t_v-t_w}_{k}(j),
\end{equation}
where  $p^s_{M}(k):=\mathbb{P}_M\left(L_s=k\right)$ is the probability mass function of the length $L_s$ of a single sequence under the indel process at time $s$.

We now identify $\mu t_{uv}$ by computing the covariance of $(L^u,\,L^v)$ under $\P_{M}$, where $M$ is an arbitrary positive integer. This covariance is
\begin{equation}\label{Def:Cov_length}
Cov_M(L^u,L^v):= \mathbb{E}_{M}[ (L^u-m^u)(L^v-m^v)],
\end{equation}
where $m^u=\mathbb{E}_{M}[L^u]=M e^{-(\mu-\lambda)t_u}$ and $m^v=\mathbb{E}_{M}[L^v]=M e^{-(\mu-\lambda)t_v}$ are the means.  Now, we prove Theorem \ref{thm:phylogeny}.

\begin{proof}[Proof of Theorem \ref{thm:phylogeny}]
By our results in Section \ref{S:length},
both $t_u=t_w+t_{wu}$ and $t_v=t_w+t_{wv}$ are identified. Adding these up gives  \eqref{Id__dist_1mer_3}.

Applying the Markov property at time $t_{w}$ and then conditional independence, we obtain
\begin{align*}
&Cov_M(L^u,L^v):=\, \mathbb{E}_{M}[ (L^u-m^u)(L^v-m^v)]\\
=&\,\sum_{k\in \ZZ_+}p^{t_w}_{M}(k)\,\mathbb{E}_{k}\left[ L_{t_u-t_w}-Me^{-(\mu-\lambda)t_u}\right]\,\mathbb{E}_{k}\left[L_{t_v-t_w}-Me^{-(\mu-\lambda)t_v}\right]\\
=&\,\sum_{k\in \ZZ_+}p^{t_w}_{M}(k)\,\left( ke^{-(\mu-\lambda)(t_u-t_w)} -M e^{-(\mu-\lambda)t_u}\right)\,\left(ke^{-(\mu-\lambda)(t_v-t_w)}-M e^{-(\mu-\lambda)t_v}\right)\\
=&\,\sum_{k\in \ZZ_+}p^{t_w}_{M}(k)\,\left(
k^2\,e^{-(\mu-\lambda)t_{uv}} - 2k M\,e^{-(\mu-\lambda)(t_{uv}+t_w)}+M^2
e^{-(\mu-\lambda)(t_{u}+t_{v})}
\right)  \label{cov_length_1}\\
=&\,e^{-(\mu-\lambda)t_{uv}}\,\E_M[L^2_{t_w}]\,-\,2M\,e^{-(\mu-\lambda)(t_{uv}+t_w)}\,\E_M[L_{t_w}]\,+\,M^2
e^{-(\mu-\lambda)(t_{u}+t_{v})},
\end{align*}
where we also used the fact $t_{uv}=t_u+t_v-2t_{w}$ (or, equivalently, \eqref{Id__dist_1mer_3}).

The second moment $m_2(t):=\mathbb {E}_{M}[(L_t)^2]$ satisfies
$$\frac{d m_2}{dt} = 
2(\lambda - \mu)m_2 + (\lambda + \mu) \mathbb {E}_{M}[L_t],$$ with initial condition
$m_2(0) = M^2$.
Hence, $$m_2 = \left(M^2 - \frac{M(\lambda + \mu)}{\mu - \lambda} \right) e^{-2(\mu-\lambda)t} + \frac{M(\lambda+\mu)}{\mu-\lambda}e^{-(\mu-\lambda)t}.$$

Hence \eqref{Id__dist_1mer_2} can be obtained:
\begin{align*}
&Cov_M(L^u,L^v) \\
=&\,M^2\left[
e^{-(\mu-\lambda)t_{uv}}\,e^{-2(\mu-\lambda)t_w}-2e^{-(\mu-\lambda)(t_{uv}+t_w)}\,e^{-(\mu-\lambda)t_w}+e^{-(\mu-\lambda)(t_{u}+t_{v})}
\right] \\ &\quad + e^{-(\mu-\lambda) t_{uv}}\frac{M(\lambda+\mu)}{\mu-\lambda}\left(-e^{-2(\mu-\lambda) t_w} + e^{-(\mu-\lambda)t_w}\right)\\
=&\,M^2\left[
-e^{-(\mu-\lambda)(t_{uv}+2t_w)}+e^{-(\mu-\lambda)(t_{u}+t_{v})}
\right] \\&\quad + e^{-(\mu-\lambda) t_{uv}}\frac{M(\lambda+\mu)}{\mu-\lambda}\left(-e^{-2(\mu-\lambda) t_w} + e^{-(\mu-\lambda)t_w}\right)\\
=&\, e^{-(\mu-\lambda) t_{uv}}\frac{M(\lambda+\mu)}{\mu-\lambda}\left(-e^{-2(\mu-\lambda) t_w} + e^{-(\mu-\lambda)t_w}\right)\\
=&\, \frac{M(\lambda+\mu)}{\mu-\lambda}\left(-e^{-(\mu-\lambda) (t_u+t_v)} + e^{-(\mu-\lambda)(t_{uv}+t_w)}\right).
\end{align*}

Solving for $t_{uv}$ in the above expression and using \eqref{Id__dist_1mer_3}, we obtain
\[
   e^{\frac{-(\mu-\lambda)}{2}\,t_{uv}}=\,
\frac{Cov_M(L^u,L^v)+ \frac{M(\lambda+\mu)}{\mu-\lambda}\,e^{-(\mu-\lambda) (t_u+t_v)}}{\frac{M(\lambda+\mu)}{\mu-\lambda}\,e^{\frac{-(\mu-\lambda)}{2}\,(t_u+t_v)}}
\]
and hence formula \eqref{Id__dist_1mer_2}.
\end{proof}

\section{Algorithms and simulation studies}\label{S:Simulation}
In this section, we illustrate applications of our joint identifications by introducing and implementing reconstruction algorithms.

The set-up for Sections~\ref{S:simulation_M} to \ref{S:simulation_x} is as follows. Fix an arbitrary leaf $u \in\partial T$ of the phylogeny and let $t=t_u$ be its distance from the root. We consider $N$ i.i.d samples of $\mathcal{I}_{t}$ and use them to reconstruct the indel parameters and the ancestral sequence $\vec{x}$ (assuming $\pi_0$ is known in the latter).


In Section \ref{S:simulation_M}, we reconstruct the length $M=|\vec{x}|$ of the ancestral sequence and the parameters $\gamma=\lambda/\mu$ and $\beta= e^{-(\mu - \lambda) t}$ using only the lengths of the $N$ sampled sequences. In Section \ref{S:simulation_1mer}, we use $M, \lambda t, \pi_0$, and the 1-mers of the sequences to reconstruct the ancestral 1-mer and $\nu t.$  In Section \ref{S:simulation_x}, we use $M, \lambda t, \mu t, \nu t, \pi_0$ and the sampled sequences to reconstruct the ancestral sequence.
Lastly, in Section \ref{S:simulation_T}, we use a different set-up. We instead consider $N$ i.i.d samples of a fork tree with two leaves $u$ and $v$ (see Figure \ref{fig:fork}). We use the joint distribution of the lengths of the sequences at $u$ and $v$ to reconstruct the distance between them.

With the exception of the algorithm in Section \ref{S:simulation_x}, the other three algorithms are efficient and very straightforward to implement, as they essentially use only basic for-loops.  
The code used to perform the simulations and to generate the figures is available at \url{https://github.com/alexxue99/Invertibility}.

\subsection{Reconstruction of ancestral sequence length and indel rates}\label{S:simulation_M}

In this section, we give an algorithm to reconstruct three quantities, namely the ancestral sequence length $M$, $\gamma = \lambda/ \mu$ and $\beta = e^{-(\mu - \lambda) t}$,  using only the lengths $l_1, l_2, \ldots, l_N$ of the $N$ sampled sequences. The algorithm uses our formulas in Proposition \ref{L:InvertLength}.

\FloatBarrier
\begin{algorithm}
\caption{Ancestral sequence length and indel rates reconstruction} \label{length:alg}
\begin{algorithmic} 
\State \textbf{Input}: $N$, $l_1, l_2, \dots, l_N$
\\
\For {$k = 1$ to $3$}\State
 $ G_M^{(k)}(1, t) \gets \frac{1}{N}\sum_{i = 1}^N \frac{l_i\,!}{(l_i-k)!}1_{l_i \ge k} $
 \EndFor 
 \\ 
 \State $ C_1 \gets  G_M^{(1)}(1, t)  $
\State $ C_2 \gets  G_M^{(2)}(1, t) - ( G_M^{(1)}(1, t))^2   $
\State $ C_3 \gets  G_M^{(3)}(1, t) + 2( G_M^{(1)}(1, t))^3 - 3 G_M^{(1)}(1, t)  G_M^{(2)}(1, t)$ \\
\State$ C_2' \gets  C_2 /  C_1 $ 
\State$ C_3' \gets  C_3 /  C_1 $\\
\State$ \gamma \gets \frac{\sqrt{-( C_2'+1)^2 \left(3  C_2'^2-2  C_3'\right)}- C_2'^2+ C_2'+ C_3'}{2  C_2'^2+2  C_2'- C_3'+2}  $
\State$ \beta \gets \frac{ \gamma(2 +  C_2') -  C_2'}{1 +  \gamma} $
\State$ M \gets  C_1 /  \beta$
\\
\State \textbf{return} $\gamma, \beta, M.$
\end{algorithmic}
\end{algorithm}
\FloatBarrier

To experimentally verify that Algorithm \ref{length:alg} yields convergent results, we ran the algorithm for each $N \in \{10^3, 10^4, 10^5, 10^6\}$ for 50 trials. Each trial consisted of setting $(M, \mu, \lambda) = (8, 0.7, 1)$ and simulating $N$ i.i.d samples of a leaf $u \in \partial T$ a distance $t=t_u = 1$ away from the root.

See the three figures below. Each box-and-whisker plot approaches the horizontal blue line indicating the exact value of the variable as $N$ gets larger, indicating convergence.
\FloatBarrier 
\begin{figure} [H]
\centering
\begin{minipage}{.5\textwidth}
  \centering
  \includegraphics[scale=0.7]{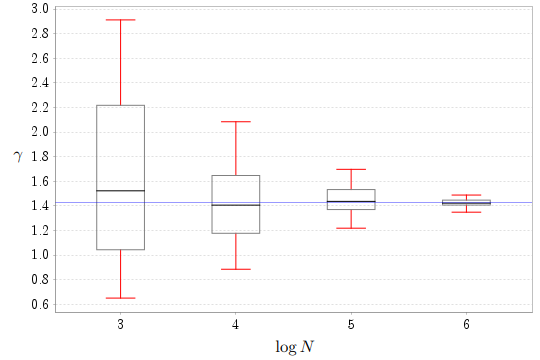}
\end{minipage}%
\begin{minipage}{.5\textwidth}
  \centering
  \includegraphics[scale=0.7]{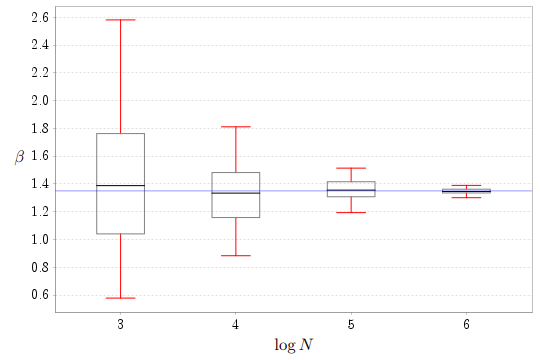}
\end{minipage}
\begin{minipage}{.5\textwidth}
  \centering
  \includegraphics[scale=0.7]{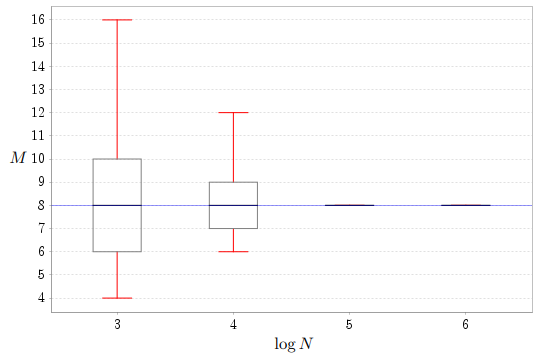}
\end{minipage}
\caption{Box-and-whisker plots for $\gamma, \beta,$ and $M$. Whiskers are drawn based on the largest or smallest data point within 1.5 times the interquartile range. The horizontal blue line in each figure shows the exact value of the variable.}
\end{figure}
\FloatBarrier

\subsection{Ancestral 1-mer and $\nu t_u$ reconstruction}\label{S:simulation_1mer}
In this section, we present an algorithm to reconstruct the ancestral 1-mer and $\nu t$, given $M$, $\lambda t$, $\pi_0$, as well as $\{(X_i, Z_i)\}_{i=1}^N$, the 1-mers of the $N$ sampled sequences. We straightforwardly apply the formulas found in Proposition \ref{prop:1merpi0}.

\FloatBarrier
\begin{algorithm}
\caption{1-mer and $\nu t_u$ reconstruction} \label{1-mer:alg}
\begin{algorithmic} 
\State \textbf{Input}: $N$, $M$, $\lambda t$, $\mu t$, $\pi_0,$ $\pi_1,$ $(X_1, Z_1), (X_2, Z_2), \dots, (X_N, Z_N)$ \\
\For {\textbf{each} $(i, j) \in \{(1, 0), (0, 1), (2, 0), (1, 1), (0, 2) \}$}
 \State
$ E_{i, j} \gets  \frac{1}{N}\sum_{n = 1}^N \frac{X_n\,!}{(X_n-i)!} \frac{Z_n\,!}{(Z_n-j)!}1_{X_n \ge i}1_{Z_n \ge j}.
$
\EndFor
\\
\State $H_{y_1} \gets \pi_0 E_{1,0} - \pi_1  E_{0,1}$
\State $A \gets -M\pi_1^2 + M\pi_1 (\pi_1^2 - \pi_0^2)$
\State $B \gets e^{\mu t}(\pi_1^2 - \pi_0^2)H_{y_1}$
\State $C \gets - e^{2\mu t}\left[\pi_0^2 E_{2,0} - 2\pi_0\pi_1 E_{1,1} + \pi_1^2  E_{0,2} - H_{y_1}^2\right]$
\\
\State $ e^{- \nu t}  \gets \frac { -B - \sqrt{B^2 - 4AC}}{2A}$\\
\State $ \nu t \gets -\ln (  e ^{-\nu t})$

\State $a \gets \frac{e^{\mu t} H_{y_1} + M \pi_1  e^{-\nu t}}{ e^{-\nu t}}.$
\\
\State \textbf{return} $a, \nu t.$
\end{algorithmic}
\end{algorithm}
\FloatBarrier

Again, we experimentally verified the algorithm by running it
for each $N \in \{10^3, 10^4, 10^5, 10^6\}$ for 50 trials. Each trial consisted of setting $(\mu, \lambda, \nu, \pi_0, \vec x) = (0.7, 1, 0.2, 0.3, 111100)$ and simulating $N$ i.i.d samples of a leaf $u$ a distance $t=t_u = 1$ away from the root.

The two figures below display the results. They show that even for small $N$, the median of the 50 trials is very accurate. 
\FloatBarrier 
\begin{figure} [H]
\centering
\begin{minipage}{.5\textwidth}
  \centering
  \includegraphics[scale=0.7]{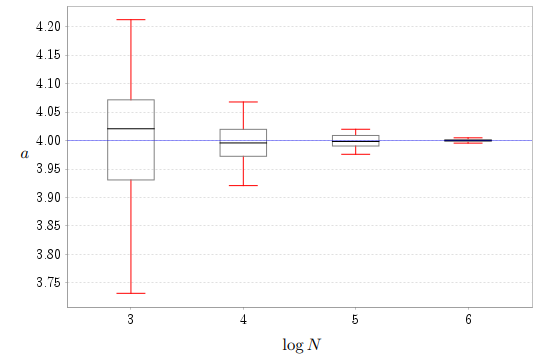}
\end{minipage}%
\begin{minipage}{.5\textwidth}
  \centering
  \includegraphics[scale=0.7]{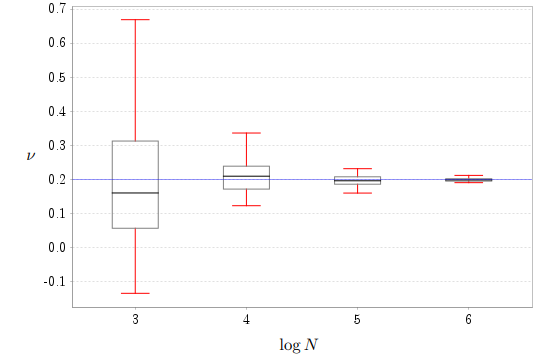}
\end{minipage}
\caption{Box-and-whisker plots for $a$ and $\nu.$ Whiskers are drawn based on the largest or smallest data point within 1.5 times the interquartile range. The horizontal blue line in each figure shows the exact value of the variable.}
\end{figure}
\FloatBarrier

\subsection{Root state reconstruction}\label{S:simulation_x}
For root state reconstruction, we are given $M=|\vec{x}|, \lambda t, \mu t, \nu t, \pi_0,$ and $\pi_1=1-\pi_0$. The goal is to reconstruct the sequence at the root of the tree using the $N$ sampled sequences $\{\mathcal I_{v}\}_{1 \le v \le N}$. Fix any $t < s_1 < s_2 < \dots < s_M < \infty$, which can in practice be picked by e.g. setting $s_j = t + t\cdot j.$ We only need to be able to calculate $(\lambda - \mu)(s_j-t)$ and $(\mu + \nu) (s_j - t)$, not $s_j$ itself, so this choice works.

Since $\mathcal I$ is a Markov process,
\[ p_{\vec x}^{\sigma}(s_j) = \mathbb{E}_{\vec x}[p_{\mathcal I_t}^{\sigma}(s_j - t)],\]
which gives us an estimator as follows. For each $1 \le v \le N,$ let $\mathcal{I}_v(i)$ be the $i$-th digit of $\mathcal{I}_v$. Then, we estimate
\begin{align*}
\widehat{p^{\sigma}_{\vec{x}}}(s_j) =&\, \frac{1}{|\partial T|}\sum_{v = 1}^N p^{\sigma}_{\mathcal{I}_v}(s_j - t) \\
=&\, \frac{1}{|\partial T|}\sum_{v = 1}^N\left(
\pi_{\sigma}\,\phi(s_j-t)\,\big[1-\big(\eta(s_j - t)\big)^{|\mathcal{I}_v|}\big]\,+\,\psi(s_j - t)\sum_{i=1}^{|\mathcal{I}_v|}1_{\{\mathcal{I}_v(i)=\sigma\}}\big(\eta(s_j - t)\big)^{i-1}
\right). 
\end{align*}
In the last equality we used Lemma~\ref{L:theta_v} in the Appendix.

From here, we could directly estimate $V, \Psi, U$ and then apply \eqref{E:Solve for x3} to estimate $\vec x.$ However, this leads to poor convergence because the entries of $V^{-1}$ and $\Psi^{-1}$ are large, resulting in inaccurate results unless $N$ is very large. For an algorithm with fast convergence, we can directly solve the matrix equation \eqref{E:Solve for x2} via an exhaustive search for the best $Y^{\vec x}$. In particular, letting $Y^{\vec x}_1$ and $U_1$ be the columns corresponding to $\sigma = 1$ of $Y^{\vec x}$ and $U$, respectively, we can estimate $Y^{\vec x}_1$ by computing
$$ Y^{\vec x}_1 \approx \arg \min _{ v \in \{0, 1\}^M} \|U_1 - \Psi V v \|_2.$$

Algorithm~\ref{rootstate:alg} uses this fast convergence approach. As in the previous sections, we experimentally verified the algorithm by running it
for each $N \in \{10^3, 10^4, 10^5, 10^6\}$ for 50 trials. Each trial consisted of setting $(\mu, \lambda, \nu, \pi_0, \vec x) = (0.4, 1, 0.2, 0.3, 11010111)$ and simulating $N$ i.i.d. samples $\{\mathcal I_v\}_{1 \le v \le n}$ of a leaf $u$ a distance $t = t_u = 1$ away from the root.

Figure~\ref{fig:difference} shows that the median trends downward to the best possible result of a difference of 0. Here, ``difference" refers to the number of different locations where the reconstructed root and the actual root 11010111 differ. Since this difference is a small integer bounded by $M = 8$, the median is limited to a small range, so the median itself is insufficient for showing the increased accuracy of the algorithm as $N$ increases. In fact, the median for both $N = 10^5$ and $N = 10^6$ is 0, indicating that at least half of the trials for those $N$ resulted in the exact root being reconstructed. To better show the increased accuracy as $N$ increases, we have included the mean in the box-and-whisker plot. We can see that even though the median does not change from $N = 10^3$ to $10^4$ and from $N = 10^5$ to $N = 10^6$, the mean is consistently decreasing. Moreover, at $N = 10^6$, the mean is around 0.25, indicating that the algorithm is very accurate.
\FloatBarrier
\begin{figure} [H]
\centering
\includegraphics[scale=1]{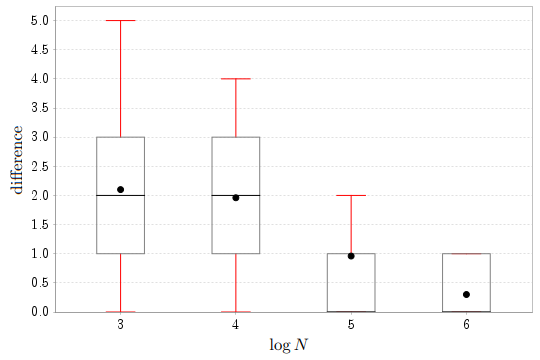}
\caption{Box-and-whisker plot for the number of differences between the reconstructed root and the actual root. Whiskers are drawn based on the largest data point within 1.5 times the interquartile range. The mean is represented as a dot.}
\label{fig:difference}
\end{figure}
\FloatBarrier
\begin{algorithm}
\caption{Root state reconstruction} \label{rootstate:alg}
\begin{algorithmic} 
\State \textbf{Input}: $N$, $M$, $\lambda t_u$, $\mu t_u$, $\nu t_u$, $\pi_0,$ $\pi_1,$ $\{\mathcal{I}_v\}_{1 \le v \le N}$\\
\For {$j = 1$ to $M$} \State 
$\eta_j \gets \eta(t_u + t_u\cdot j)$
\State $\phi_j \gets \phi(t_u+t_u \cdot j)$
\State $\psi_j \gets \psi(t_u + t_u\cdot j)$
\State $p^1_{\vec x}(t_u+t_u\cdot j) \gets \frac 1 N \sum_{v = 1}^N \left( \pi_1 \phi(j) [1 - (\eta(j))^{|\mathcal{I}_v|}] + \psi(j) \sum_{i = 1}^{|\mathcal{I}_v|} 1_{ \{ \mathcal{I}_v(i) = 1\}} (\eta(j))^{i-1} \right)$
\EndFor
\\

\State $\Psi \gets 0_{M \times M}$
\For {$j = 1$ to $M$} \State 
$\Psi_{jj} \gets \psi_j$
\EndFor
\\

\State $U_1 \gets 0_{M \times 1}$
\For {$j = 1$ to $M$} \State
$(U_1)_j = p_{\vec x}^1(t_u + t_u\cdot j) - \pi_1 \phi_j[1 - \eta_j^M] $
\EndFor
\\

\State $V \gets 0_{M \times M}$
\For {$i = 1$ to $M$ }
\For {$j = 1$ to $M$ } \State 
$V_{ij} = \eta_i^{j-1}$
\EndFor
\EndFor
\\

\State $Y_1^{\vec x} \gets \arg \min _{ v \in \{0, 1\}^M} \|U_1 - \Psi V v \|_2$

\State \textbf{return} $Y_1^{\vec x}$.
\end{algorithmic}
\end{algorithm}
\FloatBarrier

\subsection{Pairwise distance reconstruction}\label{S:simulation_T}
The setting in this section is different from the previous 3 subsections. Instead of simulating $N$ sequences and reconstructing some parameters of the tree, we will simulate $N$ samples of a fork tree (Figure \ref{fig:fork}) with edges $\{r, w\}, \{w, u\}, \{w, v\}$. Here, $r$ is the root node, and the goal is to estimate the distance $t_{uv}$ between $u$ and $v$, as well as the distance $t_w$ between $r$ and $w$. We are given $M, \lambda t_u, \lambda t_v, \mu t_u, \mu t_v$, where $t_u$ and $t_v$ are the distances between $r$ and the leaves $u$ and $v$, respectively. 
Our results in Section \ref{sec:pairwise} lead to the following algorithm. 

\FloatBarrier
\begin{algorithm}
\caption{Pairwise distance reconstruction} \label{pairwisedistance:alg}
\begin{algorithmic} 
\State \textbf{Input}: $N$, $M$, $\lambda t_u,$ $\lambda t_v$,  $\mu t_u$, $\mu t_v$, $N$ samples of joint distribution $(L^u, L^v)$ \\

\State $m^u \gets M e^{-(\mu t_u - \lambda t_u)}$
\State $m^v \gets M e^{-(\mu t_v - \lambda t_v)}$
\State $Cov_M(L^u, L^v) \gets \frac 1 N \sum_{(L^u, L^v)} (L^u - m^u)(L^v - m^v)$
\\

\State $e^{\frac{-(\mu-\lambda)}{2}\,t_{uv}} \gets \frac{\mu - \lambda}{M(\lambda +\mu)}\,e^{\frac{+(\mu-\lambda)}{2}\,(t_u+t_v)}\,Cov_M(L^u,L^v)\;+\; e^{\frac{-(\mu-\lambda)}{2}\,(t_u+t_v)}$
\State $ (\mu - \lambda) t_{uv} \gets -2 \ln e^{\frac{-(\mu-\lambda)}{2}\,t_{uv}}$
\State $\mu t_{uv} \gets (\mu - \lambda) t_{uv} \frac { \mu t_u }{(\mu - \lambda) t_u}$\\

\State $\mu t_w \gets \,\frac{\mu t_u+\mu t_v-\mu t_{uv}}{2}$
\\

\State \textbf{return} $\mu t_{uv}, \mu t_w$.
\end{algorithmic}
\end{algorithm}
\FloatBarrier
We ran Algorithm \ref{pairwisedistance:alg} for each $N \in \{10^3, 10^4, 10^5, 10^6\}$ for 50 trials. Each trial consisted of creating $N$ fork trees with edges $\{r, w\}, \{w, u\}, \{w, u\}$, with parameters $t_w = 1, t_u = 3, t_v = 4$, and with $(\mu, \lambda, M) = (0.3, 0.5, 8).$

Figure \ref{fig:pariwisedistance} shows that even at small $N$, the median is very close to the exact value. As $N$ grows larger, the boxes and whiskers get smaller, converging toward the correct value.

\begin{figure} [H]
\centering
\begin{minipage}{.5\textwidth}
  \centering
  \includegraphics[scale=0.7]{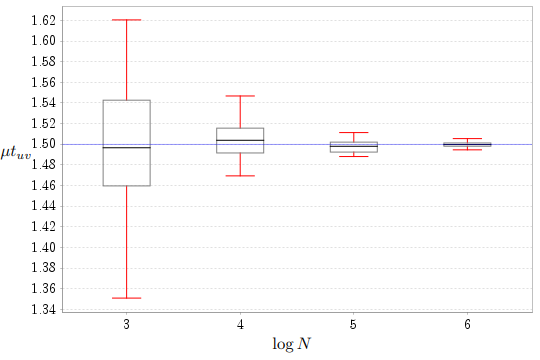}
\end{minipage}%
\begin{minipage}{.5\textwidth}
  \centering
  \includegraphics[scale=0.7]{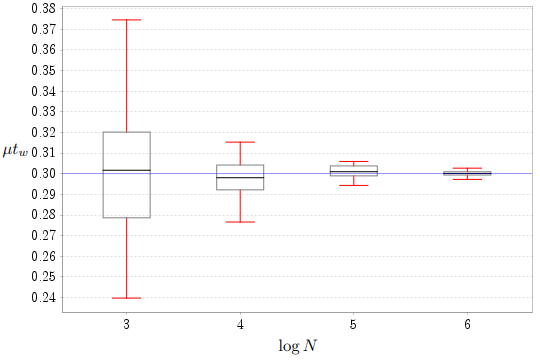}
\end{minipage}
\caption{Box-and-whisker plots for $\mu t_{uv}$ and $\mu t_w.$ Whiskers are drawn based on the largest or smallest data point within 1.5 times the interquartile range. The horizontal blue line in each figure shows the exact value of the variable.}
\label{fig:pariwisedistance}
\end{figure}
\FloatBarrier
\newpage

\section{Acknowledgments}

This work started as a Research Experience for Undergraduate (REU) project of A. Xue at Indiana University under National Science Foundation grant DMS-2051032. 
For B. Legried, this research is supported by the NSF-Simons Southeast Center for Mathematics and Biology (SCMB) through the grants National Science Foundation DMS-1764406 and Simons Foundation/SFARI 594594, and by the Department of Statistics RTG at the University of Michigan under DMS-1646108. 
This work is also supported by National Science Foundation grants DMS-2152103 and DMS-2348164 to W.-T. L. Fan.

\section{Appendix}

\subsection{Sequence length Process}

The sequence length of the TKF91 edge process is a continuous-time linear birth-death-immi\-gration process $(|\mathcal{I}_{t}|)_{t\geq 0}$ with infinitesimal generator $Q_{i,i+1}=\lambda_0+i\lambda$ (for $i\in \ZZ_+$), $Q_{i,i-1}=i\mu$ (for $i\geq 1$) and $Q_{i,j}=0$ otherwise. Note that $\lambda_0=0$ in  Definition \ref{Def_BiINDEL}.

This is a well-studied process for which explicit forms for the transition density $p_{ij}(t)$ and probability generating functions $G_i(z,t)=\sum_{j=0}^{\infty}p_{ij}(t)z^j$ are known. This process was also analyzed in~\cite{THATTE200658} in the related context of phylogeny estimation.
We collect here a few properties that will be useful in our analysis.
The probability generating function is given by
\begin{equation*}
G_i(z,t)=\left[\frac{1-\beta -z(\gamma-\beta)}{1-\beta\gamma -\gamma z(1-\beta)}\right]^i\,\left[\frac{1-\gamma}{1-\beta\gamma -\gamma z(1-\beta)}\right]^\delta
\end{equation*}
for  $i\in \ZZ_+$ and $t>0$, where 
\begin{equation*}
\beta =\beta_t = e^{-(\mu-\lambda)t}, \quad \gamma =\frac{\lambda}{\mu}
\quad \text{and}\quad \delta=\frac{\lambda_0}{\lambda}.
\end{equation*}

\medskip

When $\lambda_0=0$ (i.e. $\delta=0$), we obtain the PGF considered in Lemma \eqref{L:InvertLength}. The expected value and the second moment are given by 
\begin{align*}
\E_{i}[|\mathcal{I}_{t}|]&=
i\beta \quad\text{ and} 
\\
\E_{i}[|\mathcal{I}_{t}|^2]&= i^2\beta^2+ i\frac{\beta(2 \gamma - \beta \gamma - \beta)}{1 - \gamma}.   
\end{align*}
Appendix of \cite{fan2020statistically} contains properties of  $\mathcal{I}$ when $\lambda_0=\lambda$. 

\subsection{The 1-mer process}
\label{s:1-merAppendix}
To obtain the formula for the probability generating function (PFG) of 1-mer, we start by differentiating the PGF and applying the method of characteristics.
\begin{lemma}\label{L:1merPDE}
Fix $(a,b)\in \ZZ_+^2$ and let $p_t(j,k):=\P_{(a,b)}(X_t=j, Z_t=k)$.
Then the function $G(z_1,z_2;\,t):=G_{(a,b)}(z_1,z_2;\,t)$ satisfies the differential equation
\begin{align}
\frac{\partial G}{\partial t} &= \left(z_1^2 \lambda \pi_1 + z_1z_2 \lambda \pi_0 + \mu + z_2 \nu \pi_0  - z_1(\lambda + \mu + \nu\pi_0)\right) \,\frac{\partial G}{\partial z_1}  \label{1merPDE1}
   \\&
   \quad + \left( z_1z_2 \lambda \pi_1 + z_2^2 \lambda \pi_0 + \mu + z_1\nu\pi_1 - z_2(\lambda + \mu + \nu\pi_1)\right)\,\frac{\partial G}{\partial z_2}, \label{1merPDE2}
\end{align}
with initial condition $G(z_1, z_2; 0) = z_1^a z_2^b$ and boundary conditions 
\begin{equation}\label{1mer_BC1}
G(0, 0; t) = p_t(0,0), \quad G(1, 1; t) = 1, 
\end{equation}
and
\begin{align} \label{1mer_BC2}
G(z_1, 0; t) &= \mathbb{E}_{(a,b)}[z_1^{X_t}|Z_t = 0] \mathbb{P}_{(a,b)}(Z_t = 0), \\
G(0, z_2; t) &= \mathbb{E}_{(a,b)}[z_2^{Z_t}| X_t = 0] \mathbb{P}_{(a,b)}(X_t = 0).
\end{align}

\end{lemma}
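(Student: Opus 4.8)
The plan is to derive the stated equation directly from the Kolmogorov forward (master) equation of the $1$-mer count process, whose transition rates are recorded in \eqref{imer_rates}. Since $G(z_1,z_2;t) = \mathbb{E}_{(a,b)}[z_1^{X_t} z_2^{Z_t}] = \sum_{j,k} p_t(j,k)\, z_1^j z_2^k$, I would first note that, because the underlying length dynamics are those of a non-explosive linear birth--death process, the chain $(X_t,Z_t)$ is regular; hence the forward equations hold and the series for $G$ together with its first partials in $z_1,z_2$ converge and may be differentiated termwise for $|z_1|,|z_2|\le 1$. Writing $f(i,j) = z_1^i z_2^j$ and letting $\mathcal{L}$ denote the generator with rates \eqref{imer_rates}, the identity $\frac{\partial G}{\partial t} = \mathbb{E}_{(a,b)}[(\mathcal{L}f)(X_t,Z_t)]$ reduces the problem to computing $\mathcal{L}f$ and re-expressing it in terms of $G$ and its partial derivatives.

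Second, I would evaluate $\mathcal{L}f(i,j)$ by summing the six off-diagonal contributions, each of the form (rate)$\times(f(\text{new})-f(i,j))$. For instance, insertion of a $1$ gives $\lambda(i+j)\pi_1(z_1-1)z_1^i z_2^j$, deletion of a $1$ gives $\mu i (z_1^{-1}-1) z_1^i z_2^j$, and a $0\to 1$ substitution gives $\nu\pi_1 j(z_1 z_2^{-1}-1)z_1^i z_2^j$, with the remaining three events treated symmetrically. The key observation is that every term carries a factor $i$, $j$, or $i+j$, so upon taking expectations it becomes a linear combination of $z_1\,\partial_{z_1}G$ and $z_2\,\partial_{z_2}G$, using $z_1\,\partial_{z_1}G = \mathbb{E}_{(a,b)}[X_t z_1^{X_t} z_2^{Z_t}]$ and $z_2\,\partial_{z_2}G = \mathbb{E}_{(a,b)}[Z_t z_1^{X_t} z_2^{Z_t}]$. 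In particular, the apparent negative powers $z_1^{-1},z_2^{-1}$ cause no difficulty, since they are always multiplied by $i$ or $j$ and thus produce no negative exponents after re-indexing.

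Third, I would collect the coefficients of $\partial_{z_1}G$ and $\partial_{z_2}G$ separately. Grouping the insertion, deletion, and substitution contributions and simplifying with $\pi_0+\pi_1=1$ should yield exactly the coefficient $z_1^2\lambda\pi_1 + z_1 z_2 \lambda\pi_0 + \mu + z_2\nu\pi_0 - z_1(\lambda+\mu+\nu\pi_0)$ multiplying $\partial_{z_1}G$, and the symmetric coefficient multiplying $\partial_{z_2}G$, which are precisely \eqref{1merPDE1}--\eqref{1merPDE2}. Finally, the initial condition $G(z_1,z_2;0)=z_1^a z_2^b$ is immediate from $(X_0,Z_0)=(a,b)$, and the boundary relations \eqref{1mer_BC1}--\eqref{1mer_BC2} follow by evaluating the defining series at the indicated arguments: setting $z_1=z_2=0$ isolates $p_t(0,0)$, setting $z_1=z_2=1$ gives total mass $1$, and setting a single variable to $0$ restricts the sum to the event that the corresponding count vanishes.

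I expect the only genuine obstacle to be the bookkeeping: correctly tracking all six events and verifying that the collected coefficients simplify to the two stated polynomials. The analytic points---non-explosivity guaranteeing the forward equations, and termwise differentiability of $G$ on the closed polydisc---are standard for the linear birth--death dynamics of the length process, and I would dispatch them with a brief remark rather than a detailed argument.
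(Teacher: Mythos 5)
Your proposal is correct and is essentially the paper's argument in dual form: the paper writes the Kolmogorov forward (master) equations for $p_t(j,k)$, multiplies by $z_1^j z_2^k$, sums and re-indexes, whereas you apply the generator with rates \eqref{imer_rates} to the test function $z_1^i z_2^j$ and take expectations via $z_1\,\partial_{z_1}G = \mathbb{E}_{(a,b)}[X_t z_1^{X_t} z_2^{Z_t}]$ and $z_2\,\partial_{z_2}G = \mathbb{E}_{(a,b)}[Z_t z_1^{X_t} z_2^{Z_t}]$ --- the same six-event bookkeeping, and your collected coefficients do simplify (using $\pi_0+\pi_1=1$) to exactly \eqref{1merPDE1}--\eqref{1merPDE2}. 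Your added remark on non-explosivity and termwise differentiation addresses a point the paper passes over silently, but it does not change the substance of the proof.
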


\begin{proof}
Using the transition rates \eqref{imer_rates}, we find that for each fixed $(a,b)\in \mathbb{Z}_+^2$, the transition probabilities $p_t(j,k)$
satisfy the system of differential equations
\begin{align*}
\frac{\partial p_t(j,k)}{\partial t} &= p_t(j-1, k) \lambda (j + k - 1)\pi_1 + 
p_t(j, k-1) \lambda(j+k-1) \pi_0 \\&\quad + 
p_t(j+1, k) \mu(j+1) + 
p_t(j, k+1) \mu(k+1) \\&\quad+ 
p_t(j+1, k-1) \nu (j + 1) \pi_0 + 
p_t(j-1, k+1) \nu(k+1)\pi_1 \\&\quad- 
p_t(j,k) \left(\lambda(j+k) + \mu(j+k) + \nu j\pi_0 + \nu k \pi_1 \right), \qquad\qquad t\in(0,\infty),
\end{align*}
for all $(j,k)\in\ZZ_+^2$.

Note that $G(z_1,z_2;\,t)= \sum_{(j,k)\in\ZZ_+^2} p_t(j, k) z_1^{j}z_2^{k}\,$ satisfies
$$\frac{\partial G}{\partial z_1} = \sum p_t(j, k)j z_1^{j-1}z_2^{k} \quad\text{ and }
\quad \frac{\partial G}{\partial z_2} = \sum p_t(j, k)k z_1^{j}z_2^{k-1}.$$
Thus,
\begin{align*}
    \frac{\partial G}{\partial t} &= \sum \frac{\partial p_t(j,k)}{\partial t} z_1^{j}z_2^{k} \\
    &= \sum z_1^{j}z_2^{k}p_t(j-1, k) \lambda (j + k - 1)\pi_1 + 
   \sum  z_1^{j}z_2^{k}p_t(j, k-1) \lambda(j+k-1) \pi_0 \\&\quad + 
    \sum z_1^{j}z_2^{k}p_t(j+1, k) \mu(j+1) + 
   \sum  z_1^{j}z_2^{k}p_t(j, k+1) \mu(k+1) \\&\quad+ 
    \sum z_1^{j}z_2^{k}p_t(j+1, k-1) \nu (j + 1) \pi_0 + 
    \sum z_1^{j}z_2^{k}p_t(j-1, k+1) \nu(k+1)\pi_1 \\&\quad- 
   \sum  z_1^{j}z_2^{k}p_t(j,k) \left(\lambda(j+k) + \mu(j+k) + \nu j\pi_0 + \nu k \pi_1 \right) \\
    &= \sum z_1^{j+1}z_2^{k}p_t(j, k) \lambda (j + k)\pi_1 + 
   \sum  z_1^{j}z_2^{k+1}p_t(j, k) \lambda(j+k) \pi_0 \\&\quad + 
    \sum z_1^{j-1}z_2^{k}p_t(j, k) \mu j + 
   \sum  z_1^{j}z_2^{k-1}p_t(j, k) \mu k \\&\quad+ 
    \sum z_1^{j - 1}z_2^{k + 1}p_t(j, k) \nu j \pi_0 + 
    \sum z_1^{j + 1}z_2^{k - 1}p_t(j, k) \nu k\pi_1 \\&\quad- 
   \sum  z_1^{j}z_2^{k}p_t(j,k) \left(\lambda(j+k) + \mu(j+k) + \nu j\pi_0 + \nu k \pi_1 \right) \\
   &= \sum p_t(j, k) j \left[ z_1^{j + 1}z_2^{k} \lambda \pi_1 + z_1^{j} z_2^{k + 1} \lambda \pi_0 + z_1^{j - 1}z_2^{k} \mu + z_1^{j - 1}z_2^{k + 1} \nu \pi_0 - z_1^{j}z_2^{k} (\lambda + \mu + \nu \pi_0 ) \right] \\ 
   &\quad +\sum p_t(j, k) k \left[ z_1^{j + 1}z_2^{k} \lambda \pi_1 + z_1^{j} z_2^{k + 1} \lambda \pi_0 + z_1^{j}z_2^{k-1} \mu + z_1^{j + 1}z_2^{k - 1} \nu \pi_1 - z_1^{j}z_2^{k} (\lambda + \mu + \nu \pi_1 ) \right] \\
   &=\left(z_1^2 \lambda \pi_1 + z_1z_2 \lambda \pi_0 + \mu + z_2 \nu \pi_0  - z_1(\lambda + \mu + \nu\pi_0)\right) \,\frac{\partial G}{\partial z_1} 
   \\&
   \quad + \left( z_1z_2 \lambda \pi_1 + z_2^2 \lambda \pi_0 + \mu + z_1\nu\pi_1 - z_2(\lambda + \mu + \nu\pi_1)\right)\,\frac{\partial G}{\partial z_2}.
\end{align*}
\end{proof}

\begin{proof}
[Proof of Proposition \ref{prop:1mer}]
We now solve the Lagrangian-type PDE in Lemma \ref{L:1merPDE} by the method of characteristics.
Let $\Gamma$ be the initial condition curve $\{(r_1, r_2; 0) \mid G(r_1, r_1; 0) \text{ exists}\}$. From above, we have $G(r_1, r_2; 0) = r_1^a r_2^b$ for all points on $\Gamma.$

Let $u(\vec r, s) = u(\vec r(z_1, z_2;t), s(z_1, z_2;t)) = G(z_1, z_2; t).$ We have the following four characteristic equations:
\begin{align}
    \frac{dt(\vec r, s)}{ds}&= 1  \label{dz1} \\ 
    \frac{dz_1(\vec r, s)}{ds}&= -(z_1^2 \lambda \pi_1 + z_1z_2 \lambda \pi_0 + \mu + z_2 \nu \pi_0  - z_1(\lambda + \mu + \nu\pi_0)) \label{dz2} \\ 
    \frac{dz_2(\vec r, s)}{ds}&= -(z_1z_2 \lambda \pi_1 + z_2^2 \lambda \pi_0 + \mu + z_1\nu\pi_1 - z_2(\lambda + \mu + \nu\pi_1)) \notag\\
    \frac{d u(\vec r, s)}{ds}&=  0, \notag
\end{align}
with boundary conditions
\begin{align*}
    t(\vec r, 0) &= 0 \\
    z_1(\vec r, 0) &= r_1 \\
    z_2(\vec r, 0) &= r_2\\
    u(\vec r, 0) &= r_1^a r_2^b.
\end{align*}
We want to express $r_1$ and $r_2$ in terms of $z_1, z_2,$ and $t$. The equations \eqref{dz1} and \eqref{dz2} are difficult to solve directly, so we substitute  $$(y_1,y_2)=(z_1-z_2,\,\pi_1z_1+\pi_0z_2).$$
We have
\begin{align*}
    \frac{dy_1(\vec r, s)}{ds} &= y_1 \left(\lambda +\mu +\nu -\lambda  y_2\right) \\
    \frac{dy_2(\vec r, s)}{ds} &= \left(y_2-1\right) \left(\mu -\lambda  y_2\right).
\end{align*}
Hence, in the case where $\lambda \ne \mu$, we have
\begin{align*} 
    y_1(\vec r, s) &= \frac{c_1 (\lambda -\mu ) e^{s (\lambda +\nu )}}{-c_2 \lambda +c_2 \lambda  e^{s (\lambda -\mu )}+\lambda -\mu  e^{s (\lambda -\mu )}} \\
    y_2(\vec r, s) &= \frac{e^{s\lambda } (c_2 \lambda -\mu )-(c_2-1) \mu  e^{s\mu}}{e^{s\lambda} (c_2 \lambda -\mu )-(c_2-1) \lambda  e^{s\mu }},
\end{align*}
where $c_1 = y_1(\vec r, 0) = r_1 - r_2,$ and $c_2 = y_2(\vec r, 0) = \pi_1r_1 + \pi_0r_2.$
Solving for $r_1$ and $r_2$ in terms of $y_1, y_2,$ and $t$, and then writing in terms of $z_1, z_2,$ and $t$, we find that
\begin{align*} 
&G(z_1, z_2; t) = u(\vec r(z_1, z_2; t), s(z_1,z_2;t)) = u(\vec r(z_1, z_2;t), 0) = r_1^a r_2^b \\ &= e^{-a \nu  t} \left(\left(\mu -\lambda  y_2\right)e^{\mu  t}+\lambda  \left(y_2-1\right) e^{\lambda 
   t}\right){}^{-a-b}
   \\&\quad  \left(-\mu  e^{t (\lambda +\nu )}+\mu  e^{t (\mu +\nu )}+y_2 \left(\mu 
   e^{t (\lambda +\nu )}-\lambda  e^{t (\mu +\nu )}\right)+\pi _0 y_1 (\mu -\lambda )\right){}^a \\&\quad \left(-\mu e^{\lambda  t}+\mu  e^{\mu  t}+\pi_1 y_1 (\lambda -\mu ) e^{-\nu t}+\mu
    y_2 e^{\lambda  t}-\lambda  y_2 e^{\mu  t}\right){}^b
\\&= 
\left(e^{\mu  t} \left(\mu -\pi_1 \lambda  z_1-\pi _0 \lambda 
   z_2\right)+\lambda  \left(\pi_1 z_1+\pi _0 z_2-1\right) e^{\lambda  t}\right)^{-a-b}
   \\& \quad 
  \left(\mu  \left(e^{\mu t}-e^{\lambda t }\right)+z_1 \left(\pi _0 (\mu -\lambda
   ) e^{-\nu t} -\pi_1 \left(\lambda  e^{\mu t}-\mu  e^{\lambda t
   }\right)\right)+\pi _0 z_2 \left(\lambda \left (e^{-\nu t} - e^{\mu t}\right)+\mu  \left(e^{ \lambda t}- e^{-\nu t}\right)\right)\right)^a 
   \\& \quad
   \left(\mu (-e^{\lambda  t} +  e^{\mu  t})+\pi_1 \left(z_1-z_2\right) (\lambda -\mu ) e^{-\nu t}+\mu  \left(\pi _0 z_2+ \pi_1 z_1\right) e^{\lambda  t}-\lambda  \left(\pi_1 z_1+\pi _0 z_2\right) e^{\mu
   t}\right)^b. 
 \end{align*}
The proof is complete upon the  substitutions
$y_1 = z_1 - z_2$ and $y_2 = \pi_1 z_1 + \pi_0 z_2$.

Now assume that $\mu = \lambda > 0$.  Then \begin{equation*}
    \frac{dy_2}{ds} = -\lambda(1-y_2)^2,
\end{equation*} where the general solution is \begin{equation*}
    y_2(\vec{r},s) = \frac{\lambda s - c_2(1 + \lambda s)}{-1 + \lambda s - c_2\lambda s}.
\end{equation*} 
Making this substitution implies \begin{equation*}
    \frac{dy_1}{ds} = y_1\left(2\lambda + \nu - \lambda \frac{\lambda s - c_2(1 + \lambda s)}{-1 + \lambda s - c_2\lambda s}\right).
\end{equation*} The solution is then \begin{equation*}
    y_1(\vec{r},s) = \frac{c_1 e^{(\lambda + \nu)s}}{1+\lambda s(c_2 - 1)}.
\end{equation*}

\end{proof}

\subsubsection{System of equations for $1$-mer}
\label{sss:1-mer System}

In this section, we derive the system of equations involving the triple $(\pi_0, a,\nu t)$ in view of Proposition \ref{prop:1mer}.  We provide this system for completeness,  but we do not need this section for the proof of any stated result in this paper. 

The key point is that it is not clear whether this system has a unique solution among the possible parameters, unless we assume that one of  $\pi_0$, $a$ and $\nu t$ is known. We post it as an open problem to solve for the triple $(\pi_0, a,\nu t)$.

Consider
$$C_{(i,j)}:= \frac{\partial^{(i+j)} \ln G_{(a,b)}(z_1,z_2;t)}{\partial^{(i)} z_1 \,\partial^{(j)} z_2} \Big|_{z_1 = 1, z_2=1}.$$
We will compute the first three partial derivatives in $z_1$, but this requires the multivariable chain rule applied to the underlying variables $y_1$ and $y_2$.  We have \begin{equation*}
    \partial_{z_1} \ln G = \partial_{y_1}\ln H + \pi_1\partial_{y_2} \ln H.
\end{equation*} Consider $$\ln H = \ln(r_1(y_1,y_2)^a r_2(y_1,y_2)^b) = a \ln(r_1(y_1,y_2)) + b \ln(r_2(y_1,y_2)).$$ Since $$\frac{\partial}{\partial y_1}r_{i+1} = \frac{(-1)^{i}\pi_i (\mu - \lambda)e^{-\nu t}}{(\mu - \lambda y_2)e^{\mu t} + \lambda(y_2 - 1)e^{\lambda t}} = \frac{(-1)^{i}C \pi_i e^{-\nu t} B }{\gamma A - \gamma y_2},$$ we have \begin{align*}
    \frac{\partial}{\partial y_1}\ln H = a \frac{1}{r_1} \frac{\partial r_1}{\partial y_1} + b \frac{1}{r_2}\frac{\partial r_2}{\partial y_1} &= \frac{a\pi_0(\mu - \lambda)e^{-\nu t}}{-\mu  e^{\lambda t}+\mu  e^{\mu t}+y_2 \left(\mu 
   e^{\lambda t}-\lambda  e^{\mu t}\right)+\pi _0 y_1 (\mu -\lambda )e^{-\nu t}} \\
   &- \frac{b \pi_1 (\mu - \lambda)e^{-\nu t}}{-\mu e^{\lambda  t}+\mu  e^{\mu  t}+ y_2(\mu e^{\lambda  t}-\lambda   e^{\mu  t}) +\pi_1 y_1 (\lambda -\mu ) e^{-\nu t}} \\
   &= a \frac{\pi_0 e^{-\nu t}B}{1 - \gamma A y_2 + \pi_0 e^{-\nu t}B y_1} - b\frac{\pi_1 e^{-\nu t}B}{1 - \gamma A y_2 - \pi_1 e^{-\nu t}B y_1} .
\end{align*} Similarly, for each $i \in \{0,1\}$ we have \begin{align*}
    \frac{\partial}{\partial y_2} r_{i+1} &= \bigg[ (\mu e^{\lambda t} - \lambda e^{\mu t})\left( (\mu - \lambda y_2)e^{\mu t} + \lambda(y_2 - 1)e^{\lambda t}\right) \\
    &- \left(-\mu e^{\lambda t} + \mu e^{\mu t} + y_2(\mu e^{\lambda t} - \lambda e^{\mu t}) + (-1)^{i}\pi_i y_1 (\mu - \lambda)e^{-\nu t}\right)(-\lambda e^{\mu t} + \lambda e^{\lambda t})\bigg] \\
    &/ \left[(\mu - \lambda y_2)e^{\mu t} + \lambda(y_2 - 1)e^{\lambda t}\right]^2 \\
    &= \frac{(\mu e^{\lambda t} - \lambda e^{\mu t})(\mu e^{\mu t} - \lambda e^{\lambda t}) - \lambda (e^{\lambda t} - e^{\mu t})(\mu e^{\mu t} - \mu e^{\lambda t} + (-1)^{i}\pi_i y_1 (\mu - \lambda) e^{-\nu t})}{\left[(\mu - \lambda y_2)e^{\mu t} + \lambda(y_2 - 1)e^{\lambda t}\right]^2} \\
    &= \frac{-C\gamma\left(1 + \gamma A + (-1)^{i}\pi_i e^{-\nu t} B y_1\right)}{(\gamma A - \gamma y_2)^2}.
\end{align*} Consider $$F_{\pi_i,\nu t}(y_1) = -\gamma(1 + \gamma A + (-1)^{i}\pi_i e^{-\nu t}B y_1)$$ and \begin{align*}s_{i+1} &= \left[(\mu - \lambda y_2)e^{\mu t} + \lambda(y_2 - 1)e^{\lambda t}\right]\left[-\mu  e^{\lambda t}+\mu  e^{\mu t}+y_2 \left(\mu 
   e^{\lambda t}-\lambda  e^{\mu t}\right)+(-1)^{i}\pi_i y_1 (\mu -\lambda )e^{-\nu t}\right] \\
   &= (\gamma A - \gamma y_2)(1 - \gamma Ay_2 + (-1)^{i}\pi_i e^{-\nu t}B y_1)\end{align*} for each $i \in \{0,1\}$, with derivatives \begin{align*}
       \frac{\partial}{\partial y_2}s_{i+1} &= (-\lambda e^{\mu t} + \lambda e^{\lambda t})\left[-\mu  e^{\lambda t}+\mu  e^{\mu t}+y_2 \left(\mu 
   e^{\lambda t}-\lambda  e^{\mu t}\right)+(-1)^{i}\pi_i y_1 (\mu -\lambda )e^{-\nu t}\right] \\
   &+\left[(\mu - \lambda y_2)e^{\mu t} + \lambda(y_2 - 1)e^{\lambda t}\right] \left(\mu e^{\lambda t} - \lambda e^{\mu t}\right) \\
   &= -\gamma(1-A) - \gamma (-1)^{i}\pi_i e^{-\nu t}By_1 + 2\gamma^2 A y_2 \\
   \frac{\partial^2}{\partial y_2^2}s_{i+1} &= (-\lambda e^{\mu t} + \lambda e^{\lambda t})(\mu e^{\lambda t} - \lambda e^{\mu t}) + (-\lambda e^{\mu t} + \lambda e^{\lambda t})(\mu e^{\lambda t} - \lambda e^{\mu t}) \\
   &= 2 \gamma^2 A.
   \end{align*} The first derivative in $y_2$ is \begin{align*}\frac{\partial}{\partial y_2} \ln H &= a \frac{1}{r_1} \frac{\partial r_1}{\partial y_2} + b \frac{1}{r_2} \frac{\partial r_2}{\partial y_2} = a \frac{F_{\pi_0,\nu t}(y_1)}{s_1} + b \frac{F_{\pi_1,\nu t}(y_1)}{s_2} \\
   &= a \frac{-\gamma(1+\gamma A + \pi_0 e^{-\nu t}B y_1)}{(\gamma A - \gamma y_2)(1 - \gamma A y_2 + \pi_0 e^{-\nu t}B y_1)} + b \frac{-\gamma(1+\gamma A -\pi_1 e^{-\nu t}B y_1)}{(\gamma A - \gamma y_2)(1 - \gamma A y_2 -\pi_1 e^{-\nu t}B y_1)}.\end{align*} From the multivariable chain rule, the second- and third-order partial derivatives of $\ln G$ are \begin{align*}
       \partial^{2}_{z_1} \ln G &= \partial^2_{y_1} \ln H + 2\pi_1 \partial^2_{y_1 y_2}\ln H + \pi_1^2 \partial^2_{y_2}\ln H \\
       \partial^{3}_{z_1} \ln G &= \partial^3_{y_1} \ln H + 3\pi_1 \partial^3_{y_1^2 y_2} + 3\pi_1^2 \partial^3_{y_1 y_2^2} \ln H + \pi_1^3 \partial^3_{y_2} \ln H.
   \end{align*} For all higher-order derivatives involving $y_1$ in whole or in part, it will be easier to differentiate $\partial_{y_1} \ln H$.  However, the higher-order derivatives involving only $y_2$ must be computed using $\partial_{y_2} \ln H$, which is a little more complicated.

   Observe in $\partial_{y_1} \ln H$ that the variables $y_1$ and $y_2$ appear only in denominators.  We have \begin{align*}
    \frac{\partial}{\partial y_1}\left(\frac{1}{r_{i+1}}\frac{\partial r_{i+1}}{\partial y_1}\right) &= -\frac{\left((-1)^{i}\pi_i(\mu - \lambda)e^{-\nu t}\right)^2}{\left(-\mu  e^{\lambda t}+\mu  e^{\mu t}+y_2 \left(\mu 
   e^{\lambda t}-\lambda  e^{\mu t}\right)+(-1)^{i}\pi _i y_1 (\mu -\lambda )e^{-\nu t}\right)^2} \\
   &= - \left(\frac{\pi_i e^{-\nu t} B}{1 - \gamma A y_2 + (-1)^{i}\pi_i e^{-\nu t}B y_1}\right)^2.
\end{align*} Then the second order partial derivative with respect to $y_1$ is \begin{align*}
    \frac{\partial^2}{\partial y_1^2} \ln H &= -a \left(\frac{\pi_0(\mu - \lambda)e^{-\nu t}}{-\mu  e^{\lambda t}+\mu  e^{\mu t}+y_2 \left(\mu 
   e^{\lambda t}-\lambda  e^{\mu t}\right)+\pi _0 y_1 (\mu -\lambda )e^{-\nu t}}\right)^2 \\
   &- b\left(\frac{\pi_1(\mu - \lambda)e^{-\nu t}}{-\mu  e^{\lambda t}+\mu  e^{\mu t}+y_2 \left(\mu 
   e^{\lambda t}-\lambda  e^{\mu t}\right)-\pi _1 y_1 (\mu -\lambda )e^{-\nu t}}\right)^2 \\
   &= -a \left(\frac{\pi_0 e^{-\nu t} B}{1 - \gamma A y_2 + \pi_0 e^{-\nu t}B y_1}\right)^2 - b \left(\frac{\pi_1 e^{-\nu t} B}{1 - \gamma A y_2 -\pi_1 e^{-\nu t}B y_1}\right)^2.
\end{align*} The third-order partial derivative is\begin{align*}
    \frac{\partial^3}{\partial y_1^3} \ln H &= 2a \left(\frac{\pi_0(\mu - \lambda)e^{-\nu t}}{-\mu  e^{\lambda t}+\mu  e^{\mu t}+y_2 \left(\mu 
   e^{\lambda t}-\lambda  e^{\mu t}\right)+\pi _0 y_1 (\mu -\lambda )e^{-\nu t}}\right)^3 \\
   &-2b\left(\frac{\pi_1(\mu - \lambda)e^{-\nu t}}{-\mu  e^{\lambda t}+\mu  e^{\mu t}+y_2 \left(\mu 
   e^{\lambda t}-\lambda  e^{\mu t}\right)-\pi _1 y_1 (\mu -\lambda )e^{-\nu t}}\right)^3 \\
   &= 2a \left(\frac{\pi_0 e^{-\nu t} B}{1 - \gamma A y_2 + \pi_0 e^{-\nu t}B y_1}\right)^3 - 2b \left(\frac{\pi_1 e^{-\nu t} B}{1 - \gamma A y_2 + \pi_1 e^{-\nu t}B y_1}\right)^3.
\end{align*} For the second-order cross-partial, we have \begin{align*}\frac{\partial}{\partial y_2}\left(\frac{1}{r_{i+1}} \frac{\partial r_{i+1}}{\partial y_1}\right) &= -\frac{(-1)^{i} \pi_i e^{-\nu t} (\mu - \lambda)(\mu e^{\lambda t} - \lambda e^{\mu t})}{\left[-\mu e^{\lambda t} + \mu e^{\mu t} + y_2(\mu e^{\lambda t} - \lambda e^{\mu t}) + (-1)^{i}\pi_i y_1 (\mu - \lambda)e^{-\nu t}\right]^2} \\
&= \gamma A \frac{(-1)^{i} \pi_i e^{-\nu t} B}{(1 - \gamma A y_2 + (-1)^{i}\pi_i e^{-\nu t}B y_1)^2},\end{align*} so that \begin{align*}
    \frac{\partial^2}{\partial y_1 y_2} \ln H &= -a \frac{ \pi_0 (\mu - \lambda)e^{-\nu t}(\mu e^{\lambda t} - \lambda e^{\mu t})}{\left[-\mu e^{\lambda t} + \mu e^{\mu t} + y_2(\mu e^{\lambda t} - \lambda e^{\mu t}) + \pi_0 y_1 (\mu - \lambda)e^{-\nu t}\right]^2} \\
    &+b \frac{ \pi_1 (\mu - \lambda)e^{-\nu t}(\mu e^{\lambda t} - \lambda e^{\mu t})}{\left[-\mu e^{\lambda t} + \mu e^{\mu t} + y_2(\mu e^{\lambda t} - \lambda e^{\mu t}) -\pi_1 y_1 (\mu - \lambda)e^{-\nu t}\right]^2} \\
    &= a \gamma A \frac{\pi_0 e^{-\nu t}B}{(1 - \gamma A y_2 + \pi_0 e^{-\nu t}B y_1)^2} - b \gamma A \frac{\pi_1 e^{-\nu t}B}{(1 - \gamma A y_2 - \pi_1 e^{-\nu t}B y_1)^2} .
\end{align*} Similarly, for the third-order cross-partials involving $y_1$, we have\begin{align*}
    \frac{\partial^3}{\partial y_1^2 y_2} \ln H &= 2a \frac{\left(\pi_0 (\mu - \lambda) e^{-\nu t}\right)^2 (\mu e^{\lambda t} - \lambda e^{\mu t})}{\left[-\mu e^{\lambda t} + \mu e^{\mu t} + y_2(\mu e^{\lambda t} - \lambda e^{\mu t}) + \pi_0 y_1 (\mu - \lambda)e^{-\nu t}\right]^3} \\
    &+2b \frac{\left(\pi_1 (\mu - \lambda) e^{-\nu t}\right)^2 (\mu e^{\lambda t} - \lambda e^{\mu t})}{\left[-\mu e^{\lambda t} + \mu e^{\mu t} + y_2(\mu e^{\lambda t} - \lambda e^{\mu t}) - \pi_1 y_1 (\mu - \lambda)e^{-\nu t}\right]^3} \\
    &= -2a \gamma A \frac{\left(\pi_0 e^{-\nu t} B \right)^2}{\left(1 - \gamma A y_2 + \pi_0 e^{-\nu t}B y_1\right)^3} -2b \gamma A \frac{\left(\pi_1 e^{-\nu t} B \right)^2}{\left(1 - \gamma A y_2 - \pi_1 e^{-\nu t}B y_1\right)^3}
\end{align*} and \begin{align*}
    \frac{\partial^3}{\partial y_1 y_2^2} \ln H &= 2a \frac{ \pi_0 (\mu - \lambda) e^{-\nu t}(\mu e^{\lambda t} - \lambda e^{\mu t})^2}{\left[-\mu e^{\lambda t} + \mu e^{\mu t} + y_2(\mu e^{\lambda t} - \lambda e^{\mu t}) + \pi_0 y_1 (\mu - \lambda)e^{-\nu t}\right]^3} \\
    &-2b \frac{ \pi_1 (\mu - \lambda)e^{-\nu t}(\mu e^{\lambda t} - \lambda e^{\mu t})^2}{\left[-\mu e^{\lambda t} + \mu e^{\mu t} + y_2(\mu e^{\lambda t} - \lambda e^{\mu t}) - \pi_1 y_1 (\mu - \lambda)e^{-\nu t}\right]^3} \\
    &= 2a(\gamma A)^2 \frac{\pi_0 e^{-\nu t} B}{\left(1 - \gamma A y_2 + \pi_0 e^{-\nu t} B y_1\right)^3} -  2b(\gamma A)^2 \frac{\pi_1 e^{-\nu t} B}{\left(1 - \gamma A y_2 - \pi_1 e^{-\nu t} B y_1\right)^3}.
\end{align*}

To further differentiate $\partial y_2 \ln H$, we note that $F_{\pi_i,\nu t}(y_1)$ has no dependence on $y_2$.  It is required to compute the first and second derivatives in $y_2$ for $s_{i+1}, i \in \{0,1\}$.  We then have \begin{align*}
    \frac{\partial^2}{\partial y_2^2} \ln H &= -a \frac{F_{\pi_0,\nu t}(y_1)}{s_1^2} \frac{\partial s_1}{\partial y_2} -b \frac{F_{\pi_1,\nu t}(y_1)}{s_2^2} \frac{\partial s_2}{\partial y_2} \\
    &= -a \frac{-\gamma(1 + \gamma A + \pi_0 e^{-\nu t}B y_1)}{(\gamma A - \gamma y_2)^2(1 - \gamma A y_2 + \pi_0 e^{-\nu t}B y_1)^2} \left(-\gamma(1-A) - \pi_0 e^{-\nu t}B y_1 + 2\gamma^2 A y_2 \right) \\
    & -b \frac{-\gamma(1 + \gamma A -\pi_1 e^{-\nu t}B y_1)}{(\gamma A - \gamma y_2)^2(1 - \gamma A y_2 -\pi_1 e^{-\nu t}B y_1)^2} \left(-\gamma(1-A) +\pi_1 e^{-\nu t}B y_1 + 2\gamma^2 A y_2 \right) \\
    \frac{\partial^3}{\partial y_2^3} \ln H &= aF_{\pi_0,\nu t}(y_1) \left(-\frac{2}{s_1^3} \frac{\partial s_1}{\partial y_2} + \frac{1}{s_1^2}\frac{\partial^2 s_1}{\partial y_2^2}\right) + bF_{\pi_1,\nu t}(y_1) \left(-\frac{2}{s_2^3} \frac{\partial s_2}{\partial y_2} + \frac{1}{s_2^2}\frac{\partial^2 s_2}{\partial y_2^2}\right) \\
    &= \frac{2a\gamma(1+\gamma A + \pi_0 e^{-\nu t}B y_1)}{(\gamma A - \gamma y_2)^2(1 - \gamma A y_2 + \pi_0 e^{-\nu t}B y_1)^3}(-\gamma(1-A) - \gamma \pi_0 e^{-\nu t}By_1 + 2\gamma^2 A y_2) \\
    &- \frac{a \gamma(1 + \gamma A + \pi_0 e^{-\nu t}B y_1)}{(\gamma A - \gamma y_2)^2(1 - \gamma A y_2 + \pi_0 e^{-\nu t}B y_1)^2}2 \gamma^2 A \\
    &+ \frac{2b\gamma(1+\gamma A - \pi_1 e^{-\nu t}B y_1)}{(\gamma A - \gamma y_2)^2(1 - \gamma A y_2 - \pi_1 e^{-\nu t}B y_1)^3}(-\gamma(1-A) + \gamma \pi_1 e^{-\nu t}By_1 + 2\gamma^2 A y_2) \\
    &- \frac{b \gamma(1 + \gamma A - \pi_1 e^{-\nu t}B y_1)}{(\gamma A - \gamma y_2)^2(1 - \gamma A y_2 - \pi_1 e^{-\nu t}B y_1)^2}2 \gamma^2 A \\
    &= -2a\frac{(1+\gamma A + \pi_0 e^{-\nu t}B y_1)}{( A - y_2)^2(1 - \gamma A y_2 + \pi_0 e^{-\nu t}By_1)^2} \\
    &\times \left(1 - A + \pi_0 e^{-\nu t}B y_1 - 2 \gamma y_2 + \gamma^2 A + \gamma^3 A + \gamma^2 A \pi_0 e^{-\nu t}B y_1 \right) \\
    &-2b\frac{(1+\gamma A - \pi_1 e^{-\nu t}B y_1)}{(A - y_2)^2(1 - \gamma A y_2 - \pi_1 e^{-\nu t}By_1)^2} \\
    &\times \left(1 - A - \pi_1 e^{-\nu t}B y_1 - 2 \gamma y_2 + \gamma^2 A + \gamma^3 A - \gamma^2 A \pi_1 e^{-\nu t}B y_1 \right).
\end{align*} Evaluating at $y_1 = 0$ and $y_2 = 1$ gives first partial derivatives equal to \begin{align*}
    \frac{\partial}{\partial y_1} \ln H &\rightarrow a\pi_0 e^{-\nu t}e^{-\mu t} - b \pi_1 e^{-\nu t}e^{- \mu t} \\
    \frac{\partial}{\partial y_2} \ln H &\rightarrow a e^{(\lambda - \mu)t} + b e^{(\lambda - \mu)t} = M e^{(\lambda - \mu)t}.
\end{align*} The second derivatives evaluate to \begin{align*}
    \frac{\partial^2}{\partial y_1^2} \ln H &\rightarrow -a \pi_0^2 e^{-2\nu t}e^{-2\mu t} - b \pi_1^2 e^{-2\nu t}e^{-2\mu t} \\
    \frac{\partial^2}{\partial y_1 y_2} \ln H & \rightarrow -a\pi_0 e^{-\nu t} e^{-2\mu t}\frac{\mu e^{\lambda t} - \lambda e^{\mu t}}{\mu - \lambda} + b \pi_1 e^{-\nu t} e^{-2\mu t}\frac{\mu e^{\lambda t} - \lambda e^{\mu t}}{\mu - \lambda} \\
    \frac{\partial^2}{\partial y_2^2} \ln H &\rightarrow -a e^{(\lambda - 3\mu)t}(\mu - \lambda)^{-2}\frac{\partial s_1}{\partial y_2}(0,1) \\
    &-b e^{(\lambda - 3\mu)t}(\mu - \lambda)^{-2}\frac{\partial s_1}{\partial y_2}(0,1) \\
    &= -M e^{(\lambda - 3\mu)t}(\mu - \lambda)^{-2}\frac{\partial s_1}{\partial y_2}(0,1).
\end{align*} The third derivatives evaluate to \begin{align*}
    \frac{\partial^3}{\partial y_1^3}\ln H &\rightarrow 2a \pi_0^3 e^{-3\nu t}e^{-3\mu t} - 2b \pi_1^3 e^{-3\nu t}e^{-3\mu t} \\
    \frac{\partial^3}{\partial y_1^2 y_2} \ln H & \rightarrow 2a \pi_0^2 e^{-2\nu t} e^{-3\mu t}\frac{\mu e^{\lambda t} - \lambda e^{\mu t}}{\mu - \lambda} + 2b \pi_1^2 e^{-2\nu t} e^{-3\mu t} \frac{\mu e^{\lambda t} - \lambda e^{\mu t}}{\mu - \lambda} \\
    \frac{\partial^3}{\partial y_1 y_2^2} \ln H&\rightarrow 2a \pi_0 e^{-\nu t} e^{-3\mu t} \left(\frac{\mu e^{\lambda t} - \lambda e^{\mu t}}{\mu - \lambda}\right)^2 - 2b \pi_1 e^{-\nu t} e^{-3\mu t}\left(\frac{\mu e^{\lambda t} - \lambda e^{\mu t}}{\mu - \lambda}\right)^2 \\
    \frac{\partial^3}{\partial y_2^3} \ln H & \rightarrow M e^{(\lambda - 3\mu)t}(\mu - \lambda)^{-2}\left(-2 \frac{1}{(\mu - \lambda)^2 e^{2\mu t}} \frac{\partial s_1}{\partial y_2}(0,1) + \frac{\partial^2 s_1}{\partial y_2^2}(0,1)\right)
\end{align*} Note that pure partial derivatives in $y_2$ are determined by the numbers $M, \lambda t, \mu t$.

\subsection{Probability of the first digit}

Recall that $p_{\vec{x}}^{\sigma}(t)$ is the probability  of observing  $\sigma\in\{0,1\}$ as the \textit{first digit} of a sequence at time $t$, under the TKF91 edge process starting at $\vec{x}$.  For the case $\lambda_0=0$ (i.e. there is no immortal link), this probability is given in Lemma \ref{L:theta_v} below. 

To simplify expressions we let $\gamma=\lambda/\mu$ and $\beta_t=e^{(\lambda-\mu)t}$. The probability that a nucleotide dies and has no descendant at time $t$ is
\begin{equation}\label{Def:eta}
    \eta(t) =
\frac{1- \beta_t}{1-\gamma \beta_t}= \P(|\mathcal{I}_t|=0\;\big|\;|\mathcal{I}_0|=1).
\end{equation}
We also let
\begin{equation}\label{Def:phipsi}
\phi(t)=\frac{
-\psi(t)+ 1 - \eta(t)}{1-\eta(t)} \quad \text{and}\quad
\psi(t)= e^{-(\mu+\nu)t}.
\end{equation}

The following is similar to  \cite[Lemma 1]{fan2020statistically}, but here there is no immortal link.
\begin{lemma}[Probability of first digit]\label{L:theta_v}
For any digit $\sigma \in\{0,1\}$, any initial sequence $\vec{x}=(x_i)_{i=1}^{|\vec{x}|}$ and time $t\in(0,\infty)$,
 	\begin{equation}\label{E:Q^sigma}
		p^{\sigma}_{\vec{x}}(t)= \pi_{\sigma}\,\phi(t)\,\big[1-\big(\eta(t)\big)^{|\vec{x}|}\big]\,+\,\psi(t)\sum_{i=1}^{|\vec{x}|}1_{\{x_i=\sigma\}}\big(\eta(t)\big)^{i-1}. 
	\end{equation}
\end{lemma}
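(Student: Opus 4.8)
The plan is to reduce the statement to a single-digit computation and then reassemble, exploiting the key structural feature of the model in Definition~\ref{Def_BiINDEL}: since every inserted digit is placed immediately to the right of its parent, the descendants of the $i$-th original site form a contiguous block, the block of site $i$ lies entirely to the left of the block of site $j$ whenever $i<j$, and the $M$ blocks evolve as $M$ independent copies of the single-site indel process. Consequently the first digit of $\mathcal{I}_t$ is the leftmost surviving digit of the \emph{smallest-indexed} original site that still has a descendant. First I would condition on the identity $i$ of this leftmost surviving site: sites $1,\dots,i-1$ must be extinct, which by independence has probability $\eta(t)^{i-1}$ (with $\eta(t)$ the single-site extinction probability of the length birth-death process, \eqref{Def:eta}), and conditionally the first digit equals $\sigma$ with probability
$$R^{\sigma}(s,t):=\P\big(\text{site non-extinct and its leftmost surviving digit is }\sigma\,\big|\,\text{initial digit }s\big).$$
This yields $p^{\sigma}_{\vec{x}}(t)=\sum_{i=1}^{M}\eta(t)^{i-1}R^{\sigma}(x_i,t)$.

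The heart of the matter is the single-site identity
$$R^{\sigma}(s,t)=\psi(t)\,1_{\{s=\sigma\}}+\big[(1-\eta(t))-\psi(t)\big]\,\pi_{\sigma},$$
which I would prove by a \emph{skeleton/label independence} argument. Run the single-site process and record separately (i) the \emph{skeleton}, i.e.\ the timings and genealogy of death, birth, and substitution events, all governed by rates $\mu,\lambda,\nu$ that are independent of digit values; and (ii) the i.i.d.\ $\pi_{\mathrm{sim}}$-distributed \emph{labels} drawn at each insertion and each substitution. Since the rates do not see the labels, the skeleton is independent of the labels. Given the skeleton, the leftmost surviving individual $\ell$ and the chain of digit-setting events along its lineage are determined. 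Call $\ell$ \emph{pristine} if it is the original site and has never been substituted; this occurs exactly when the original digit suffers neither a death (rate $\mu$) nor a substitution (rate $\nu$) on $[0,t]$, an event of probability $e^{-(\mu+\nu)t}=\psi(t)$, on which $\ell$ is automatically leftmost and carries the digit $s$. On the complementary non-extinction event, $\ell$ is non-pristine and its current digit is the label of the most recent digit-setting event on its lineage; that label is a fresh $\pi$-draw independent of the skeleton, so conditionally on the skeleton it equals $\sigma$ with probability $\pi_{\sigma}$. Integrating over the skeleton and using $\P(\text{non-extinct})=1-\eta(t)$ gives the claimed identity.

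Finally I would reassemble: substituting the single-site identity into $p^{\sigma}_{\vec{x}}(t)=\sum_{i}\eta(t)^{i-1}R^{\sigma}(x_i,t)$ and using $\phi(t)\,(1-\eta(t))=(1-\eta(t))-\psi(t)$ from \eqref{Def:phipsi} together with $\sum_{i=1}^{M}\eta(t)^{i-1}=(1-\eta(t)^{M})/(1-\eta(t))$ produces exactly \eqref{E:Q^sigma}. As a consistency check, summing over $\sigma$ returns $1-\eta(t)^{M}$, the probability that the sequence is nonempty.

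I expect the main obstacle to be the rigorous justification of the single-site identity, specifically the claim that the leftmost surviving digit in the non-pristine case is exactly $\pi$-distributed. This needs care with the branching bookkeeping—when the current leftmost individual dies, the leftmost position passes to the leftmost survivor of its offspring sub-block, so the ``leftmost lineage'' is itself a nontrivial functional of the skeleton—and with verifying that this functional, and hence the pristine/non-pristine dichotomy, is measurable with respect to the label-independent skeleton. An alternative, more computational route that sidesteps these combinatorics is to write the backward renewal equation for $R^{\sigma}(s,t)$ by conditioning on the first event of the root individual (death forces extinction; substitution refreshes the digit to a $\pi$-draw; a birth splits into two independent blocks with the left block retaining priority), and then to verify directly that the proposed formula solves the resulting linear integral equation with initial condition $R^{\sigma}(s,0)=1_{\{s=\sigma\}}$; I would keep this as a fallback.
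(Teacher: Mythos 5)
Your proposal is correct, and its outer structure coincides with the paper's proof: both decompose the event $\{\mathcal{I}_t(1)=\sigma\}$ over the smallest-indexed original site whose descendant block is non-extinct (the paper's events $\mathcal{K}_i$), both extract the factor $\eta(t)^{i-1}$ from block independence, and both then reduce to a single-site computation. Where you genuinely diverge is inside that single-site step. The paper conditions on whether the original link survives: on $S_i$ the first digit is the (possibly substituted) original one, with law given by the substitution matrix $f_{x_i\sigma}=\pi_\sigma(1-e^{-\nu t})+e^{-\nu t}1_{\{x_i=\sigma\}}$, while on $\mathcal{K}_i\cap S_i^c$ it asserts the first digit is $\pi$-distributed, computing the weights $\eta^{i-1}e^{-\mu t}$ and $\eta^{i-1}(1-e^{-\mu t}-\eta)$ from the explicit Thorne--Kishino--Felsenstein formulas for $p_n(t)$ and $p_n^{(1)}(t)$. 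You instead condition on the \emph{pristine} event (original digit survives unsubstituted, probability $\psi(t)=e^{-(\mu+\nu)t}$) versus its complement within non-extinction, merging the two sources of $\pi$-randomization --- substitution of a surviving original and labels of inserted digits --- into one case handled by your skeleton/label independence argument. This buys two things. First, it bypasses both the matrix $f_{ij}$ and the descendant-count distributions (of which the paper really only needs $p_0^{(1)}=\eta$ and the aggregate $\sum_{k\ge 1}p_k^{(1)}=1-e^{-\mu t}-\eta$). Second, and more substantively, it rigorously justifies exactly the step the paper asserts without argument, namely $\P^{\vec{x}}(\mathcal{I}_t(1)=\sigma\mid\mathcal{K}_i\cap S_i^c)=\pi_\sigma$: your observation that the genealogical/positional skeleton is label-independent, and that the relevant digit is the fresh $\pi$-draw attached to the most recent digit-setting event on the (skeleton-measurable) leftmost lineage, is precisely what makes that claim legitimate, and the same reasoning covers the substituted-survivor case that the paper handles separately through $f_{ij}$. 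The two single-site answers agree, since $f_{s\sigma}e^{-\mu t}+\pi_\sigma(1-e^{-\mu t}-\eta)=\psi(t)\,1_{\{s=\sigma\}}+\bigl(1-\eta(t)-\psi(t)\bigr)\pi_\sigma$, and your reassembly via the geometric sum and the identity $\phi(t)\bigl(1-\eta(t)\bigr)=\bigl(1-\eta(t)\bigr)-\psi(t)$ reproduces \eqref{E:Q^sigma}; your consistency check $\sum_\sigma p^\sigma_{\vec{x}}(t)=1-\eta(t)^{M}$ is a useful sanity test absent from the paper (which, incidentally, contains the harmless slip $\P^{\vec{x}}(\mathcal{K}_i)=\eta^{i-1}$, missing the factor $1-\eta$ that its own two-case computation correctly restores).
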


\begin{proof}
Let $\{\mathcal{I}_t(1) = \sigma\}$ be the event that the \textit{first digit} of a sequence at time $t$ is $\sigma$.  We decompose this event according to the ancestor of this first digit.  To simplify the	notation we set $\eta:=\eta(t)$.

For   $1\leq i\leq |\vec{x}|$, we let	$\mathcal{K}_i$ be the event that the first nucleotide is the descendant of $x_i$. Note that $\mathcal{K}_i$ is the event that the normal link $x_i$ either survives or dies but has at least $1$ descendant, while the offspring of all previous links die. Then 
$
 \P^{\vec{x}}(\mathcal{K}_i)=\eta^{i-1}.
$
and  by the law of total probability,
	\begin{align}
	p^{\sigma}_{\vec{x}}(t)=&\,\P^{\vec{x}}\left(\mathcal{I}_t(1) 
	= \sigma\right) =\, \sum_{i=1}^{|\vec{x}|} \P^{\vec{x}}\big(\mathcal{I}_t(1) = \sigma,\;\mathcal{K}_i\big). \label{theta_v3}
	\end{align}

We use some notation of~\cite{Thorne1991}, where a nucleotide is also referred to as a ``normal link'' and a generic such nucleotide is denoted $``\star"$. We define
	\begin{align*}
	p_k:=p_k(t)&:=\P_{\star}( \text{normal link }``\star" \text{ survives and has }k\text{ descendants including itself}),\\
	 p^{(1)}_k :=p^{(1)}_k(t)&:=\P_{\star}( \text{normal link }``\star" \text{ dies and has }k\text{ descendants}).
	\end{align*}
	These probabilities are explicitly found in  \cite[Eq.~(8)-(10)]{Thorne1991} by solving the differential equations governing the underlying birth and death processes:
	\begin{align*}
	&\Bigg\{\begin{array}{ll}
	p_n(t) &= e^{-\mu t}(1-\gamma\eta(t))[\gamma\eta(t)]^{n-1}\\
	p^{(1)}_n(t) &= (1-e^{-\mu t}-\eta(t))(1-\gamma\eta(t))[\gamma\eta(t)]^{n-1}
	\end{array}\qquad \text{for }n\ge 1, 
	\\
	&\Bigg\{\begin{array}{ll}
	p_0(t) &=0\\
	p^{(1)}_0(t) &= \eta(t).
	\end{array}
	\end{align*}

	We now compute each term on the RHS of \eqref{theta_v3}. For $1\leq i\leq |\vec{x}|$, we let $S_i$ be the event that $x_i$ survives, which has probability $e^{-\mu t}$. Then
    \begin{equation*}
        \P^{\vec{x}}(\mathcal{K}_i \cap S_i)=(p^{(1)}_0)^{i-1}e^{-\mu t}=\eta^{i-1}\,e^{-\mu t},
    \end{equation*}
    since  
    $(p^{(1)}_0)^{i-1}$ is the probability that all $\{v_j\}_{j=1}^{i-1}$ were deleted and left no descendant
    and $S_i$ is independent of the previous events.
	Moreover, we have $\P^{\vec{x}}(\mathcal{I}_t(1) = \sigma\,|\,\mathcal{K}_i \cap S_i)= f_{x_i \sigma}$, where
	\begin{equation*}
	f_{ij}:=f_{ij}(t) 
	=\pi_{j}(1-e^{-\nu t})+e^{-\nu t}1_{\{i=j\}}
	\end{equation*}
	is the transition probability that a nucleotide is of type $j$ after time $t$, given that it is of type $i$ initially. 
    Letting $S_i^c$ be the complement of $S_i$, then
     \begin{equation*}
      \P^{\vec{x}}(\mathcal{K}_i \cap S_i^c)=\eta^{i-1}\, \left(\sum_{k\geq 1}p^{(1)}_k \right) =\eta^{i-1}\, (1-e^{-\mu t}-\eta),
    \end{equation*}
    and $\P^{\vec{x}}(\mathcal{I}_t(1) = \sigma\,|\,\mathcal{K}_i \cap S^c_i)=\pi_{\sigma}$. Therefore,
    \begin{align}
    &\P^{\vec{x}}(\mathcal{I}_t(1) = \sigma\,,\,\mathcal{K}_i) \notag\\
    &=\P^{\vec{x}}(\mathcal{I}_t(1) = \sigma\,|\,\mathcal{K}_i \cap S_i)\,\P^{\vec{x}}(\mathcal{K}_i \cap S_i)+\P^{\vec{x}}(\mathcal{I}_t(1) = \sigma\,|\,\mathcal{K}_i \cap S^c_i)\,\P^{\vec{x}}(\mathcal{K}_i \cap S_i^c)   \notag\\
    &= \eta^{i-1}\,\left( f_{x_i \sigma}\,e^{-\mu t} + \pi_{\sigma} (1-e^{-\mu t}-\eta) \right). \label{theta_v3_2}
    \end{align}

	Putting \eqref{theta_v3_2} into \eqref{theta_v3}, 
	\begin{align*}
    \P^{\vec{x}}\left(\mathcal{I}_t(1) = \sigma\right)
    &= 
    \sum_{i=1}^{|\vec{x}|}
    \left[
    \eta^{i-1}\,\left( f_{x_i \sigma}\,e^{-\mu t} + \pi_{\sigma} (1-e^{-\mu t}-\eta) \right)
    \right]\\
	&=  e^{-\mu t} \left[\sum_{i=1}^{|\vec{x}|} \eta^{i-1}f_{x_i \sigma}\right]+ (1-e^{-\mu t} -\eta)\pi_{\sigma} \frac{1-\eta^{|\vec{x}|}}{1-\eta},
	\end{align*}
	which is exactly \eqref{E:Q^sigma} upon further re-writing
	\begin{equation*}
	\sum_{i=1}^{|\vec{x}|}\eta^{i-1}f_{x_i \sigma}
	=\pi_{\sigma}(1-e^{-\nu t})\frac{1-\eta^{|\vec{x}|}}{1-\eta}+e^{-\nu t}\sum_{i=1}^{|\vec{x}|}1_{\{x_i=\sigma\}}\eta^{i-1}.
	\end{equation*}
	The proof of Lemma \ref{L:theta_v} is complete.
\end{proof}

\subsection{About the special case $\lambda=\mu$}

For completeness, here we consider the special case $\lambda = \mu$ (i.e. $\gamma = 1$). 

\begin{prop}\label{L:InvertLengthSpecial}
Let $(\mathcal{I}_t)_{t\geq 0}$ denote the indel process  in Definition \ref{Def_BiINDEL} and suppose it is known that $\lambda = \mu$. Given the distribution of the length $|\mathcal{I}_t|$ at time $t$ starting at an unknown sequence $\vec{x}$, we can identify the ancestral sequence length $M=|\vec{x}|$ and the parameter $\mu t$ as follows: 
$M=\mathbb{E}_{M}[|\mathcal{I}_t|]$, and $$\mu t=\ln\left(\frac{1}{M^2 + 4M+1} \left(\mathbb{E}_{M}[|\mathcal{I}_t|^2] + 4M + 1 \right ) \right).$$
\end{prop}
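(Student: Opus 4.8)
The plan is to use the observation that, once $\lambda=\mu$ is \emph{known}, the triple of unknowns in Proposition~\ref{L:InvertLength} collapses to the pair $(M,\mu t)$, so that the first two moments of $|\mathcal{I}_t|$ should suffice in place of the three moments needed in the generic case. The starting point is that when $\lambda=\mu$ the length process $(|\mathcal{I}_t|)_{t\ge0}$ is a \emph{critical} linear birth--death process, with equal per-capita birth and death rates $\mu$ and no immigration (since $\lambda_0=0$). One cannot read the relevant moments off the generating function \eqref{GenFcn_length} by naively setting $\gamma=1$ and $\beta_t=1$, as this produces a $0/0$ indeterminacy; instead the critical case must be treated either as a $\gamma\to1$ limit or directly.

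First I would identify $M$. The mean of a critical birth--death process is conserved, so by \eqref{E:expectedLength} with $\mu=\lambda$ we have $\E_M[|\mathcal{I}_t|]=Me^{-(\mu-\lambda)t}=M$, giving $M=\E_M[|\mathcal{I}_t|]$ immediately and with no dependence on the unknown $\mu t$.

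Next I would compute the second moment and invert it. The cleanest route reuses the moment ODE from the proof of Theorem~\ref{thm:phylogeny}: writing $m_2(t):=\E_M[|\mathcal{I}_t|^2]$, one has $\tfrac{d}{dt}m_2=2(\lambda-\mu)m_2+(\lambda+\mu)\E_M[|\mathcal{I}_t|]$ with $m_2(0)=M^2$. Specializing to $\lambda=\mu$ and substituting the conserved mean $\E_M[|\mathcal{I}_t|]=M$ collapses this to a first-order linear ODE with constant forcing that integrates in closed form; equivalently, one may decompose $|\mathcal{I}_t|$ as a sum of $M$ i.i.d.\ single-ancestor critical populations and read the variance off the explicit single-particle generating function $g(z,t)=\frac{\mu t+(1-\mu t)z}{1+\mu t(1-z)}$. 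Either computation expresses $m_2$ as an explicit, strictly monotone function of $\mu t$ for the now-known value of $M$, and solving this relation for $\mu t$ gives $\mu t$ explicitly in terms of $\E_M[|\mathcal{I}_t|^2]$ and $M$.

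I expect the main obstacle to be precisely the degeneracy of \eqref{GenFcn_length} at $\gamma=1$: the generic formulas of Proposition~\ref{L:InvertLength} are singular there, so the second moment must be recomputed in the critical regime with care (via the $\gamma\to1$ limit or an independent branching argument). Once the correct second-moment expression is secured, the remaining inversion to isolate $\mu t$ is routine algebra.
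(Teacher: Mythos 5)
Your outline coincides with the paper's proof at the level of strategy---both read off $M$ from the conserved mean $\E_M[|\mathcal{I}_t|]=M$ and then extract $\mu t$ from the second moment via a moment ODE---but at the decisive step you and the paper use \emph{different} ODEs, and yours is the correct one for the process of Definition~\ref{Def_BiINDEL}. The paper's appendix proof invokes, citing \cite{THATTE200658} ``with no immortal link contribution,'' the equation $\tfrac{d}{dt}\E_{M}[|\mathcal{I}_t|^2] = (3\lambda - 2\mu)\E_{M}[|\mathcal{I}_t|^2] + (3\lambda + \mu)\E_{M}[|\mathcal{I}_t|] + \lambda$, specializes it at $\lambda=\mu$ to $m_2' = \mu m_2 + 4\mu M + \mu$, and solves to obtain $m_2 = (M^2+4M+1)e^{\mu t}-4M-1$, whence the logarithmic inversion displayed in the statement. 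However, for the immortal-link-free length process---a linear birth--death chain with per-state rates $\lambda n$ and $\mu n$---the generator computation gives $\tfrac{d}{dt}\E[N_t^2] = 2(\lambda-\mu)\E[N_t^2]+(\lambda+\mu)\E[N_t]$, which is exactly the equation the paper itself uses in the proof of Theorem~\ref{thm:phylogeny} and exactly the one you propose to reuse. The inhomogeneous terms $(3\lambda+\mu)\E[N_t]+\lambda$ belong to the immigration (immortal-link) variant with birth rate $\lambda(n+1)$, and even in that variant the homogeneous coefficient is $2(\lambda-\mu)$, not $3\lambda-2\mu$. At $\lambda=\mu$ this discrepancy is fatal: $2(\lambda-\mu)=0$ while $3\lambda-2\mu=\mu$, and the latter is what manufactures the spurious exponential.

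Carrying your computation through: at $\lambda=\mu$ the mean is conserved, so $m_2'(t)=2\mu M$ and $m_2(t)=M^2+2\mu M t$; equivalently, each of the $M$ independent ancestral lines is a critical birth--death process with variance $2\mu t$, as your single-particle PGF $g(z,t)=\frac{\mu t+(1-\mu t)z}{1+\mu t(1-z)}$ confirms (one checks $\partial_z g(1,t)=1$ and $\partial_z^2 g(1,t)=2\mu t$). This is strictly increasing in $\mu t$, so the identifiability claim holds, but with the corrected inversion
\begin{equation*}
\mu t \;=\; \frac{\E_{M}[|\mathcal{I}_t|^2]-M^2}{2M},
\end{equation*}
not the displayed $\mu t = \ln\left(\frac{\E_{M}[|\mathcal{I}_t|^2]+4M+1}{M^2+4M+1}\right)$. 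A direct sanity check rules out the stated formula: for $M=1$ and small $t$, the chain jumps to $0$ or $2$ each at rate $\mu$, so $\E[|\mathcal{I}_t|^2]=1+2\mu t+O(t^2)$, whereas the paper's formula would force $\E[|\mathcal{I}_t|^2]=6e^{\mu t}-5=1+6\mu t+O(t^2)$. In short, your proposal is sound, it correctly flags the $\gamma\to 1$ degeneracy of \eqref{GenFcn_length} as the point requiring care, and it proves the proposition's identifiability assertion---but with a different (correct) closed form; the proposition's displayed formula, and the moment ODE in the paper's own proof, appear to be in error, seemingly from importing the immortal-link second-moment equation with a miscopied coefficient, so you should not expect your argument to recover them.
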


\begin{proof}
    In the case where $\lambda = \mu$, we need only the description of $\mathbb{E}_{M}[|\mathcal{I}_t|^2]$ given in \cite{THATTE200658}.  With no immortal link contribution, it has the property that $$\frac{d}{dt}\mathbb{E}_{M}[|\mathcal{I}_t|^2] = (3\lambda - 2\mu)\mathbb{E}_{M}[|\mathcal{I}_t|^2] + (3\lambda + \mu)\mathbb{E}_{M}[|\mathcal{I}_t|] + \lambda$$ with initial condition $\mathbb{E}_{M}[|\mathcal{I}_0|^2] = M^2$. When $\lambda = \mu$, this simplifies to $$\frac{d}{dt}\mathbb{E}_{M}[|\mathcal{I}_t|^2] = \mu \mathbb{E}_{M}[|\mathcal{I}_t|^2] + 4\mu M + \mu.$$ This is a separable differential equation with solution $$\mathbb{E}_{M}[|\mathcal{I}_t|^2] = (M^2 + 4M+1)e^{\mu t} - 4M - 1.$$ As $M$ is identifiable from $\mathbb{E}_{M}[|\mathcal{I}_t|]$, we find that $\mu t$ is identifiable from the second moment.  
\end{proof}

\bibliographystyle{plain}
\bibliography{Invert}

\end{document}